\DeclareFixedFont{\auacc}{OT1}{phv}{b}{it}{18}   % Patch by Gerry 3/21/07
\DeclareFixedFont{\newauacc}{OT1}{ptm}{b}{rm}{12}   % Patch by Gerry 3/21/07
\newtheorem{theorem}{Theorem}
\newtheorem{corollary}{Corollary}
\newtheorem{lemma}{Lemma}
\newtheorem{definition}{Definition}
\newtheorem{conjecture}{Conjecture}
\newtheorem{proposition}[theorem]{Proposition}
\newtheorem{claim}{Claim}
\begin{document}

\title{Spatial Birth-Death Wireless Networks
%Is that Recommendation Sponsored or Real? \\
%\titlenote{(Produces the permission block, and copyright information)}
}
%\subtitle{[Extended Abstract]
%\titlenote{A full version of this paper is available as \textit{Author's Guide to Preparing ACM SIG Proceedings Using \LaTeX$2_\epsilon$\ and BibTeX} at \texttt{www.acm.org/eaddress.htm}}}

%
%\numberofauthors{3} 
%
% \author{
% \alignauthor
% Subhashini Krishnasamy \\
%        \affaddr{Department of ECE}\\
%        \affaddr{UT Austin}\\
%        \email{subhashini.kb@utexas.edu}
% \alignauthor
% Siddhartha Banerjee \\
%        \affaddr{Department of MS\&E}\\
%        \affaddr{Stanford University}\\
%        \email{sidb@stanford.edu}
% %\and  % use '\and' if you need 'another row' of author names
% \alignauthor Sanjay Shakkottai \\
%        \affaddr{Department of ECE}\\
%        \affaddr{UT Austin}\\
%        \email{shakkott@austin.utexas.edu}
% \titlenote{This work was supported by NSF grants IIS-1017525, CNS-1320175 and ARO grant W911NF-11-1-0265.}
% }

\author{Abishek Sankararaman and Fran\c{c}ois Baccelli \thanks{ \noindent A. Sankararaman is with the Dept of Electrical and Computer Engineering (ECE), UT Austin (Email: abishek@utexas.edu).} \thanks{  F. Baccelli is with the Dept of Mathematics and ECE, UT Austin (Email: francois.baccelli@austin.utexas.edu).} \thanks{An abstract of this paper was presented at Allerton Conference on Communication, Control and Computing, 2016.} }
\date{}
\maketitle

%\thispagestyle{empty}

%
%
%\begin{abstract}
%
%\end{abstract}

% A category with the (minimum) three required fields
% \category{H.4}{Information engines Applications}{Miscellaneous} A
% category including the fourth, optional field follows...
% \category{D.2.8}{Software Engineering}{Metrics}[complexity measures,
% performance measures] \terms{Network epidemics, SI/SIS dynamics,
% non-linear dynamics, vaccination, bounded infection}

\maketitle

\begin{abstract}
%We propose and study a novel `space-time' model for wireless ad-hoc networks which takes into account the stochastic nature in both `space' and `time'. Our model consists of describing a suitable interacting particle birth-death dynamics to model spectrum-sharing. Roughly speaking, particles (which are representative of wireless links) arrive in space according to a Poisson Point Process, and stay for a duration governed by the local configuration of points present  and then exit the network. We analyze this particle dynamics to derive an explicit condition for ergodicity (i.e. stability). We also show that when the dynamics is ergodic, the steady-state point process of links (or particles) exhibit a form statistical clustering which is new and interesting. We then develop heuristics based on this clustering to derive approximate formulas for performance, i.e. mean-delay of a typical point. 
%\\

We propose and study a novel continuous space-time model for wireless networks which takes into account the stochastic interactions in both space through interference and in time due to randomness in traffic. Our model consists of an interacting particle birth-death dynamics incorporating information-theoretic spectrum-sharing. Roughly speaking, particles (or more generally wireless links) arrive  according to a Poisson point process on space-time, and stay for a duration governed by the local configuration of points present  and then exit the network after completion of a file transfer. We analyze this particle dynamics to derive an explicit condition for time ergodicity (i.e. stability) which is tight. We also prove that when the dynamics is ergodic, the steady-state point process of links (or particles) exhibits a form statistical \emph{clustering}. Based on the clustering, we propose a conjecture which we leverage to derive approximations, bounds and asymptotics on performance characteristics such as delay and mean number of links per unit-space in the stationary regime. The mathematical analysis is combined with discrete event simulation to study the performance of this type of networks.

\end{abstract}

%\category{H.4}{Stochastic Modeling, Dynamics}{Wireless Networks}

%\terms{Spatial Dynamics, Point-Process, Stability, Clustering, Performance, Processor-Sharing, Shannon Theory.}

\section{Introduction}

We consider the problem of studying the spatial dynamics of Device-to-Device (D2D) or ad-hoc wireless networks. Such wireless  networks have received a tremendous amount of attention, due on the one hand to their increasing ubiquity in modern technology and on the other hand to the mathematical challenges in their modeling and performance assessment. Wireless is a broadcast medium and hence the nodes sharing a common spectrum in space interact through the interference they cause to one another. Understanding the limitations due to interference and theoretically optimal protocols in such a static spatial setting has long been considered in network information theory under the interference channel \cite{shannon1961}. The full characterization of the interference channel is however a long standing open-problem in network information theory.
\\

 In recent years,  Stochastic Geometry (\cite{haenggi_book}, \cite{Baccelli_sg_book}) has emerged as a way of assessing performance of wireless links in  large-scale networks interacting through interference in space. These tools have been very popular to model and analyze wireless system performance for a variety of network architectures including D2D networks, mobile-ad hoc networks \cite{baccelli_aloha} and cellular networks \cite{hetnet}. However, the main drawback in these models is that they do not have a notion of temporal interaction and do not allow one to represent random traffic (they usually rely on a ``full-buffer'' assumption, i.e., every link always has a packet to transmit). 
 \\

 %Nonetheless, this approach is widely popular as it yields actionable insights into the performance of large spatially interacting networks.

 This additional dimension of interaction among wireless links sharing a common spectrum adds to the complexity of their performance analysis but nonetheless is very crucial to understand network performance. Most prior work aiming at studying the temporal interaction of links model spatial interactions through binary on-off behavior encoded by interference or conflict graphs. The temporal interactions are then modeled using queuing theoretic ideas of flow based models (for ex: \cite{Borst_Flows}, \cite{Flow_Srikant}, \cite{Flow_Shah}). Such flow models have a long history in applied mathematics and engineering. They were initially proposed to study dynamic resource allocation in wired networks (\cite{Kelly_Flow}, \cite{BW_Sharing}), and were subsequently used to model and study wireless networks. Flow based queuing models have inspired many seminal results in applied probability and networks in the past. The main drawback in employing such models in a wireless scenario however is that the spatial and information-theoretic interactions are overly simplified and not captured precisely.
\\

Motivated by this, we propose a new spatial flow model, which uses the continuum space to model  link interaction through interference as prescribed by the information-theo\-re\-tic setting, and also takes into account the interaction of links across time due to traffic variations. Roughly speaking, our model consists of an interacting particle system in space, where links which is a transmitter-receiver pair arrive in space according to a Poisson Point Process in space-time. The transmitter of each link has a file which it wants to transmit to its corresponding receiver. A link exits the network upon completion of this file transfer. The instantaneous rate at which a transmitter can transmit a file to its receiver is given by the instantaneous Shannon rate, which in turn depends on the geometry of the other transmitters in the network transmitting at that instant to their respective receivers. We study this space-time dynamics to identify a phase-transition in the arrival rate such that each link can be guaranteed to exit in finite time almost surely. The model and the question of phase-transition is formalized in Section \ref{sec:system_model}. To the best of our knowledge, the analysis of such continuum space-time models for wireless networks has not been considered so far.
\\

 The mathematical framework we follow for spatial birth-death processes  has been studied in different contexts in the probability literature starting from the work of Preston \cite{Preston}. In recent years, \cite{Garcia} and \cite{Penrose} have also studied in great detail, the problem of general spatial birth and death process which is the basis of our modeling. From a methodological point of view, the work of \cite{baccelli_p2p} is the closest in spirit to our work as it also studies a space-time interacting particle process (of a wireline peer-to-peer network). There are several fundamental differences between the model of \cite{baccelli_p2p}, which is intrinsically stable, and exhibits repulsion, and our model, which is potentially unstable and which exhibits attraction (clustering). Another difference from \cite{baccelli_p2p} is that the death-rate (defined later) is a linear-function of the state whereas  our model  is  non-linear because  of the information-theoretic formulation, thereby making the analysis more challenging. Nevertheless, we use  some of the ideas developed in that paper.
\\

From an information-theoretic viewpoint, one can interpret our model and the phase-transition result as a form of dynamic network capacity. Our network model can be interpreted as consisting of arrivals of a single antenna Gaussian additive noise point-to-point channels in space. At each instant of time, the network is a random realization of an interference network operating under the scheme of treating interference as noise. The point-to-point channels exit the network upon completion of a file transfer i.e. with the departures happening in a space-time correlated way determined by our dynamics which in turn is derived from the capacity region of an interference channel under treating interference as noise. The phase-transition results in Theorems \ref{thm:necessary_condn} and \ref{thm:sufficient_condn} give the maximum rate of arrival that can be supported in the network under the scheme of treating interference as noise. Our model and the framework could potentially be generalized to  consider the dynamic capacity of other channels like the Multiple Access Channels or Broadcast channels instead of the Gaussian point-to-point channel considered in this paper. In these models, each arrival could consist of a single transmitter and multiple receivers or multiple transmitters and a single receiver which  form a basic unit of the network. This network can then be modeled to evolve in time through dynamics similar in spirit to Equation (\ref{eqn:dynamics_defn}). It is beyond the scope of the present paper however to pose the problem precisely in the case of multiple access or broadcast channels to derive a phase-transition for dynamic capacity. However, in Section \ref{sec:mimo_extnsions}, we present the extension of our model to the case of point-to-point channel where the transmitters and receivers have multiple antennas, i.e. the point-to-point Multiple Input Multiple Output (MIMO) channel. We then analyze and study a special case of the MIMO dynamics in Section \ref{sec:mimo_special} which can be derived as a corollary of the single antenna analysis. 
\\

%We were able to analyze a special case of the MIMO channel based dynamics which we 
%
%
%We are however far from solving the general problem of point-to-point channel dynamics with multiple antennas except for in a very special case which falls out as a corollary of our single-antenna pint-to-point channel analysis.
%\\

Our model also presents a new form of  single server queuing network. Based on our model description in Section \ref{sec:system_model}, one can  come up with two natural queuing model bounds to study the performance of our model. One can construct a `worse' system by assuming that there is no distance dependent attenuation and all transmitters contribute the same interference to any receiver. This system will predict larger delays than our original system since the interference is higher. Moreover, since there is no geometry, this upper bound system is equivalent to an $M/M/1$ generalized processor sharing system. On the other hand, to come up with lower bounds for delay, one can totally neglect interference and assume that the different links do not interact at all. This assumption
will render our model equivalent to an $M/M/\infty$ system. One of our main messages in the paper is that  simplifying our model to any of the above two dynamics which neglects spatial structure to provide bounds on delay leads to estimates for delay  which are very poor (as demonstrated in Section $V.E$). Thus, we really need to consider the spatial structure as done in Section $IV$ to come up with estimates for delay and performance. The evolution of our model thus presents a novel behavior of stochastic dynamics that cannot be captured by a queuing model that neglects spatial interactions.
\\

From an engineering viewpoint, this work is motivated by emerging interest in applications like Device-to-Device (D2D) networks and Internet of Things (IoT). These two applications can be viewed as an instance of our abstract mathematical model which is more general. D2D is being considered as a viable networking architecture in future cellular standards to improve system capacity by offloading some traffic from base-station to other mobile devices that have the same content. Some of the more important use cases for such offloading  are in a crowded setting (like a stadium or a concert) where there is a huge density of mobile devices. Another important application of D2D is in enabling cellular operators to provide ``proximity based services''. In such settings, a mobile may access content (which we model as files) from nearby mobile users possessing the content (which may be likely owing to geographical and temporal proximity) rather than from a base-station. Such networking architectures are being envisioned to both reduce the load on the base-stations and also to develop new markets for mobile services.  Thus, a snapshot of a D2D network will  resemble our model with some mobile devices connecting to and downloading files from other mobile devices that are nearby. IoT is another technology gaining  momentum due to the vast market opportunities to develop user applications that leverage the IoT network (for instance in tracking sensors for health, security etc).  This network  also resembles a wireless ad-hoc network with different things communicating occasionally data to each other or to a central access point using the shared wireless medium.

\subsection*{Contributions of the Paper}

The main contributions of the present paper are 
\begin{enumerate}
\item \emph{\underline{Stochastic Space-Time Dynamic Model}}:
\newline

In Section \ref{sec:system_model}, we define precisely the mathematical model of the network along with the assumptions we are imposing for the mathematical analysis. This model is one of the contributions of the present paper as it captures precisely the stochastic interactions and dynamics \emph{both} in space and time. In Section \ref{sec:main_results} we state the main mathematical results of our paper. In subsection \ref{sec:results} we give  an \emph{exact} characterization of the time-ergodicity criterion i.e. give an explicit and simple formula to determine the phase-transition for dynamic stability. This notion of stability will be made precise in the sequel in Section \ref{sec:assumptions} . In section \ref{sec:clustering}, we  prove the intuitive result that, when it exists, the steady-state point process in our model exhibits a form of statistical clustering (made precise later), which is detrimental to performance as it creates higher interference powers at typical receivers than in a network with complete independence. We provide the proof of the ergodicity criterion in the Appendix ( Section \ref{sec:proofs}) which requires the use of point-process theory and in particular Palm calculus and stochastic coupling arguments. Our proof techniques for handling dynamic point-processes are  to the best of our knowledge new and potentially useful for analyzing other similar dynamic models of wireless networks. More generally, from an information theoretic perspective, we exhibit a form of dynamic network capacity when treating interference as noise. Our framework could potentially be extended to consider the dynamic network capacity for potentially other channels as well. From a queuing perspective, we exhibit through our model, a new form spatial queuing which cannot be reduced to any traditional non-spatial queuing network. These viewpoints of our model allows us to pose many more different problems than can be answered in this paper.
\\

\item \emph{\underline{ Formulas for Delay and System Design Insights}}:
\newline

We provide an explicit closed form formula to compute the phase-transition for dynamic stability in Section \ref{sec:main_results}. The phase-transition result however only provides whether the  delay experienced by a typical link  is finite or not. In Section \ref{sec:perf_bound}, we propose two formulas to approximately compute the mean-number of links per-unit space and the average delay of a typical link. The simplest heuristic is a first order Poisson approximation which relies on a single intensity parameter and hence cannot take clustering into account. We also propose another heuristic, which is a second order cavity type approximation of the second moment measure \cite{daley} of the steady-state point process. We find through simulations, that this heuristic works very well in all regimes. This heuristic is potentially useful to derive explicit approximate formulas for mean delay in other spatio-temporal models. From a practical networking perspective, closed form expressions for delay based on system parameters is very crucial. The formulas for delay provide insight into how to dimension D2D networks in terms of maximum allowable space-time traffic intensity or minimum spectral bandwidth needed to provide mean-delay based guarantees to the links in the ad-hoc network.

%
%There is numerical evidence that this first order approximation is also a bound which overestimates the system performance. This bound is tight for light and heavy traffic, but loose in the intermediate cases. Hence we propose another heuristic, which is a second order cavity type approximation of the second moment measure \cite{daley} of the steady-state point process. This heuristic we find through simulations to work very well in all regimes. In Section \ref{sec:simulations}, we propose and evaluate through simulations certain other heuristics for performance to get further insight into the dynamics in our model. Finally, we conclude by remarking how one can propose and analyze similar models in the case of MIMO channels. 
\end{enumerate}

\section{System Model - Birth-Death Model for Wireless Flows}
\label{sec:system_model}

In this section, we  describe the mathematical model of the dynamic wireless network which we later analyze. Roughly speaking, our model of a network is one wherein links which are transmitter-receiver pairs arrive into the network which is Euclidean space. Each transmitter of a link has a file it wants to send to its receiver. The speed or rate at which a transmitter can send its file to the receiver is a function of the positions of other transmitters transmitting files to their respective receivers. Upon completion of file transfer, a link departs from the network. We make the above dynamic description of the network more precise in the sequel. In subsection \ref{subsec:Spatial_Domain}, we describe the continuum network topology. In subsection \ref{subsec:links_arrival}, we describe the process of link and traffic arrivals into the network. Subsection \ref{subsec:data_rate} gives the precise description of how the instantaneous speed or instantaneous rate of file transfer of a link is affected the presence of other transmitting links. Finally, in subsection \ref{subsec:dynamics}, we put together the preceding parts by  compactly describing the arrival-departure dynamics of the wireless links we consider in this paper.

%\subsection{Birth-Death Model for Wireless Flows}

\subsection{Spatial Domain}
\label{subsec:Spatial_Domain}
The wireless links considered in this setup are transmitter-receiver pairs. The network at any point of time consists of a certain number of transmitters each transmitting to its own unique intended receiver. This is also commonly referred to as the ``dipole-model'' of a D2D ad-hoc wireless network.
\\

 The wireless links live in $\mathbf{S} \subset \mathbb{R}^2 = [-Q,Q] \times [-Q,Q]$, a square region of the Euclidean plane where $Q$ is a large but fixed \emph{finite} constant. To avoid edge effects, we identify the opposite edges of the square and wrap it around to form a torus. We denote by $|\mathbf{S}|$ as the area of the region $\mathbf{S}$ which is $4Q^2$. We present the mathematical analysis assuming $\mathbf{S}$ is a square torus as it makes exposition of proof ideas easier.
 
%\\
%
%The model at a high level is that of `spatial birth and death' process. Links which are  transmitter-receiver pairs arrive in space-time, stay for a duration governed by a certain dynamics and then exit the network. The exact dynamics is described in the sequel below.

\subsection{Links and Traffic Arrival Process}
\label{subsec:links_arrival}

The links arrive into the network as a stationary marked space-time process on $\mathbf{S} \times \mathbb{R}$ with intensity $\lambda$. This marked point-process on $\mathbf{S} \times \mathbb{R}$  is denoted by $\mathcal{A}$. An atom $p \in \mathbb{Z}$ of $\mathcal{A}$ represents the receiver and is denoted by $(x_p,b_p)$. $x_p \in \mathbf{S}$ denotes the spatial location of receiver $p$ and $b_p \in \mathbb{R}$ denotes the time of arrival into the network of receiver $p$. Hence, one can represent the point process $\mathcal{A}$ as $\mathcal{A} = \sum_{p \in \mathbb{Z}} \delta_{(x_p,b_p)}$, where $\delta_{(x,b)}$ refers to the Dirac-mass at $(x,b) \in \mathbf{S} \times \mathbb{R}$. To each point $p$ of $\mathcal{A}$, we associate a vector mark of $(y_p,L_p)$, where $y_p \in \mathbf{S}$ and $L_p \in \mathbb{R}^{+}$, where $y_p$ refers to the location of the transmitter of receiver $p$ and $L_p$ denotes the file-size which the transmitter of $p$ wants to send to the receiver of $p$. We refer to the pair $(x_p;y_p)$ as link $p$ whose receiver is in location $x_p$ and transmitter in location $y_p$. The length of link $p$ is denoted by $T_p := ||x_p - y_p||$.
\\

The set of links present or \emph{alive} in the network at time $t$ is denoted by $\phi_t$ i.e. $\phi_t = \{(x_1;y_1),...,(x_{N_t};y_{N_t}) \}$, where $N_t$ is the number of links alive in the network at time $t$. The exact dynamics describing which links are present at a particular time $t$ will be specified in the sequel. More formally, $\phi_t = \sum_{i=1}^{N_t}\delta_{x_t}$ is a point-process on $\mathbf{S}$ of receivers marked with the location of their transmitters. We  use the terminology ``\emph{configuration of links}'' to refer to a marked point-process on $\mathbf{S}$ (atoms representing the receiver locations) with its marks (representing its corresponding transmitter locations)  in $\mathbf{S}$. We denote by $\phi^{Tx}_{t} = \{y_1, \cdots , y_{N_t}\}$, the point-process of transmitters present at time $t$ in the network and by $\phi^{Rx}_{t} = \{x_1, \cdots , x_{N_t}\}$, the point process of receivers at time $t$ in the network. 
\\

%We denote by $\phi^{T} = \{y_1, ... , y_{N_t}\}$, the point-process corresponding to the marks of the configuration of links $\phi$ (for e.g. $\phi_{t}^{T}$ corresponds to the point-process of transmitters present in the network at time $t$) and $\phi^{R}$ to denote the unmarked point-process of the configuration $\phi$ (for e.g. $\phi_{t}^{R}$ refers to the point-process of receivers present at time $t$).
%\\

This arrival process can be seen as an incarnation of links initiating  communication in a very dense IoT or a D2D network for instance. When a link has a file to transmit (which comes rarely and randomly in time), a node ``switches on'' and initiates contact with its receiver. Since the network is dense and arrivals are rare, the spatial locations of links initiating connection can be seen as coming from a space-time point-process which we model as the link arrival process.

\subsection{Data Rate}
\label{subsec:data_rate}
The transmitter of each link $p$ has a file of size $L_p$ measured in  bits which needs to be communicated to its receiver. The transmitter sends this file to its receiver at a time varying rate given by the instantaneous Shannon rate. Denote by $l(\cdot): \mathbb{R}^{+} \rightarrow \mathbb{R}^{+}$, a distance dependent `path-loss' function which encodes how  signal power attenuates with distance. More precisely, $l(r)$ is the received power at  distance $r$ from a transmitter transmitting at unit-power. We can thus, define the rate of file transmission by a transmitter to its receiver as   
\begin{align}
R(x,\phi) = C \log_2 \left( 1 + \frac{l(||x-y||)}{\mathcal{N}_0 + \sum_{u \in \phi^{Tx} \setminus \{y\}} l( ||x-u||)}\right).
\label{eqn:rate_defn}
\end{align}

In the above expression, $C$ is a constant with units in bits per unit time, $\mathcal{N}_0$ denotes the thermal noise power at the receiver, $ \sum_{u \in \phi^{Tx} \setminus \{y\}} l( ||x-u||)$ denotes the interference seen at location $x$ due to configuration $\phi$ and $l(||x-y||)$ is the received signal power at $x$ from $y$. The interference at location $x$ is the sum of attenuated powers from the transmitters in $\phi^{Tx} \setminus \{y\}$ which is the sum of attenuated powers from all other transmitters other than the transmitter of the tagged receiver under consideration. For any $(x;y) \in \phi$, denote by $I(x,\phi)$ as the interference seen at $x$ in configuration $\phi$, which can be written as 
\begin{align}
I(x,\phi) = \sum_{u \in \phi^{Tx}\setminus \{ y\}} l(||x-u||).
\label{eqn:interference_defn}
\end{align}
Further, denote by $a$ the constant (which can possibly be infinite) $a = \int_{x \in \mathbf{S}}l(||x||)dx$.
\\

Some common examples of path-loss functions are
\begin{itemize}
\item $l(r) = r^{-\alpha}$ with $\alpha > 2$ called the ``power-law path-loss'' model.
\item $l(r) = (r+k)^{-\alpha}$ where $k$ is a constant  is commonly called the ``bounded path-loss'' model.
\end{itemize}
In our analysis however, we remain general and do not explicitly assume a particular form for the function $l(\cdot)$. Equation (\ref{eqn:rate_defn}) is the Shannon formula for the Gaussian SISO (Single Input Single Output) channel  with signal power $1$ and the interference treated as noise  \cite{tom_cover}. We will comment on extensions of the dynamics to MIMO channels in Section \ref{sec:mimo_extnsions}.
\\

In Equation (\ref{eqn:rate_defn}), we did not consider the effect of random channel fading. However, one can easily model the effect of fast fading by defining the rate-function as  
\begin{align}
R^{(f)}(x,\phi) = C\mathbb{E}_h \log_2 \left(  1 + \frac{h_{xy} l(||x-y||) }{\mathcal{N}_0 + \sum_{t \in \phi^{T} \setminus \{ y\} } h_{xt} l(|| t - x||) } \right)  ,
\label{eqn:random_rate}
\end{align}
where $h_{xy}$ and $h_{tx}$ are independent random-variables representing the values of the fading power between the different transmitters and receiver and the expectation is with respect to this random vector of fades $h$. All of our theoretical results extend to this case but with a bit more notation and computation cost and thus, we discuss only the case without fading. The reason for fast-fading to not affect our theoretical insights is that both Equations (\ref{eqn:rate_defn}) and (\ref{eqn:random_rate})  are deterministic monotone functions of the point $x$ and $\phi$. The rate functions are monotone in the sense that if $(x;y) \in \phi_1 \subseteq \phi_2$, then we have $R(x,\phi_1) \geq R(x,\phi_2)$ and $R^{(f)}(x,\phi_1) \geq R^{(f)}(x,\phi_2)$. We see from the proofs of our results, that these two (monotonicity and deterministic) are the crucial aspects of rate function on which the results hinge on and hence, we will only discuss the case without fading to simplify notation and convey the main ideas.

\subsection{The Dynamics}
\label{subsec:dynamics}
This setup now allows one to precisely define the network dynamics. A link arriving with receiver in location $x_p \in \mathbf{S}$ and its transmitter at location $y_p \in \mathbf{S}$ at time $t_p$ with file of size $L_p$ leaves the network at time $d_p$ given by the following recursive definition
\begin{equation}
d_p = \inf \left\{ t > b_p : \int_{u = t_p}^{t} R(x_p, \phi_u) du \geq L_p \right\}.
\label{eqn:dynamics_defn}
\end{equation}
In the above equation, $\phi_u$ denotes the point process of all links ``alive'' at time $u$ i.e. $\phi_{u}^{R} = \sum_{p \in \mathbb{Z}} \delta_{x_p} \mathbf{1}_{\{ u \in [b_p,d_p]\}}$ and $\phi_{u}^{T} = \sum_{p \in \mathbb{Z}} \delta_{y_p} \mathbf{1}_{\{ u \in [b_p,d_p]\}}$ where $\delta_x$ denotes to the Dirac-measure at location $x \in \mathbf{S}$. We refer to the time instant $b_p$ as the ``birth'' time of link $p$ and $d_p$ as the ``death'' time of link $p$. This is the justification for calling this dynamics a ``spatial birth-death'' model, i.e. this transmitter-receiver pair is ``born'' at time $b_p$ and ``dies'' at time $d_p$ and leaves the network.
\\

This model is the wireless analog of the ``flow-level" model introduced by Massoulie and Roberts \cite{BW_Sharing} to evaluate and study wired networks, particularly the Internet. The flow-model in the present paper is based on a more precise modeling of the wireless interactions compared to the standard conflict graph model of interference.  This spatial birth-death model can also be viewed as a ``dynamic" version of the model considered in \cite{baccelli_gammal}, namely the Gaussian Interference channel with point-to-point  codes. In our model,  each link or a ``flow" is a Guassian point-to-point channel using a point-to-point codebook and treats all Interference as Noise (IAN) as made evident in the rate-formulation in Equation \ref{eqn:rate_defn}. It was shown in \cite{baccelli_gammal}, that one can consider other schemes such as Successive Interference Cancellation or Joint Optimal Decoding to get strictly better performance than considering Interference as Noise in cases of static links that use ptp codes. We however only study the dynamic version of treating IAN  and leave the other cases for future work.

%
%
%
%We study a setting where each link treats ``Interference as Noise" (IAN) which gives a lower bound on the dynamic capacity $\lambda$. However, as shown in \cite{baccelli_gammal}, it is known that IAN is not optimal even when each link is restricted to use point-to-point codes. Indeed, one can consider extending the ``resource allocation" $R(x,\phi)$ for a flow based on other schemes such as Successive Interference Cancellation or Joint-Decoding. However, for the present paper, we consider the case of IAN, which will yield a lower bound on the critical arrival rate $\lambda$ (made precise in Theorem \ref{thm:necessary_condn})  
%\\

%
%Note however that the ``resource allocation'' for a flow in this model (i.e. $R(x,\phi)$) is done according to the policy of treating ``Interference as Noise'' which is the simplest starting point to study. One can envision more refined scheduling or resource allocation moving beyond treating ``Interference as Noise'' which is known from Network Information Theory to not always be optimal. One of the long standing open-problem in Network Information Theory is indeed that of the ``optimal'' allocation of resources maximizing instantaneous sum rates of all links. In the present paper, we however restrict ourselves  to a particular achievable rate-allocation by treating all interference as noise. 

\subsection{Mathematical Assumptions}
\label{sec:assumptions}

All the analysis and results rely on the following assumptions on the system model presented in the previous section. 
\begin{enumerate}
\item The link arrival process is a time-space stationary Poisson Point Process of intensity $\lambda$. The probability of an arrival of a receiver in an infinitesimal location $dx$ in an infinitesimal time interval $dt$ is $\lambda dx dt$.
\item The file sizes of each transmitter are i.i.d. and exponentially distributed with mean $L$ bits.  
\item The transmitter location $y$ of a receiver at $x$ is assumed to be distributed uniformly and independently of everything else on the perimeter of a ball of radius $T$ centered at $x$. In particular, the received signal power at any receiver is $l(T)$. %Thus, one can view the network to be driven by an independently marked Poison Point Process on $\mathbf{S} \times \mathbb{R}$ with independent vector marks corresponding to file-size and transmitter locations.
%\item $Q < \infty$ is finite and fixed.
\item The thermal noise power $\mathcal{N}_0 >0$ is a fixed constant.
\item The path-loss function is bounded and non-increasing with $l(0) = 1$. This is a reasonable assumption since energy is only dissipated on traveling through space and the received energy can be no larger than the transmit energy.
\end{enumerate}

These assumptions (especially the statistical ones) are imposed primarily for mathematical tractability.  It is well known, at least in the context of the Internet, that file sizes are Pareto \cite{Internet_File} and it would make modeling sense to assume heavy-tailed file sizes.  We will relax the statistical assumption on exponential file-sizes in the simulation studies. Nonetheless, studying the system under the Markovian statistical assumptions form a necessary first step before considering the general case. 
\\

In our model, we have that all links have the same length of $T$. This is commonly referred to as the `Dipole-Model' of an ad-hoc wireless network \cite{Baccelli_sg_book}. An interesting limiting case is that of $T=0$. This corresponds to the physical case of when the link lengths are very small compared to the size of the network.  In this limiting case, the point process $\phi_t$ is simple and unmarked since the transmitter and receiver locations are identical, and the signal power is $l(0) = 1$. The interference function at a point $x$ from configuration $\phi$ is then $I(x,\phi) = \sum_{y \in \phi \setminus\{x\}}l(||y-x||)$. We mention this limiting case here as it will help us to get a much better understanding of what our theoretical results imply, especially that of clustering (defined later in Definition \ref{defn:clustering}). However, all of our mathematical results are valid for general arbitrary link distances $T$. 
\\

Although the  assumptions may render the model somewhat specific, it still presents a formidable mathematical challenge and captures the key features of a spatio-temporal dynamic wireless network. Most prior works incorporating spatial interference circumvent this mathematical difficulty by making `full-buffer' assumptions which is equivalent to assuming no temporal interactions.  Our results, especially the closed form expressions for approximating of delay are the first in the context of spatio-temporal wireless network models to the best of our knowledge.
\\

The  statistical assumptions, namely the Poisson arrival process and i.i.d. exponential file sizes imply that the process $\phi_t$ is a continuous time measure-valued Markov Chain on the state space of marked simple counting measure on $\mathbf{S}$ denoted as $\mathbf{M}(\mathbf{S})$ \cite{daley}. More precisely, the process $\phi_t$ is a piece-wise constant jump Markov Process i.e., from a time $t$, the next \emph{change} in the configuration will occur after an exponentially distributed time duration with rate $\lambda |S| + \frac{1}{L}\sum_{x \in \phi_t}R(x,\phi_t)$. This interpretation follows since births occur at the epochs of an exponential clock with rate $\lambda |S|$ and the death rate of any receiver $x$ in configuration $\phi$ is $\frac{1}{L}R(x,\phi)$ which is independent of everything else. The assumption $Q < \infty$ ensures that $\phi_t$ is a piece-wise constant jump process. Extending the analysis of stability to the case of $\mathbf{S} = \mathbb{R}^2$ is way more challenging and is left for future work. The large torus is meant to emulate the Euclidean space. The fact that it is similar to the Euclidean space (in terms of interference field and hence birth and death dynamics) justifies our use of the Palm calculus of the Euclidean space rather than that of the torus in some derivations.
\\

The first natural question we ask about $\phi_t$ is that of time ergodicity which we address in the next section. Time ergodicity implies that the process $\phi_t$ admits an unique steady-state in which the links form a stationary and space-ergodic point process on $\mathbf{S}$. Moreover, since $\mathbf{S}$ is a compact set, the stationary-regime when it exists will put only finitely many points in $\mathbf{S}$ at any given instant almost-surely. Denote by $\phi_0$  the steady-state point-process of links i.e. the links that are ``alive'' or active in steady-state. $\phi_0$ is a point-process on $\mathbf{S}$ with atoms representing the locations of receivers and marks representing the relative transmitter locations.  
\\

 Denote by $\beta$ the density of links present in the network in steady-state (assuming it exists). More formally, $\beta$ denotes the intensity of the receiver point-process $\phi_{0}^{Rx}$ (which is the ground point process of $\phi_0$)  on $\mathbf{S}$ when the dynamics is in steady state. Note that the intensity of the transmitter point-process $\phi_{0}^{Tx}$ in steady-state is also $\beta$ since every receiver in the model has exactly one transmitter. The distribution of the relative location of the transmitter of a typical receiver of $\phi_{0}^{Rx}$ is uniform on the perimeter of a ball of radius $T$ around this receiver. However, the transmitter locations across different receivers of $\phi_{0}^{Rx}$ are not independent due to the correlation (clustering) induced by the dynamics.
\\

 The interpretation of time ergodicity is also connected to the phase-transition of mean delay. Little's law for this dynamics yields $\beta = \lambda W$, where $W$ is the average sojourn time of a typical link i.e., $W = \mathbb{E}[d_0 - b_0]$; which follows from PASTA \cite{pasta}. The process $\phi_t$ being time ergodic in our model is equivalent to asserting that $W < \infty$, i.e. finite mean delay for a typical link in the network. This interpretation is what we allude to in the system insight section which allows one to evaluate how frequently in space and time should the traffic arrival process be (i.e. how large $\lambda$) can be for the network to provide finite mean-delay to all links.

\begin{table*}
%\centering
\let\centering\relax
\resizebox{\linewidth}{!}{%
\begin{tabular}{||c|l||}
\hline
Notation & Description \\
\hline \hline
%$\mathcal{A}$ & Marked Arrival Point Process on $\mathbf{S} \times \mathbb{R}$ \\
%\hline
$\phi_t$ & Point-process of receivers alive at time $t$ marked with their corresponding transmitter locations \\
\hline
$\phi_t(\mathbf{S})$ & The number of links alive at time $t$\\
\hline
$\phi_{t}^{Tx}$ & The point-process of transmitter locations at time $t$ \\
\hline
$\phi_{t}^{Rx}$ & The point-process of receiver locations at time $t$ \\
\hline
$\phi_0$ & The steady state marked point-process corresponding to $\phi_t$ \\
\hline
$\phi_0(\mathbf{S})$ & The random variable denoting the number of links in steady-state \\
\hline
$\mathbb{E}^{0}_{\phi_0}$ & The Palm probability measure with respect to $\phi_0$ \\
\hline
$K_{\phi}(r)$ & The Ripley K-function for point process $\phi$ \\
\hline
$T$ & The link length in the model \\
\hline
$\beta$ & The intensity of the point process $\phi_0$ \\
\hline
$a$ & $\int_{x \in \mathbf{S}}l(||x||)dx$ \\
\hline
$C$ & Multiplicative constant for the rate-function in Equation (\ref{eqn:rate_defn}). It is measured in bits per second.\\
\hline
$L$ & Average file size measured in bits.\\
\hline
\end{tabular}}
\label{table_notation}
\caption{Table of Notation}
\end{table*}

\section{Main Theoretical Results}
\label{sec:main_results}

The main theoretical results of our paper are on the time-ergodicity (or stability) conditions of the dynamics $\phi_t$ and on a certain structural characterization of the steady-state point process of $\phi_t$ whenever it exists.  The proofs of the theorems are presented in the Appendix.

\subsection{Stability Criterion}
\label{sec:results}

%The effect of clustering in  dynamic ad-hoc networks that evolve with time has been observed empirically \cite{D2D_Clustering}. This clustering in the steady-state lends some credibility to our stochastic model as a generative model for D2D ad-hoc networks despite many statistical simplifications.

%
% The importance of the clustering phenomenon as it is empirically observed in real D2D networks \cite{D2D_Clustering} and the model we propose is a natural generative model for clustering. 
%\\

We state our main theoretical results on the stability criterion (i.e. time ergodicity) of the dynamics.

\begin{theorem}
If $\lambda > \frac{C l(T)}{\ln (2) L a}$, then the Markov Chain $\phi_t$ admits no stationary regime.
\label{thm:necessary_condn}
\end{theorem}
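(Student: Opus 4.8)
The plan is to argue by contradiction and show that, when $\lambda>\frac{Cl(T)}{\ln(2)La}$, the number of live links $N_t:=\phi_t(\mathbf{S})$ is transient, which rules out positive recurrence and hence a stationary regime. Since births occur at the constant rate $\lambda|\mathbf{S}|$ regardless of the state, while the death rate in a configuration $\phi$ is $\frac{1}{L}\sum_{x\in\phi}R(x,\phi)$, the jump chain $N_t$ increases by one at rate $\lambda|\mathbf{S}|$ and decreases by one at rate at most $\bar\mu(N_t)$, where $\bar\mu(n):=\frac{1}{L}\sup\{\sum_{x\in\phi}R(x,\phi):\phi(\mathbf{S})=n\}<\infty$ (the supremum is over a compact configuration space). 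Hence $N_t$ stochastically dominates a birth--death chain with up-rate $\lambda|\mathbf{S}|$ and down-rate $\bar\mu(\cdot)$, and the whole theorem reduces to the \emph{aggregate capacity bound}
\[
\limsup_{n\to\infty}\ \sup_{\phi:\,\phi(\mathbf{S})=n}\ \sum_{x\in\phi}R(x,\phi)\ \le\ \frac{Cl(T)\,|\mathbf{S}|}{\ln(2)\,a}.
\]
Granting this, the hypothesis gives $\lambda|\mathbf{S}|>\limsup_n\bar\mu(n)$, so there are $n_0$ and $\varepsilon>0$ with $\bar\mu(n)\le\lambda|\mathbf{S}|-\varepsilon$ for all $n\ge n_0$; above level $n_0$, $N_t$ dominates a transient random walk with strictly positive drift and bounded jump rates, so by the usual strong-Markov/renewal argument $N_t\to\infty$ almost surely, contradicting recurrence.

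It then remains to bound the aggregate Shannon rate of an arbitrary $n$-link configuration. Using $\log_2(1+s)\le s/\ln 2$ and Assumption~(3) (received signal power $l(T)$), one has $\sum_{x\in\phi}R(x,\phi)\le\frac{Cl(T)}{\ln 2}\sum_{x\in\phi}\frac{1}{\mathcal{N}_0+I(x,\phi)}$. Write $g_\phi(z):=\sum_{y\in\phi^{Tx}}l(||z-y||)$ for the received-power field of the transmitters; then $\mathcal{N}_0+I(x,\phi)=\mathcal{N}_0-l(T)+g_\phi(x)$, and since $l$ is non-increasing and positive on the compact torus, $g_\phi(x)\ge n\,l(\mathrm{diam}\,\mathbf{S})\to\infty$ uniformly in $x$, so $\mathcal{N}_0+I(x,\phi)=(1+o_n(1))\,g_\phi(x)$ uniformly over $x$ and over configurations. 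Thus it suffices to show $\sum_{x\in\phi^{Rx}}\frac{1}{g_\phi(x)}\le\frac{|\mathbf{S}|}{a}(1+o_n(1))$ uniformly. Using the torus translation invariance in the form $a=\int_{\mathbf{S}}l(||x-z||)\,dz$, this is exactly the statement
\[
\frac{1}{|\mathbf{S}|}\int_{\mathbf{S}}W_\phi(z)\,dz\ \le\ 1+o_n(1),\qquad W_\phi(z):=\sum_{x\in\phi^{Rx}}\frac{l(||x-z||)}{g_\phi(x)}.
\]

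The bound on the spatial average of $W_\phi$ is the technical heart, and the main obstacle. The mechanism is a spatial localization: a receiver $x$ lying close to $z$ has its own transmitter within distance $T$ of it, and the transmitters of all other receivers clustered near $z$ also contribute to $g_\phi(x)$, so the denominators absorb the numerators of the receivers near $z$; partitioning $\mathbf{S}$ and summing the resulting ``fair shares'' reconstitutes $|\mathbf{S}|$, while every cross-cell interference term only inflates the denominators and hence only helps. Making this quantitative and \emph{uniform} over arbitrary, possibly highly irregular, configurations while keeping the discretization and boundary losses $o_n(1)$ is the delicate step, and it is essential to work in the regime $n\to\infty$: the pathwise inequality $\sum_x\frac{1}{\mathcal{N}_0+I(x,\phi)}\le|\mathbf{S}|/a$ is false for small configurations (a single link with small $\mathcal{N}_0$ already violates it), and it is only the forced spatial spreading of $n$ points on the bounded torus that drives the total interference, and hence the aggregate rate, to the stated limit.

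Two remarks on the route. An equivalent formulation assumes a stationary $\phi_0$ of intensity $\beta$, balances births and deaths (PASTA / rate conservation) to get $\lambda=\frac{\beta}{L}\,\mathbb{E}^{0}_{\phi_0}[R(0,\phi_0)]$, and then bounds $\mathbb{E}^{0}_{\phi_0}[R(0,\phi_0)]$; via the same $\log(1+s)\le s$ step and the same integral identity this again reduces to showing $\mathbb{E}[W_{\phi_0}(0)]\le 1$, i.e. to the same spatial estimate, now in expectation under the steady state, which is where Palm calculus enters. Also, the argument uses only that the rate is deterministic and monotone non-increasing in the configuration, so it carries over verbatim to the faded rate $R^{(f)}$ of (\ref{eqn:random_rate}); and compactness of $\mathbf{S}$ (finite $Q$) is used twice, to make $\phi_t$ a genuine piecewise-constant jump process and to guarantee $I(x,\phi)=\Omega(n)$ so that $\mathcal{N}_0$ and finite-$n$ effects are asymptotically negligible.
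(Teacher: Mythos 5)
Your reduction of the whole theorem to the uniform aggregate capacity bound
\[
\limsup_{n\to\infty}\ \sup_{\phi:\,\phi(\mathbf{S})=n}\ \sum_{x\in\phi}R(x,\phi)\ \le\ \frac{Cl(T)\,|\mathbf{S}|}{\ln(2)\,a}
\]
is where the argument breaks: this bound is false in general, so the "localization / fair shares" step you flag as the technical heart cannot be carried out. The assumptions only require $l$ bounded and non-increasing, so take $T=0$ and $l(r)=\mathbf{1}(r\le r_0)$ (or any fast-decaying bounded $l$), for which $a=\pi r_0^2$. Place $K\approx |\mathbf{S}|/r_0^2$ tight clusters of $k=n/K$ points each on a square grid of spacing $r_0(1+\varepsilon)$: every point sees interference only from its own cluster, its rate is $C\log_2\bigl(1+\tfrac{1}{\mathcal{N}_0+k-1}\bigr)$, and the aggregate rate tends to $\tfrac{CK}{\ln 2}\approx \tfrac{C|\mathbf{S}|}{\ln(2)\,r_0^2}=\pi\cdot\tfrac{C|\mathbf{S}|}{\ln(2)\,a}$ as $k\to\infty$ — a constant factor above your claimed ceiling, uniformly in $n$. (The same example shows $\sum_x 1/g_\phi(x)$ can exceed $|\mathbf{S}|/a$, that the spatial average of $W_\phi$ can exceed $1$, and also that your preliminary claim $g_\phi(x)\ge n\,l(\mathrm{diam}\,\mathbf{S})\to\infty$ fails since $l$ may vanish at that range.) The consequence is structural, not cosmetic: for $\lambda$ between $\tfrac{Cl(T)}{\ln(2)La}$ and the packing-limited worst-case throughput there exist states whose total death rate exceeds the total birth rate, so $N_t$ does \emph{not} have uniformly positive drift at large populations and cannot be stochastically dominated below by a transient birth–death chain; instability in that range is a consequence of the impossibility of a stationary regime under spatially uniform arrivals, not of a worst-case drift condition. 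Your fallback remark (assume stationarity, bound $\mathbb{E}^0_{\phi_0}[R(0,\phi_0)]$ via $\mathbb{E}[W_{\phi_0}(0)]\le 1$) runs into the same wall: that inequality is not a general property of stationary point processes (a stationary clustered process of the above type violates it), and proving it for $\phi_0$ would require exactly the structural information you do not have.

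The paper's proof avoids any bound on the aggregate rate or on $\mathbb{E}^0_{\phi_0}[R]$ alone. Assuming stationarity, it applies rate conservation three times: to $\phi_t(\mathbf{S})$, to the workload, and — crucially — to the total interference $\mathbf{I}_t=\sum_{x\in\phi_t}I(x,\phi_t)$. PASTA and Campbell's formula give the mean upward jump at births, while Papangelou's theorem (the death process has stochastic intensity $\tfrac1L\sum_x R(x,\phi_t)$) gives the mean downward jump at deaths, and combining these yields the exact identity $\lambda L\,a=\mathbb{E}^{0}_{\phi_0}\bigl[R(0,\phi_0)\,I(0,\phi_0)\bigr]$. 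The conclusion then follows from the deterministic pointwise bound $R(0,\phi)\,I(0,\phi)\le \tfrac{Cl(T)}{\ln 2}$, valid for \emph{every} configuration, since $\log_2\bigl(1+\tfrac{l(T)}{\mathcal{N}_0+I}\bigr)\,I\le \tfrac{l(T)}{\ln 2}\tfrac{I}{\mathcal{N}_0+I}$. It is this product bound — low rate is automatically compensated by high interference — that neutralizes exactly the clustered configurations which defeat your uniform aggregate estimate; if you want to salvage your route, you would need to replace the aggregate rate bound by a functional of this product type, at which point you are essentially reconstructing the paper's rate-conservation argument.
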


We see from the proof (in Section \ref{sec:proof1}) that this theorem only needs the weaker assumption  that $l(\cdot)$ be such that $l(r) < \infty$ for all $r>0$. This indeed is a weaker assumption than assuming that the function $l(\cdot)$ is bounded. Thus, we have as immediate corollary to this theorem:
\begin{corollary}
For the path-loss model $l(r) = r^{-\alpha}$, $\alpha \geq 2$, 
for all $\lambda > 0$, and all mean file sizes, 
the process $\phi_t$ admits no stationary-regime. 
\end{corollary}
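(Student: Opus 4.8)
The plan is to obtain the corollary as an immediate consequence of Theorem~\ref{thm:necessary_condn} together with the strengthening recorded in the remark just after it, namely that the instability conclusion remains valid whenever $l(r)<\infty$ for every $r>0$ (boundedness of $l$ at the origin is not needed). Since $l(r)=r^{-\alpha}$ is finite for all $r>0$ — and since, with only finitely many points present and all pairwise distances a.s.\ positive, the rates in Equation~(\ref{eqn:rate_defn}) are a.s.\ finite, so that $\phi_t$ is still a well-defined piece-wise constant jump Markov process — Theorem~\ref{thm:necessary_condn} applies to this path-loss model even though $l(0)=+\infty$ violates Assumption~5. Hence the whole argument reduces to locating the instability threshold $\frac{C\,l(T)}{\ln(2)\,L\,a}$ for $l(r)=r^{-\alpha}$, i.e.\ to evaluating the constant $a=\int_{x\in\mathbf S}l(\|x\|)\,dx$.

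The key, and essentially the only, computation is to show that $a=+\infty$ when $\alpha\ge2$. Because $\mathbf S$ is the torus obtained from a square of side $2Q>0$, it contains a Euclidean ball $B(0,r_0)$ around the origin on which the torus distance coincides with $\|\cdot\|$, so $a \ge \int_{B(0,r_0)}\|x\|^{-\alpha}\,dx = 2\pi\int_0^{r_0} r^{1-\alpha}\,dr$. The radial integrand $r^{1-\alpha}$ is non-integrable at $0$ precisely when $1-\alpha\le -1$, that is when $\alpha\ge 2$ (for $\alpha=2$ it equals $r^{-1}$, and for $\alpha>2$ it diverges even faster), so $a=+\infty$ for every $\alpha\ge2$. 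Therefore the threshold $\frac{C\,l(T)}{\ln(2)\,L\,a}$ equals $0$ for every finite mean file size $L$ and every fixed link length $T>0$; since every $\lambda>0$ then satisfies $\lambda>0=\frac{C\,l(T)}{\ln(2)\,L\,a}$, Theorem~\ref{thm:necessary_condn} gives that $\phi_t$ admits no stationary regime, which is exactly the claim.

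The one point that deserves care — the only real obstacle in an otherwise routine deduction — is the interpretation of Theorem~\ref{thm:necessary_condn} in the degenerate case $a=+\infty$: one should check that the proof in Section~\ref{sec:proof1} never divides by, nor otherwise requires finiteness of, $a$. If one prefers to sidestep this, the conclusion also follows by a truncation-and-monotonicity argument. Fix $\lambda>0$ and $L<\infty$, put $l_\varepsilon(r):=\min\{r^{-\alpha},\varepsilon^{-\alpha}\}$, which is bounded so that Theorem~\ref{thm:necessary_condn} applies to it in its literal form, and note that for $\varepsilon<T$ one has $l_\varepsilon(T)=l(T)$ while $l_\varepsilon\le l$ pointwise, so in every configuration the $l_\varepsilon$-dynamics has at least the total departure rate of the $l$-dynamics. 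A standard stochastic coupling then shows the $l$-system is at least as congested as the $l_\varepsilon$-system, so instability of the latter implies instability of the former. Since $a_\varepsilon:=\int_{\mathbf S}l_\varepsilon(\|x\|)\,dx\uparrow+\infty$ as $\varepsilon\downarrow0$, one can pick $\varepsilon$ small enough that $\lambda>\frac{C\,l(T)}{\ln(2)\,L\,a_\varepsilon}$, apply Theorem~\ref{thm:necessary_condn} to the bounded path-loss $l_\varepsilon$, and transfer the instability back to $l$. Either way, nothing beyond the elementary divergence of $\int_0 r^{1-\alpha}\,dr$ is needed, and that divergence is the whole content of the corollary.
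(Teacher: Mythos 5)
Your proposal is correct and follows essentially the same route as the paper: the corollary is a direct consequence of Theorem~\ref{thm:necessary_condn} (under the weaker assumption that $l(r)<\infty$ for $r>0$) once one observes that $a=\int_{\mathbf S}\|x\|^{-\alpha}\,dx$ diverges at the origin for $\alpha\ge 2$, so the instability threshold is zero. Your explicit radial computation and the optional truncation-and-coupling argument for the degenerate case $a=+\infty$ are additional care the paper leaves implicit, but they do not change the argument.
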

\begin{proof}
This follows since the integral $\int_{x\in\mathbf{S}}l(||x||)dx$ diverges for the function $l(r) = r^{-\alpha}$ for all $\alpha \geq 2$.
\end{proof}

 The next result provides a tight condition for time ergodicity.
\begin{theorem}
If $\lambda < \frac{C l(T)}{\ln(2) L a}$, then the Markov Chain $\phi_t$ is time ergodic, i.e. has an unique stationary regime. 
\label{thm:sufficient_condn}
\end{theorem}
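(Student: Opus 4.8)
The plan is to establish positive Harris recurrence of the measure‑valued Markov process $\phi_t$ by a drift (Foster--Lyapunov) argument adapted to the non‑countable state space $\mathbf{M}(\mathbf{S})$, and then to deduce uniqueness of the stationary law from irreducibility. First I would set up the recurrence framework: the state space is the set of finite marked simple counting measures on the compact torus $\mathbf{S}$, so the sublevel sets $\{\phi:\phi(\mathbf{S})\le n\}$ are relatively compact, and — because in any configuration with at most $n$ links the per‑link death rate is bounded below by $\tfrac{C}{L}\log_2\!\big(1+l(T)/(\mathcal{N}_0+n-1)\big)>0$ while the total birth rate is the constant $\lambda|\mathbf{S}|$ — from any such configuration the empty configuration $\emptyset$ is reached before the next birth with probability bounded below uniformly on $\{\phi(\mathbf{S})\le n\}$. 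Hence these sets are petite, $\emptyset$ is reached from every state (giving irreducibility and thus uniqueness once recurrence is known), and it suffices to produce a Lyapunov functional $V$ with relatively compact sublevel sets such that $\mathcal{L}V(\phi)\le-\varepsilon<0$ outside some $\{\phi(\mathbf{S})\le n_0\}$.

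The heart of the matter is the choice of $V$ and the drift estimate, and here the spatial structure is essential: the naive choice $V(\phi)=\phi(\mathbf{S})$ fails, since a configuration with very many links concentrated in a small region has total death rate only of order $Cl(T)/(L\ln 2)$ (every active SINR being $\approx 1/\phi(\mathbf{S})$), which may be smaller than $\lambda|\mathbf{S}|$; this would only yield the strictly weaker threshold $\lambda<Cl(T)/(L|\mathbf{S}|\ln 2)$, whereas the integrability constant $a=\int_{\mathbf{S}}l(\|x\|)\,dx\le|\mathbf{S}|$ must appear. I would therefore work with a functional sensitive to how the links are spread out — for instance one built from the aggregate interference field, $V(\phi)=\sum_{x\in\phi^{Rx}}g\big(I(x,\phi)\big)$ for a suitable increasing $g$ — and compute $\mathcal{L}V$ using (i) the torus translation‑invariance identity $\int_{\mathbf{S}}l(\|z-u\|)\,dz=a$ for every $u$, so that the contribution to $V$ of a spatially uniform Poisson birth is computable independently of the current configuration — this is exactly where $a$ enters; and (ii) the monotonicity and concavity of $u\mapsto\log_2\!\big(1+l(T)/(\mathcal{N}_0+u)\big)$ together with its high‑interference behaviour $u\log_2\!\big(1+l(T)/(\mathcal{N}_0+u)\big)\to l(T)/\ln 2$, to lower bound the death contribution. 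Balancing the birth and death contributions of $\mathcal{L}V$ is what should reproduce the condition $\lambda<Cl(T)/(La\ln 2)$.

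The main obstacle is precisely this balancing. By the clustering result of Section~\ref{sec:clustering} the steady state is not Poisson — the interference at a typical receiver is stochastically larger than the mean‑field value $\beta a$ — so one cannot simply substitute mean‑field estimates, and any worst‑case spatial bound is too lossy (it gives only the $|\mathbf{S}|$‑threshold). One must show that the dynamics, driven by spatially uniform births, cannot sustain pathological spatial concentration long enough to matter, and getting the drift constant to be exactly $Cl(T)/(La\ln 2)$ rather than off by a multiplicative factor is the delicate part. I expect this to be handled either by the interference‑weighted functional above (whose drift automatically penalizes concentration) or, following the coupling philosophy of \cite{baccelli_p2p}, by sandwiching $\phi_t$ between the empty‑start process $\phi_t^{\emptyset}$ — which by attractiveness of the dynamics (larger configurations have smaller rates) is stochastically monotone increasing in $t$ and hence converges to the minimal invariant law — and an auxiliary dominating process for which the moment bound $\sup_t\mathbb{E}[\phi_t(\mathbf{S})]<\infty$ can be obtained directly; identifying this monotone limit with the unique stationary regime and combining it with the irreducibility above then yields time ergodicity.
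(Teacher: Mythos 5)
There is a genuine gap: the proposal never actually establishes the drift inequality at the claimed threshold, and you say so yourself (``the main obstacle is precisely this balancing\dots I expect this to be handled either by\dots''). The entire content of the theorem is that the constant is $a=\int_{\mathbf{S}}l(\|x\|)dx$ rather than $|\mathbf{S}|$, and neither of your two candidate routes is shown to deliver it. For the interference-weighted functional $V(\phi)=\sum_{x}g(I(x,\phi))$ the generator computation is not carried out; note that a uniform birth increases not only the new point's term but also $g(I(x,\phi))$ at every existing receiver, while a death decreases the terms of all surviving points, so the birth/death balance involves exactly the correlation between $R(x,\phi)$ and $I(x,\phi)$ that drives Theorem~1, and it is not at all clear (nor argued) that the balance closes at $Cl(T)/(L a\ln 2)$ for some $g$. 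For the monotone-sandwich route, your attractiveness observation is correct (constant birth rate, death rates non-increasing in the configuration), so $\phi_t^{\emptyset}$ is stochastically increasing and converges to the minimal invariant law; but turning this into ergodicity requires precisely the auxiliary dominating process with $\sup_t\mathbb{E}[\phi_t(\mathbf{S})]<\infty$ under the stated condition, which is essentially the statement to be proved — any crude domination (e.g.\ worst-case interference) again only yields the $|\mathbf{S}|$ threshold you already rejected. Your framework (petite sublevel sets via hitting $\emptyset$, irreducibility, uniqueness) is fine, but it is scaffolding around the missing step.

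For comparison, the paper closes this gap by a different mechanism: discretize $\mathbf{S}$ into cells of side $\epsilon$ and replace $l$ by the cellwise supremum $l_{\epsilon}\ge l$, producing a stochastically dominating process whose vector of cell counts $\mathbf{X}^{(\epsilon)}(t)$ is a Markov chain on $\mathbb{N}^{n_\epsilon}$; then prove ergodicity of that chain by a fluid-limit Foster--Lyapunov argument with the functional $\|x\|_{\infty}$. The key point — the step that ``penalizes concentration'' and makes $a$ appear — is that in the fluid limit the interference seen in the fullest cell $i^{*}$ is at most $\|x\|_{\infty}\sum_{k}l_{\epsilon}(a_k,0)$ (torus symmetry plus $x_k\le\|x\|_{\infty}$), so the aggregate death rate of that cell is at least $C\,l(T)/(L\ln 2\sum_k l_{\epsilon}(a_k,0))$, a constant independent of the state, which beats the per-cell arrival rate $\lambda\epsilon^{2}$ exactly when $\lambda< C\,l(T)/(L\ln 2\int_{\mathbf{S}}l_{\epsilon})$; letting $\epsilon\to 0$ (monotone convergence, countably many discontinuities of $l$) recovers the constant $a$. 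If you want to salvage your plan, the most direct fix is to adopt this discretization/domination step: it converts the measure-valued drift problem you could not close into a finite-dimensional one where the $\ell_{\infty}$ Lyapunov function makes the exact constant fall out.
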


We note that the above theorems statements are valid as is even in the case of fading if one used the rate-function in Equation (\ref{eqn:random_rate}) with the fades being unit-mean i.i.d. random variables. The two theorems identify the exact critical arrival rate $\lambda$ for ergodicity as $\lambda_c = \frac{C l(T)}{L \ln(2) a}$. We however refrain from studying the critical case as it is technically more subtle. In the sequel, whenever we refer to $\phi_0$, we implicitly assume $\phi_t$ is ergodic, i.e. the condition $\lambda < \frac{C}{\ln(2) L a}$ holds.
%\\
%
%The formula for $\lambda_c$ gives a precise estimate for how frequently in space and time (i.e. an estimate for $\lambda$) must a base-station offload traffic onto the D2D network for all D2D links to have finite mean delay. In particular the average number of D2D links arriving per unit space unit time must be below the critical $\lambda_c$ for the D2D network to be stable, i.e. to ensure every D2D link has finite mean delay.

\subsection{Clustering}
\label{sec:clustering}

In this section, we state the main structural characterization of the steady-state point process $\phi_0$ when it exists i.e. when $\lambda <  \frac{C}{L \ln(2) a}$. We need the following definition of clustering.

 \begin{definition} (CLUSTERING)
Let $\phi$ be a stationary configuration of links, i.e. it is a stationary marked point-process on $\mathbf{S}$ with its marks in $\mathbf{S}$. Then $\phi$ is said to be clustered if for all bounded, positive, non-increasing functions $f(\cdot) : \mathbb{R}^{+} \rightarrow \mathbb{R}^{+}$, the following inequality holds
\begin{equation}
\mathbb{E}^{0}_{\phi}[F(0,\phi)] \geq \mathbb{E}[F(0,\phi)]  ,
\label{eqn:thm_clustering}
\end{equation}
where $F$ is the shot-noise defined as follows. For any atom (receiver) $x \in \phi$ with its corresponding mark (transmitter) $y \in \mathbf{S}$, the shot noise $F(x,\phi) := \sum_{T \in \phi^{Tx} \setminus\{y\}}f(||T-x||)$.
\label{defn:clustering}
 \end{definition}
 
%The following theorem is the main structural characterization of the steady-state point process $\phi_0$. 
 
 \begin{theorem}
If the dynamics $\phi_t$ is ergodic, then the steady state point process $\phi_0$ is clustered.
\label{thm:clustering}
 \end{theorem}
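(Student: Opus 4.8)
\emph{Plan of proof.} The heuristic is that a typical \emph{alive} link is size‑biased toward links that have lived long, hence toward links lying in congested neighbourhoods, so it should experience more interference‑type shot noise than a generic point of $\mathbf{S}$; the proof makes this precise by combining the attractiveness (monotonicity) of the dynamics with a Palm inversion formula. First I would record that the dynamics is \emph{attractive}: the birth rate $\lambda\,dx$ does not depend on the configuration, whereas the per‑link death rate $\frac{1}{L}R(x,\phi)$ is non‑increasing in $\phi$, since adding links raises interference and lowers the Shannon rate (the monotonicity of $R$ noted after~(\ref{eqn:random_rate})). Because $\mathcal N_0>0$ and $l\le1$, one has $R(x,\phi)\le C\log_2\!\big(1+l(T)/\mathcal N_0\big)=:R_{\max}<\infty$, so each link's death clock has rate at most $\Lambda:=R_{\max}/L$. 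This lets one realize all evolutions of interest on a common probability space via a Harris‑type graphical construction: the Poisson birth process $\mathcal A$ on $\mathbf{S}\times\mathbb{R}$ together with, for each born link, an independent rate‑$\Lambda$ clock and i.i.d.\ uniform marks selecting the real deaths. In this construction ordered initial conditions stay ordered for all later times, and — what I will use — the stationary trajectory $(\phi_t)_{t\in\mathbb{R}}$ of Theorem~\ref{thm:sufficient_condn} (with $\phi_0$ its steady‑state value at time $0$) is an increasing functional of $\mathcal A$ and of the ``no‑death'' uniform marks.

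Next, using the refined Campbell formula for $\mathcal A$ (intensity $\lambda$) and for the receiver process of $\phi_0$ (intensity $\beta=\lambda W$), together with time‑ and space‑stationarity, I would establish the Palm inversion identity
\begin{equation}
\mathbb{E}^{0}_{\phi_0}\!\big[h(\phi_0)\big]=\frac{1}{W}\,\mathbb{E}^{0}_{\mathcal A}\!\Big[\int_0^{d_q}h(\psi_t)\,dt\Big],
\label{eqn:plan_palm}
\end{equation}
valid for all non‑negative $h$, where $q$ is the Palm point of $\mathcal A$ at the space‑time origin (receiver at $0$, $b_q=0$, $\mathbb{E}^0_{\mathcal A}[d_q]=W$) and $(\psi_t)$ is the dynamics driven by $\mathcal A$ and $q$. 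By Slivnyak's theorem, under $\mathbb{E}^0_{\mathcal A}$ this is just the stationary dynamics with one extra link inserted at the origin at time $0$: $\psi_{0^-}=\phi_{0^-}$, and for $t\in[0,d_q)$ the configuration $\psi_t$ consists of $q$ plus the ordinary links, whose sojourns have only been lengthened by $q$'s extra interference. Fix now a bounded non‑negative $f$ and its shot noise $F$. Coupling $(\psi_t)_{t\ge0}$ with the unperturbed stationary dynamics $(\phi_t)_{t\ge0}$ as above, attractiveness gives $\psi_t^{Tx}\setminus\{y_q\}\supseteq\phi_t^{Tx}$, hence $F(0,\psi_t)\ge F(0,\phi_t)$ on $[0,d_q)$, while $\mathbb{E}[F(0,\phi_t)]=\mathbb{E}[F(0,\phi_0)]=\beta\!\int_{\mathbf{S}}f(\|u\|)\,du$ by Campbell and stationarity. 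Substituting $h=F(0,\cdot)$ into~(\ref{eqn:plan_palm}), using $F(0,\psi_t)\ge F(0,\phi_t)$, and interchanging expectation and time integral (legitimate by non‑negativity),
\begin{align}
\mathbb{E}^{0}_{\phi_0}\!\big[F(0,\phi_0)\big]
&\ge \frac{1}{W}\int_0^\infty \mathbb{E}\!\big[\mathbf{1}\{d_q>t\}\,F(0,\phi_t)\big]\,dt \nonumber\\
&= \frac{1}{W}\int_0^\infty \Big(\mathrm{Cov}\big(\mathbf{1}\{d_q>t\},\,F(0,\phi_t)\big)+\mathbb{E}[\mathbf{1}\{d_q>t\}]\,\mathbb{E}[F(0,\phi_0)]\Big)\,dt .\nonumber
\end{align}
Since $\int_0^\infty\mathbb{E}[\mathbf{1}\{d_q>t\}]\,dt=\mathbb{E}[d_q]=W$, the desired inequality~(\ref{eqn:thm_clustering}) for $\phi_0$ follows once one shows $\mathrm{Cov}\big(\mathbf{1}\{d_q>t\},F(0,\phi_t)\big)\ge0$ for every $t\ge0$.

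For that last inequality I would go back to the graphical construction. Fixing the transmitter location $y_q$ (on which $F(0,\phi_t)$ does not depend), both $F(0,\phi_t)$ and the event $\{d_q>t\}=\{L_q>\int_0^t R(0,\psi_s)\,ds\}$ (with $L_q$ exponential of mean $L$, independent) are increasing functionals of $\mathcal A$ and of the no‑death uniform marks: enlarging the configuration increases $F(0,\phi_t)$ since $f\ge0$, and it increases the interference $I(0,\psi_s)$ seen by $q$, thereby lowering $q$'s rate and making $\{d_q>t\}$ more likely. The FKG/Harris inequality then yields a non‑negative conditional covariance, and since $F(0,\phi_t)$ is independent of $y_q$ the unconditional covariance is non‑negative as well. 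The step I expect to be the main obstacle is precisely this bundle of measure‑theoretic bookkeeping: building a \emph{single} monotone graphical representation on which the stationary dynamics, the perturbed dynamics $(\psi_t)$ and the lifetime $d_q$ are simultaneously well defined, so that both the attractive coupling and the FKG argument apply; and justifying~(\ref{eqn:plan_palm}) rigorously in the continuum, including the passage — flagged in Section~\ref{sec:assumptions} — from the Palm calculus of the torus to that of $\mathbb{R}^2$. These coupling techniques are close in spirit to those of~\cite{baccelli_p2p}.
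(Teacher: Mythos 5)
Your proposal is correct in outline, but it takes a genuinely different route from the paper. The paper proves Theorem \ref{thm:clustering} in a few lines by applying Miyazawa's rate-conservation law to the aggregate shot-noise process $\mathcal{B}_t=\sum_{x\in\phi_t^{Rx}}B_t(x)$ (in the same spirit as the proof of Theorem \ref{thm:necessary_condn}), identifying the stochastic intensity of the death epochs via Papangelou's theorem, and then invoking the association inequality $\mathbb{E}^{0}_{\phi_0}[R(0,\phi_0)B_0(0)]\le\mathbb{E}^{0}_{\phi_0}[R(0,\phi_0)]\,\mathbb{E}^{0}_{\phi_0}[B_0(0)]$ (inequality (\ref{eqn:RCL_arb_func_3})) together with the workload balance $\lambda L=\beta\,\mathbb{E}^{0}_{\phi_0}[R(0,\phi_0)]$ from Equation (\ref{eqn:proof1_RCL_bits2}). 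You instead go through a space--time Palm exchange formula (your inversion identity does follow from the Mecke formula for $\mathcal{A}$, stationarity, Slivnyak and $\beta=\lambda W$), an attractive coupling of the perturbed trajectory $(\psi_t)$ with the stationary one, and an FKG argument on the driving marked Poisson noise giving $\mathrm{Cov}\bigl(\mathbf{1}\{d_q>t\},F(0,\phi_t)\bigr)\ge0$; the bookkeeping then closes exactly as you indicate. Both proofs hinge on one positive-correlation step, but they package it differently: the paper places it directly under the Palm measure of $\phi_0$ and justifies it by the opposite monotonicity of $R(0,\phi_0)$ and $B_0(0)$ in the configuration, which keeps the proof very short and free of any coupling or graphical construction, whereas you shift the correlation statement onto the driving Poisson randomness, where Harris--FKG for increasing functionals of (marked) Poisson processes is a standard, fully checkable tool. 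A by-product of your route is that monotonicity of $f$ is never used, so you would obtain inequality (\ref{eqn:thm_clustering}) for all bounded non-negative $f$, a mild strengthening of Definition \ref{defn:clustering}; the price is the heavier machinery you yourself flag (constructing the stationary trajectory as a monotone factor of the driving noise, defining $(\phi_t)$, $(\psi_t)$ and $d_q$ on one monotone representation, and the rigorous exchange formula). One simplification: you do not need a uniformized Harris construction with rate-$\Lambda$ clocks and thinning marks; sharing the arrival process and the file sizes $L_p$ and using the pathwise definition (\ref{eqn:dynamics_defn}) already yields $\phi_t\subseteq\psi_t$ deterministically, and the same shared randomness supports the FKG step.
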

% \begin{proof}
% The proof is presented in Appendix  \ref{proof_clustering} due to space constraints. 
% \end{proof}
% 

By substituting $f(\cdot) = l(\cdot)$ in Equation (\ref{eqn:thm_clustering}) , we get that the mean of the interference measured at any uniformly randomly chosen receiver in the steady-state point process (this is the interpretation of the Palm probability) is larger than the mean of the interference measured at any uniformly randomly chosen location of space in $\mathbf{S}$.
\\

To  understand  why the above definition is a form of clustering, consider the case $T=0$ which gives a clearer picture.  In this case, Theorem (\ref{thm:clustering}) gives a clustering comparison of $\phi_0$ with a Poisson Point Process (PPP) of same intensity. Let $\psi$ be a PPP of the same  intensity as $\phi_0$. Then, from Slivnyak's theorem (Theorem $1.4.5$, \cite{Baccelli_sg_book}), one can rewrite the inequality in (\ref{eqn:thm_clustering}) as 
\begin{align}
\mathbb{E}^{0}_{\phi_0}[F(0,\phi_0)] \geq \mathbb{E}^{0}_{\psi}[F(0,\phi_0)],
\end{align}
where $\mathbb{E}^{0}_{\psi}[F(0,\phi_0)] = \mathbb{E}[F(0,\psi)]$ follows from Slivnyak's theorem which  is equal to $\beta \int_{x \in \mathbf{S}}f(||x||)dx$ from Campbell's Theorem (Theorem $1.4.3$, \cite{Baccelli_sg_book}). Slivnyak's theorem essentially gives that the PPP has no clustering i.e. the Inequality  \ref{eqn:thm_clustering}  is an equality. Hence, we automatically have a shot noise comparison of the steady state point process $\phi_0$ with a PPP. 
\\

The comparison with a PPP also gives us a comparison of the Ripley K-function \cite{spatial_stats} of $\phi_0$ with that of a PPP. The Ripley K-function $K_{\phi}(\cdot) : \mathbb{R}^{+} \rightarrow \mathbb{R}^{+}$ of a point-process $\phi$ is defined as $K_{\phi}(r) = \frac{1}{\beta}\mathbb{E}^{0}_{\phi}[ \phi(B(0,r))-1] $ where $\beta$ is the intensity of $\phi$ and $\mathbb{E}^{0}_{\phi}$ is the Palm probability measure of $\phi$. This function  can be interpretted as  the mean number of points (scaled by the intensity of the point-process) within distance $r$ to the origin \emph{conditioned} on a point of $\phi$ to be present at the origin. The Ripley K-function is commonly used in statistical analysis of point-patterns to identify if an empirical data-set exhibits statistical clustering \cite{spatial_stats}. Based on the shot-noise comparison with a PPP, we have the following corollary.
\begin{corollary}
Assume $\phi_t$ is in steady-state and $T=0$. Denote by $\beta$ to be the intensity of $\phi_0$ and $\psi$ to be a PPP on $\mathbf{S}$ with intensity $\beta$. Then, $K_{\phi_0}(r) \geq K_{\psi}(r)$.
%\begin{align}
%K_{\phi_0}(r) \geq K_{\psi}(r).
%\end{align}
\label{cor:ripley}
\end{corollary}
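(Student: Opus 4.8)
The plan is to derive the corollary directly from Theorem \ref{thm:clustering} by taking the test function $f$ in Definition \ref{defn:clustering} to be (an approximation of) the indicator of a ball, and then recognizing each side of the clustering inequality as a Ripley $K$-function.

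First, recall that when $T=0$ the transmitter and receiver of each link coincide, so $\phi_0^{Tx}=\phi_0^{Rx}=\phi_0$ as a simple point process, and for an atom $x\in\phi_0$ the shot noise of Definition \ref{defn:clustering} reads $F(x,\phi_0)=\sum_{T\in\phi_0\setminus\{x\}}f(||T-x||)$. Fix $r>0$ and take $f=\mathbf{1}_{B(0,r)}$, i.e. $f(u)=\mathbf{1}_{\{u\le r\}}$, which is bounded, nonnegative and non-increasing. Under the Palm measure $\mathbb{E}^{0}_{\phi_0}$ there is a.s. an atom at the origin, hence $F(0,\phi_0)=\#\{T\in\phi_0\setminus\{0\}:||T||\le r\}=\phi_0(B(0,r))-1$, so $\mathbb{E}^{0}_{\phi_0}[F(0,\phi_0)]=\mathbb{E}^{0}_{\phi_0}[\phi_0(B(0,r))-1]=\beta\,K_{\phi_0}(r)$ by the definition of the Ripley function. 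Under the stationary law the origin is a.s. not an atom, so $F(0,\phi_0)=\sum_{T\in\phi_0}\mathbf{1}_{\{||T||\le r\}}=\phi_0(B(0,r))$, and Campbell's formula gives $\mathbb{E}[F(0,\phi_0)]=\beta\,|B(0,r)|$, where $|B(0,r)|$ denotes the area of the ball of radius $r$ inside $\mathbf{S}$. Theorem \ref{thm:clustering} then yields $\beta\,K_{\phi_0}(r)\ge\beta\,|B(0,r)|$.

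Second, I compute the same quantity for the PPP $\psi$ of intensity $\beta$: by Slivnyak's theorem $\mathbb{E}^{0}_{\psi}[\psi(B(0,r))-1]=\mathbb{E}[\psi(B(0,r))]$, and Campbell's formula gives $\mathbb{E}[\psi(B(0,r))]=\beta\,|B(0,r)|$, so $K_{\psi}(r)=|B(0,r)|$. Combining with the previous step gives $K_{\phi_0}(r)\ge|B(0,r)|=K_{\psi}(r)$, which is the claim.

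The only point requiring a little care, and I do not expect it to be a genuine obstacle, is that $f=\mathbf{1}_{B(0,r)}$ vanishes on $(r,\infty)$; if Theorem \ref{thm:clustering} is read as demanding a strictly positive $f$, one first applies it to the strictly positive, non-increasing, bounded functions $f_n(u)=\mathbf{1}_{\{u\le r\}}+e^{-n(u-r)}\mathbf{1}_{\{u>r\}}$, which decrease pointwise to $\mathbf{1}_{B(0,r)}$, and then passes to the limit. Dominated convergence applies on both sides, since the shot noise built from $f_1$ has finite mean under both the Palm and the stationary laws (its mean is at most $\beta(|\mathbf{S}|+\int_{\mathbf{S}}e^{-(||x||-r)}dx)<\infty$ because $\phi_0$ places finitely many points on the compact torus), so the inequality is preserved in the limit. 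Everything else is a direct substitution into the statement of Theorem \ref{thm:clustering} together with Slivnyak's and Campbell's theorems.
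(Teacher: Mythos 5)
Your proposal is correct and follows essentially the same route as the paper, which simply plugs $f(u)=\mathbf{1}(u\leq r)$ into Theorem \ref{thm:clustering} and identifies the two sides of the clustering inequality via the definition of the Ripley $K$-function together with Slivnyak's and Campbell's theorems. Your extra approximation step for strict positivity of $f$ is harmless but unnecessary, since the paper admits the indicator directly.
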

\begin{proof}
Consider $f(x) = \mathbf{1}(x \leq r)$ in Theorem \ref{thm:clustering}. 
\end{proof}
We will use Ripley K-function in the simulations to compare the point process $\phi_0$ with a PPP to derive a bound on the intensity $\beta$ of $\phi_0$ as a function of $\lambda$, $L$ and  $l(\cdot)$. 
\\

%Corollary \ref{cor:ripley} proves that 
%
%
% Theorem \ref{thm:clustering} in this regard implies that $\phi_0$ is indeed more clustered than a PPP which leads to a new and interesting class of clustered point-process. 
%\\

Intuitively, it is not surprising to expect a clustered point-process in steady state. An arriving link gets lower rate if it is in a crowded area of transmitters, due to interference. This arriving link also causes more interference to the cluster of links already present thereby causing more interference and slowing everyone down. This reinforcement of service slowdown is actually the fundamental reason making the system always unstable in the power law attenuation function case. More generally, when $\phi_t$ is sampled in steady-state, it is expected to be clustered as formalized by Theorem \ref{thm:clustering}. A snapshot of the point-process $\phi_0$ is presented in Figure \ref{fig:clustering_points} which gives a visual illustration of the clustering.

\section{Performance Analysis - Steady State Formulas}
\label{sec:perf_bound}

In this section, we propose two heuristic formulas for $\beta$ the intensity of the point process $\phi_0$ as a function of $\lambda$. Note that a heuristic formula for $\beta$   gives a heuristic formula for mean delay $W$ through Little's Law  ($\beta = \lambda W$).
\\

 We propose two formulas -  $\beta_f$ called the \emph {Poisson Heuristic} and  $\beta_s$ called \emph{Second-Order heuristic} to approximate $\beta$ the intensity of the steady-state point process $\phi_0$. We show that subject to a natural conjecture (Conjecture \ref{conjecture_1}),  $\beta_f$ is a lower bound on $\beta$. We see from simulations however that $\beta_s$ is a much better approximation of  $\beta$ compared to $\beta_f$.  Both  formulas are derived based on  approximately evaluating  the following Equation which we establish in Equation (\ref{eqn:proof1_RCL_bits2}) in the Appendix.
\begin{align}
\lambda L = \beta \mathbb{E}^{0}_{\phi_0}\left[ \log_2 \left( 1 + \frac{l(T)}{\mathcal{N}_0 + I(0,\phi_0)}\right)\right].
\label{eqn:simulation_exact}
\end{align}

%We propose a \emph{Poisson heuristic} $\beta_f$ which is a lower bound on $\beta$ subject to Conjecture \ref{conjecture_1}. We also propose another approximation $\hat{\beta}$ which we call a \emph{second-order heuristic} based on an approximation of the second-order moment measure of $\phi_0$. We see from simulations that  $\hat{\beta}$  is a much better approximation compared to the bound $\beta_f$.

\subsection*{The Poisson Heuristic}

The Poisson heuristic formula $\beta_f$ is given by the largest solution to the following fixed point equation
\begin{equation}
\lambda L = \frac{\beta_f}{\ln(2)}\int_{z=0}^{\infty}\frac{e^{-\mathcal{N}_0 z}(1 - e^{-z l(T)})}{z} e^{- \beta_f \int_{x \in \mathbf{S}}  (1 - e^{-zl(||x||)} ) dx} dz.
\label{eqn:simulation_heuristic}
\end{equation}
This formula is obtained by approximating the expectation in  Equation (\ref{eqn:simulation_exact}) by assuming the following ``\emph{Independent Poisson heuristic}''.  We assume that $\phi_0$  is an independently marked Poisson-Point process with the transmitter locations of different receivers in $\phi_0$ being independent. Since the transmitter locations are assumed to be independent, the process $\phi_{0}^{Tx}$ will also be a PPP in this Poisson heuristic.  We state the following lemma without proof from \cite{Interference_Lemma} which is useful in computing the expectation under the Poisson assumption.

\begin{lemma}
Let $X,Y$ be non-negative and independent Random Variables. Then,
\begin{equation}
\mathbb{E}\left[\ln \left(1 + \frac{X}{Y+a}\right) \right] = \int_{z=0}^{\infty}\frac{e^{-az}}{z}(1 - \mathbb{E}[e^{-zX}])\mathbb{E}[e^{-zY}]dz. \nonumber
%\label{eqn:simulation_lemma}
\end{equation}
\label{lem:simulation_integral}
\end{lemma}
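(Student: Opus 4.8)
The plan is to reduce the identity to a Frullani-type integral representation of the logarithm and then invoke Tonelli's theorem. Writing $\ln\left(1+\frac{X}{Y+a}\right)=\ln(Y+a+X)-\ln(Y+a)$, I would start from the elementary fact that for reals $0<c\le b$,
\[
\ln b-\ln c=\int_c^b\frac{dt}{t}=\int_c^b\!\!\int_0^\infty e^{-tz}\,dz\,dt=\int_0^\infty\frac{e^{-cz}-e^{-bz}}{z}\,dz,
\]
where the middle interchange of integrals is justified by Tonelli since $e^{-tz}\ge 0$ on $(c,b)\times(0,\infty)$. This step requires $c>0$; in the intended application $c=Y+a$ with $a=\mathcal{N}_0>0$, so $c\ge a>0$ almost surely, and this is the only place where positivity of the additive constant enters.

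Next I would apply this pointwise (for almost every realization of $(X,Y)$) with $c=Y+a$ and $b=Y+a+X$, which gives
\[
\ln\left(1+\frac{X}{Y+a}\right)=\int_0^\infty\frac{e^{-(Y+a)z}-e^{-(Y+a+X)z}}{z}\,dz=\int_0^\infty\frac{e^{-az}}{z}\,e^{-zY}\left(1-e^{-zX}\right)dz.
\]
Taking expectations of both sides, the integrand $\frac{e^{-az}}{z}e^{-zY}(1-e^{-zX})$ is non-negative for every $z>0$ (because $e^{-zX}\le 1$), so Tonelli's theorem lets me swap $\mathbb{E}$ with $\int_0^\infty$ with no integrability assumption; the resulting equality then holds in $[0,\infty]$, hence in particular whether or not the left side is finite. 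Finally, independence of $X$ and $Y$ gives, for each fixed $z$, $\mathbb{E}[e^{-zY}(1-e^{-zX})]=\mathbb{E}[e^{-zY}]\left(1-\mathbb{E}[e^{-zX}]\right)$, and substituting this yields exactly
\[
\mathbb{E}\left[\ln\left(1+\frac{X}{Y+a}\right)\right]=\int_0^\infty\frac{e^{-az}}{z}\left(1-\mathbb{E}[e^{-zX}]\right)\mathbb{E}[e^{-zY}]\,dz.
\]

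There is no serious obstacle here: the argument is essentially bookkeeping around the Frullani identity. The only points that require a word of care are (i) the behaviour of the representation near $z=0$, where $1-e^{-zX}\sim zX$ so that $\frac1z(1-e^{-zX})\le X$ stays bounded, and near $z=\infty$, where the factor $e^{-az}$ with $a>0$ forces decay; and (ii) the hypothesis $a>0$ (more precisely $Y+a>0$ almost surely), which cannot be dropped, since otherwise $\ln(Y+a)$ — and hence the integral representation — diverges. With exponentially distributed $X$ and a deterministic link length (so $Y$ is a genuine shot-noise), both Laplace transforms $\mathbb{E}[e^{-zX}]$ and $\mathbb{E}[e^{-zY}]$ are finite for all $z\ge 0$, so the formula is directly usable in Equation (\ref{eqn:simulation_heuristic}).
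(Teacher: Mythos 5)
Your proof is correct: the Frullani-type representation $\ln b-\ln c=\int_0^\infty \frac{e^{-cz}-e^{-bz}}{z}\,dz$ applied pointwise with $c=Y+a$, $b=Y+a+X$, followed by Tonelli (legitimate because the integrand is non-negative, so the identity holds in $[0,\infty]$ with no integrability assumption) and factorization via independence, gives exactly the stated identity, and your caveat that one needs $a>0$ (here $a=\mathcal{N}_0>0$) is the right one, since the lemma as stated silently assumes $Y+a>0$ almost surely. Note that the paper itself offers no proof of this lemma --- it is quoted without proof from \cite{Interference_Lemma} --- and your argument is the standard derivation of that cited result, so there is no difference in approach to report.
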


We can then explicitly compute the expectation in Equation (\ref{eqn:simulation_exact}) by letting $X = l(T)$ to be deterministic and $Y = I(0,\phi_0)$ as follows

\begin{align}
\lambda L &= \beta_f \mathbb{E}^{0}_{\psi}\left[ \log_2 \left( 1 + \frac{l(T)}{\mathcal{N}_0 + I(0)}\right)\right]  \nonumber \\
&\stackrel{(a)}{=} \beta_f\mathbb{E}_{\psi}\left[ \log_2 \left( 1 + \frac{l(T)}{\mathcal{N}_0 + I(0)}\right)\right] \nonumber \\
& \stackrel{(b)}{=} \frac{\beta_f}{\ln(2)}\int_{z=0}^{\infty}\frac{e^{-\mathcal{N}_0 z}(1 - e^{-z l(T)})}{z} e^{- \beta_f \int_{x \in \mathbf{S}}  (1 - e^{-zl(||x||)} ) dx} dz, \nonumber
\end{align}
where $\psi$ is a Poisson Point Process on $\mathbf{S}$ with intensity $\beta_f$. The equality $(a)$ follows from Slivnyak's theorem and the equality $(b)$ follows from Lemma \ref{lem:simulation_integral} and the formula for the Laplace functional of a Poisson Point Process. The subscript $f$ refers to the computation of the density under this Poisson heuristic. This establishes the formula in Equation (\ref{eqn:simulation_heuristic}).

We now make the following conjecture on the higher-order moment measures of $\phi_0$, which we will  leverage to show that $\beta_f$ is a lower bound on $\beta$.

\begin{conjecture}
Let $\phi_0$ be the point process on $\mathbf{S}$ corresponding to the stationary distribution of $\phi_t$ with intensity $\beta$. Denote by $\psi$ to be an independently marked Poisson Point Process on $\mathbf{S}$ with intensity $\beta$. The mark of any atom $x$ of $\psi$ is a point $y$ drawn uniformly on the perimeter of a circle of radius $T$ around $x$. Then, for any $s>0$, we have $\mathbb{E}^{0}_{\phi_0}[e^{-s I(0;\phi_0)}] \leq \mathbb{E}^{0}_{\psi}[e^{-s I(0;\psi)}] $.
%\nonumber %and any path-loss function $l(\cdot)$ that is non-increasing, we have 
%\begin{equation}
%\mathbb{E}^{0}_{\phi_0}[e^{-s I(0;\phi_0)}] \leq \mathbb{E}^{0}_{\psi}[e^{-s I(0;\psi)}] \nonumber
%\end{equation}
%From Slivnyak's theorem we also have, \\ $\mathbb{E}^{0}_{\psi}[e^{-s I(0;\psi)}]  = \mathbb{E}_{\psi}[e^{-s I(0;\psi)}]$
%\begin{equation}
%\mathbb{E}^{0}_{\phi_0}[e^{-s I(0;\phi_0)}] \leq \mathbb{E}_{\psi}[e^{-s I(0;\psi)}]. \nonumber
%%\label{eqn:simulation_conjecture}
%\end{equation}
\label{conjecture_1}
\end{conjecture}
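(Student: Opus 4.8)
\emph{Proof proposal.} The first step is to put the right-hand side in closed form. Applying Slivnyak's theorem to the ground process $\psi^{Rx}$ together with the displacement theorem (each transmitter is an independent mark at distance $T$), the transmitters of $\psi$ other than the transmitter $y_0$ of a receiver conditioned to lie at the origin form a Poisson process of intensity $\beta$ that is independent of $y_0$, so that
\[
\mathbb{E}^{0}_{\psi}\big[e^{-s\,I(0;\psi)}\big]\;=\;\exp\!\Big(-\beta\!\int_{x\in\mathbf{S}}\!\big(1-e^{-s\,l(\|x\|)}\big)\,dx\Big)\;=:\;e^{-\beta\Lambda(s)},
\]
which is exactly the exponential factor appearing in the Poisson-heuristic equation (\ref{eqn:simulation_heuristic}). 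Thus the conjecture asserts that, for every $s>0$, the Palm Laplace transform of the interference under $\phi_0$ is dominated by its value under the Poisson surrogate of the \emph{same} intensity. This is strictly stronger than Theorem \ref{thm:clustering}: both sides of the claimed inequality equal $1$ at $s=0$, and differentiating at $s=0^{+}$ returns $\mathbb{E}^{0}_{\phi_0}[I(0;\phi_0)]\ge\beta\int_{\mathbf S}l(\|x\|)dx$, i.e. the $f=l$ instance of clustering. So one should read the conjecture as the ``Laplace-order'' strengthening of the first-moment comparison already established, and try to push the argument behind Theorem \ref{thm:clustering} up to the level of the whole generating functional.

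The natural vehicle is the \emph{sojourn-biased Palm formula}. From the stationarity of the construction one has, for any bounded $h$,
\[
\beta\,\mathbb{E}^{0}_{\phi_0}\!\big[h(\phi_0)\big]\;=\;\lambda\,\mathbb{E}^{0}_{\mathcal{A}}\!\left[\int_{0}^{\sigma}h\big(V_u\cup\{0\}\big)\,du\right],
\]
where, under the Palm law of the arrival process $\mathcal{A}$, a tagged link is born at the origin at time $0$ with file $L_0$, $\{V_u\}_{u\ge 0}$ is the configuration of the \emph{other} alive links at elapsed time $u$, and $\sigma=\inf\{t:\int_0^t R(0;V_u\cup\{0\})\,du\ge L_0\}$ is its sojourn time. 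Taking $h(\phi)=e^{-s I(0;\phi)}$ (adding the tagged receiver does not change the interference seen at $0$) and using Little's law $\beta=\lambda\,\mathbb{E}^{0}_{\mathcal{A}}[\sigma]$, the conjecture becomes the purely dynamical inequality
\[
\mathbb{E}^{0}_{\mathcal{A}}\!\left[\int_{0}^{\sigma}e^{-s\,I(0;V_u)}\,du\right]\;\le\;\mathbb{E}^{0}_{\mathcal{A}}[\sigma]\cdot e^{-\beta\Lambda(s)}.
\]
The intuition is a negative correlation intrinsic to the attractive dynamics: since $R(0;\cdot)$ is a non-increasing functional of the configuration, the tagged link lingers precisely when $\{V_u\}$ is crowded, so the small values of $e^{-s I(0;V_u)}$ receive the most $du$-mass; equivalently $\sigma$ and the $\{V_u\}$-path are co-monotone in the driving randomness, and the integrand is anti-monotone with $\sigma$. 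I would make this rigorous by (i) the monotone (attractive) coupling of $\phi_t$ — the same monotonicity of $R$ used in Theorems \ref{thm:necessary_condn}--\ref{thm:clustering} — to control how $\{V_u\}$ grows with $\sigma$, and (ii) an FKG/association inequality for the stationary space-time field, which follows from attractiveness of the generator via Harris' theorem, to bound the space-time covariance of $u\mapsto\mathbf 1\{u<\sigma\}$ against $u\mapsto e^{-s I(0;V_u)}$ and compare the resulting residual-sojourn-weighted average of $\mathbb{E}[e^{-sI(0;V_u)}]$ with $e^{-\beta\Lambda(s)}$.

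A complementary, more static route is a generating-functional expansion: writing $e^{-s I(0;\phi_0)}=\prod_{u\in\phi_0^{Tx}\setminus\{y_0\}}\big(1-g_s(\|u\|)\big)$ with $g_s(r)=1-e^{-s l(r)}\in[0,1)$ non-increasing, the conjecture is equivalent to $\sum_k\frac{(-1)^k}{k!}\int g_s^{\otimes k}\,\rho^{(k)}\le e^{-\beta\Lambda(s)}=\sum_k\frac{(-\beta)^k}{k!}\Lambda(s)^k$, where $\rho^{(k)}$ are the reduced Palm correlation functions of $\phi_0^{Tx}$ seen from a typical receiver. Theorem \ref{thm:clustering} already supplies $\rho^{(1)}\ge\beta$, and one would like $\rho^{(k)}\ge\beta^k$ at every order; but the $(-1)^k$ signs forbid a term-by-term comparison, so this line needs a Ruelle/cluster-expansion resummation (pairing consecutive orders, or a Kirkwood--Salsburg-type fixed-point argument) before any bound can be extracted.

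I expect the crux — and the reason this is posed as a conjecture rather than a theorem — to be that clustering acts on $\mathbb{E}^{0}_{\phi_0}[e^{-sI}]$ through two \emph{opposing} channels. It raises the local density of interferers around a typical receiver, which lowers $\mathbb{E}[e^{-sI}]$ in the direction of the claim; but it also makes the interference field more variable, and since $x\mapsto e^{-sx}$ is convex, extra variability would push $\mathbb{E}[e^{-sI}]$ \emph{upward}, against the claim (indeed, already at the non-Palm level, clustering of $\phi_0^{Tx}$ would by convexity raise $\mathbb{E}[e^{-sI(0;\phi_0)}]$ above the Poisson value $e^{-\beta\Lambda(s)}$). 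Every soft tool I know of — Jensen, directionally-convex or association comparisons, moment domination — controls only one of these channels, so a complete proof will need a quantitatively matched device: either a tight coupling of $\phi_t$ with the intensity-$\beta$ $M/M/\infty$ dynamics whose stationary law is $\psi$, or a sufficiently sharp lower bound on the Papangelou-type conditional intensity of $\phi_0$ to dominate the variability term. Producing such a device is the main obstacle.
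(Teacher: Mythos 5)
There is no proof to compare against: the statement you are addressing is posed in the paper explicitly as a \emph{conjecture} (Conjecture \ref{conjecture_1}), supported only by simulations (Figure \ref{fig:conjecture_1}) and by the heuristic picture that a typical receiver of $\phi_0$ sees roughly the same number of interferers as under $\psi$ but at closer range; the authors state outright that they are unable to prove it. Your proposal likewise does not prove it, and you say so yourself. What you do get right is the scaffolding: the closed form $\mathbb{E}^{0}_{\psi}[e^{-sI(0;\psi)}]=\exp(-\beta\int_{\mathbf S}(1-e^{-sl(\|x\|)})dx)$ via Slivnyak and the displacement theorem is correct and is indeed the exponential factor in Equation (\ref{eqn:simulation_heuristic}); the observation that the conjecture strictly strengthens the $f=l$ instance of Theorem \ref{thm:clustering} (equality at $s=0$, first-moment comparison from the derivative at $s=0^{+}$) matches the paper's own discussion; and your identification of the obstruction --- clustering simultaneously increases the local density of interferers (pushing $\mathbb{E}[e^{-sI}]$ down, in the direction of the claim) and the variability of $I$ (pushing it up, by convexity of $x\mapsto e^{-sx}$) --- is exactly the reason the paper gives, via its explicit two-point example, for why mean ordering does not imply Laplace-transform ordering.

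The genuine gap is that neither of your proposed routes is carried out, and neither is likely to close without a new idea. The sojourn-biased Palm reformulation is a legitimate exchange-formula identity, but the FKG/association step you invoke only controls the sign of a covariance between $\mathbf 1\{u<\sigma\}$ and $e^{-sI(0;V_u)}$; it does not produce the quantitative comparison with the Poisson surrogate \emph{of the same intensity} $\beta$, which is where the two opposing effects must be balanced (the biasing argument alone proves a comparison against a time-averaged, not intensity-matched, benchmark). Similarly, the cluster-expansion route founders exactly where you say: Theorem \ref{thm:clustering} gives $\rho^{(1)}\ge\beta$, nothing is known about $\rho^{(k)}$ for $k\ge 2$, and the alternating signs prevent term-by-term domination --- indeed the paper itself points to understanding higher-order moment measures of $\phi_0$ as open future work. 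So your submission should be read as a correct reduction of the problem plus an accurate diagnosis of why it is open, not as a proof; as such it is consistent with, but does not go beyond, the status of the statement in the paper.
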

Note that from Slivnyak's theorem we also have $\mathbb{E}^{0}_{\psi}[e^{-s I(0;\psi)}]  = \mathbb{E}_{\psi}[e^{-s I(0;\psi)}]$. This conjecture which is validated through simulations in Figure \ref{fig:conjecture_1}, is a slightly different statement on the structural characterization of $\phi_0$ than stated in Theorem \ref{thm:clustering}. This conjecture gives that the Laplace transform of the interference measured at a typical receiver of $\phi_0$ is larger than that measured at a typical receiver of an equivalent PPP. In general, whenever we have ordering of the mean, then we have ordering of the Laplace Transform only as $s \rightarrow 0$. This ordering for the Laplace transform as $s \rightarrow 0$ follows from Taylor's expansion that $e^{-sx} \approx 1 - sx$ as $s \rightarrow 0$.  However, in our case, we believe that the ordering on the Laplace transform holds for all $s \geq 0$ but we are unable to prove so. The intuition for this follows from the pictorial interpretation that there are roughly the same number of interfering transmitters around a typical receiver in $\phi_0$ and $\psi$ since the intensities of $\phi_0$ and $\psi$ are the same. However, the interfering transmitters are closer to the typical receiver in $\phi_0$ as compared to in $\psi$. This intuition follows from Corollary \ref{cor:ripley} where we had ordering of the Ripley-K function of $\phi_0$ and $\psi$. This pictorial interpretation then gives an intuition for the conjecture since, the interference is  the sum of attenuated powers from interfering transmitters where the attenuation is through a function that is non-increasing with distance. Thus, $I(0)$ is the sum of roughly the same number of terms in both $\phi_0$ and in $\psi$, but each of the terms are slightly larger in $\phi_0$ than in $\psi$. This interpretation can possibly be made rigorous in the asymptotic regime as  $\lambda \uparrow \lambda_c$ by alluding to certain concentration phenomenon. However, we see from simulations that this conjecture holds true for all regimes of $\lambda$. This conjecture is further substantiated in Figure (\ref{fig:perf_comparision}) which underpins Proposition \ref{proposition_1}. 
\\

The ordering of the mean does not always imply the ordering of Laplace transforms in general. As a very simple example consider two random variables $X$ and $Y$ where $X$ takes values $\{1,2,3,4\}$ with probabilities $\left\{ \frac{1}{6}, \frac{1}{3}, \frac{1}{6}, \frac{1}{3}   \right\}$ and $Y$ is deterministic and takes value of $2$. Here $\mathbb{E}[X] = \frac{8}{3}$ and $\mathbb{E}[Y] = 2$. However, for $s=1.1$, $\mathbb{E}[e^{-sX}] > \mathbb{E}[e^{-sY}]$. More generally if $\mathbb{E}[X] \geq \mathbb{E}[Y]$ but the higher order moments are ordered in the opposite direction, then one cannot expect an ordering on the Laplace-transform.

\begin{figure}
\centering
\includegraphics[scale=0.35]{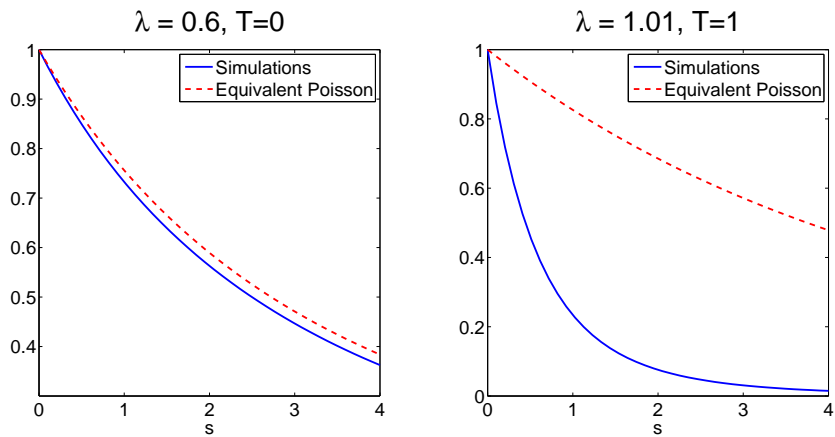}
\caption{A plot comparing the functions $\mathbb{E}^{0}_{\phi_0}[e^{-sI(0;\phi_0)}]$ and $\mathbb{E}^{0}_{\psi}[e^{-sI(0;\psi)}]$, for $l(r) = (r+1)^{-4}$.}
\label{fig:conjecture_1}
\end{figure}
%
% This conjecture which is connected to the Laplace-order clustering comparisons of point-process as discussed in \cite{yogesh} is useful in deriving a lower bound on the performance $\beta$ and hence on the mean delay $W$. In the rest of this section, we describe a way of obtaining a  lower bound $\beta_f$ on $\beta$ exploiting the above conjecture. 
%\\
%This heuristic is in-fact motivated by comparing this dynamics to the behavior of an appropriate queuing system which will be made precise in the next section.

%We have the following proposition as a lower bound on $\beta$.
\begin{proposition}
Subject to Conjecture (\ref{conjecture_1}), we have that $\beta \geq \beta_f$, where $\beta_f$ is the largest solution of  Equation (\ref{eqn:simulation_heuristic}).
\label{proposition_1}
\end{proposition}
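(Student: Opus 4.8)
The plan is to convert the exact steady-state identity~\eqref{eqn:simulation_exact} into an inequality for $\beta$ having exactly the algebraic form of the Poisson fixed-point equation~\eqref{eqn:simulation_heuristic}, and then to read off $\beta\ge\beta_f$ from the monotonicity of the map defining $\beta_f$. Let $g(b)$, for $b\ge 0$, denote the quantity on the right-hand side of~\eqref{eqn:simulation_heuristic} with $\beta_f$ replaced by the free variable $b$; by the computation carried out just before Conjecture~\ref{conjecture_1} one has $g(b)=b\,\mathbb{E}^{0}_{\psi_b}\!\big[\log_2\!\big(1+l(T)/(\mathcal{N}_0+I(0,\psi_b))\big)\big]$ for $\psi_b$ an independently marked PPP of intensity $b$, and $\beta_f$ is by definition the largest $b$ with $g(b)=\lambda L$. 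First I would start from~\eqref{eqn:simulation_exact} and apply Lemma~\ref{lem:simulation_integral} with $X=l(T)$ deterministic, $Y=I(0,\phi_0)$ and the additive constant in the lemma taken to be $\mathcal{N}_0$, giving
\[
\mathbb{E}^{0}_{\phi_0}\!\Big[\log_2\!\Big(1+\tfrac{l(T)}{\mathcal{N}_0+I(0,\phi_0)}\Big)\Big]=\frac{1}{\ln 2}\int_{0}^{\infty}\frac{e^{-\mathcal{N}_0 z}\,(1-e^{-z l(T)})}{z}\;\mathbb{E}^{0}_{\phi_0}\big[e^{-z I(0,\phi_0)}\big]\,dz .
\]
The kernel $z^{-1}e^{-\mathcal{N}_0 z}(1-e^{-z l(T)})$ is nonnegative, so the integral is monotone in the function $z\mapsto\mathbb{E}^{0}_{\phi_0}[e^{-z I(0,\phi_0)}]$; invoking Conjecture~\ref{conjecture_1} with $\psi$ the independently marked PPP of the \emph{same} intensity $\beta$ as $\phi_0$, and then Slivnyak's theorem together with the Laplace functional of a PPP, bounds the displayed expectation above by $g(\beta)/\beta$. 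Substituting into~\eqref{eqn:simulation_exact} yields $\lambda L\le g(\beta)$.

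The second and central step is the qualitative behaviour of $g$. It is continuous on $[0,\infty)$ with $g(0)=0$ (where $\mathcal{N}_0>0$ is used), and the decisive --- and most delicate --- property is that $g$ is strictly increasing. Differentiating under the integral sign (legitimate because $l$ is bounded and $\mathcal{N}_0>0$, so all integrands are dominated) gives
\[
g'(b)=\frac{1}{\ln 2}\int_{0}^{\infty}\frac{e^{-\mathcal{N}_0 z}\,(1-e^{-z l(T)})}{z}\,\big(1-b\,\kappa(z)\big)\,e^{-b\,\kappa(z)}\,dz ,\qquad \kappa(z):=\int_{\mathbf{S}}\big(1-e^{-z l(||x||)}\big)\,dx .
\]
After the substitution $v=b\,\kappa(z)$ the integrand becomes $W_b(v)\,(1-v)\,e^{-v}$, integrated over $v\in(0,b\,\kappa(\infty))$ with a nonnegative weight $W_b$; since $\int_0^{M}(1-v)e^{-v}\,dv=Me^{-M}\ge 0$ for every $M\ge0$, rearranging the integrand against the reference value $W_b(1)$ shows $g'(b)\ge 0$ as soon as $W_b$ is non-increasing in $v$, i.e. as soon as $z\mapsto z^{-1}e^{-\mathcal{N}_0 z}(1-e^{-z l(T)})/\kappa'(z)$ is non-increasing. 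This last monotonicity I would establish from the standing assumptions that $l$ is bounded and non-increasing with $l(0)=1$. Granting it, $g$ is a strictly increasing, hence injective, continuous function, so $g(b)=\lambda L$ has at most one root; since $\beta_f$ is by hypothesis the largest root, it is that unique root.

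Finally, $\beta$ is finite almost surely (in the ergodic regime the stationary configuration puts only finitely many links on the bounded torus $\mathbf{S}$), so from $g(\beta)\ge\lambda L=g(\beta_f)$ and the strict monotonicity of $g$ we obtain $\beta\ge\beta_f$, which is the assertion of Proposition~\ref{proposition_1}. The only genuinely analytic input beyond Lemma~\ref{lem:simulation_integral} and Conjecture~\ref{conjecture_1} is the monotonicity of $g$ (equivalently, the sign of $g'$), and I expect that to be the main obstacle. It is also exactly where the ``largest solution'' clause earns its place: if $g$ failed to be globally increasing for some exotic pair $(l(\cdot),\mathcal{N}_0)$, the argument would still go through provided only that $\{b\ge0:\ g(b)\ge\lambda L\}$ is a half-line $[\beta_f,\infty)$ --- i.e. that $g$ does not dip back below $\lambda L$ to the left of $\beta_f$ --- which is visibly the case in all the numerical instances we consider (cf. Figures~\ref{fig:conjecture_1} and~\ref{fig:perf_comparision}).
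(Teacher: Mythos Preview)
Your approach is essentially the paper's: obtain $p(\beta)\ge\lambda L$ from Conjecture~\ref{conjecture_1} via Lemma~\ref{lem:simulation_integral}, then compare with $p(\beta_f)=\lambda L$ through a monotonicity argument. The difference lies only in \emph{which} monotonicity is invoked. The paper introduces two functions, $g(\beta)=\beta\,\mathbb{E}^0_{\phi_0}[R(0;\phi_0)]$ for the true dynamics and $p(\beta)=\beta\,\mathbb{E}^0_{\psi}[R(0;\psi)]$ for the PPP, and asserts (without proof, appealing to the physics of the system) that $g$ is non-decreasing; then $g(\beta_f)\le p(\beta_f)=\lambda L=g(\beta)$ yields $\beta_f\le\beta$. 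This requires that every $\beta'\le\beta$ actually arises as a steady-state density for some $\lambda'$, a continuity/surjectivity of the map $\lambda\mapsto\beta$ that the paper does not establish either.

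You instead try to prove analytically that the explicit Poisson functional (your $g$, the paper's $p$) is strictly increasing. This is cleaner in that it does not require reasoning about a whole family of dynamical systems indexed by $\lambda$, and your change of variables $v=b\,\kappa(z)$ together with $\int_0^M(1-v)e^{-v}\,dv=Me^{-M}\ge0$ is a nice reduction. However, the residual claim---that $z\mapsto z^{-1}e^{-\mathcal{N}_0 z}(1-e^{-zl(T)})/\kappa'(z)$ is non-increasing---is not obvious from the standing assumptions on $l$ alone (both numerator and $\kappa'(z)$ vanish as $z\to\infty$, so the ratio is delicate), and you rightly flag it as the main obstacle. Your fallback, that it suffices for $\{b:p(b)\ge\lambda L\}$ to be the half-line $[\beta_f,\infty)$, is exactly the minimal hypothesis needed, and is consistent with the paper's choice to define $\beta_f$ as the \emph{largest} root. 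In short: same skeleton, and your treatment of the monotonicity step is more explicit than the paper's, though neither closes it completely.
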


\begin{proof}
Let $g(\beta) = \beta \mathbb{E}^{0}_{\phi_0}[R(0;\phi_0)]$ (where $\phi_0$ has intensity $\beta$) and let $p(\beta) = \beta \mathbb{E}^{0}_{\psi}[R(0;\psi)]$ where $\psi$ is a PPP on $\mathbf{S}$ with intensity $\beta$. Rate-conservation equation (\ref{eqn:simulation_exact}) gives that $\lambda L = g(\beta)$ and our heuristic computation is $\lambda L = p(\beta_f)$. From our conjecture and Lemma \ref{lem:simulation_integral}, we have the inequality $g(\beta) \leq p(\beta)$. The function $g(\beta) = \beta \mathbb{E}^{0}_{\phi_0}[R(0;\phi_0)]$ is monotone non-decreasing in $\beta$ as it describes the true dynamics through the equation $\lambda L = g(\beta)$. The monotonicity of $g(\cdot)$ along with the inequality $g(\beta) \leq p(\beta)$ gives the performance bound $\beta \geq \beta_f$.
\end{proof} 

 Proposition \ref{proposition_1}  gives that $\beta_f|\mathbf{S}|$ is a lower bound on the mean number of links present in the network in steady state and $\frac{\beta_f}{\lambda}$, as a lower bound on mean-delay of a typical link. 
\\

The Poisson heuristic completely ignores the spatial clustering we established in Theorem \ref{thm:clustering} and assumes complete-spatial randomness. Since it does not account for the clustering  it underestimates the typical interference seen at a receiver and therefore predicts a lower density of links. We see through simulations, that this heuristic is poor (i.e. the gap between $\beta$ and $\beta_f$ is large) in certain traffic regimes (Figure \ref{fig:perf_comparision}). This is not surprising as one cannot neglect the effect of spatial correlations  except in  asymptotic regimes of heavy and light-traffic (detailed later). Motivated by the poor performance of the Poisson heuristic in certain regimes,  we propose a ``second-order heuristic'' $\beta_s$ which takes into account the spatial correlations by considering an approximation of the second-order moment measure of $\phi_0$. We see through simulations (Figure \ref{fig:perf_comparision}) that $\beta_s$ is a much better approximation of $\beta$ than $\beta_f$ in all traffic regimes. 

%We validate this performance bound through simulations in Figure  \ref{fig:perf_comparision}. 
%The bound is not so surprising, since a PPP  underestimates the interference seen at a typical link as it assumes complete spatial independence and does not consider spatial correlations due to clustering. An interesting observation here is that $\beta_f$ does not depend on the link-distance $T$ and is a bound on $\beta$ for all values of $T$. 

\begin{figure} 
\centering 
\includegraphics[scale=0.5]{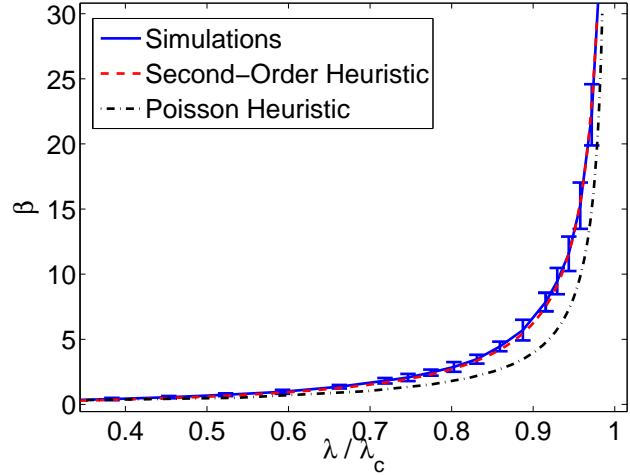}
\caption{The performance plot with $95\%$ confidence interval when $T=0$ and $l(r) = (r+1)^{-4}$. }  
\label{fig:perf_comparision}
\end{figure}

\subsection*{Second-Order Heuristic}

We propose a  heuristic formula $\beta_s$  for approximating $\beta$ in Equation (\ref{eqn:perf_heuristic}). For all values of $T$,  $\beta_s$  is given by
\begin{equation}
\beta_s = \frac{\lambda L}{C \log_2 \left(1 + \frac{l(T)}{\mathcal{N}_0 + {I_{s}} } \right)},
\label{eqn:perf_heuristic}
\end{equation}
where ${I_{s}}$ is the smallest solution of the fixed-point equation
\begin{equation}
{I_{s}} = \lambda L\int_{x \in \mathbf{S}} \frac{ l(||x||)}{C\log_2 \left( 1 + \frac{l(T)}{\mathcal{N}_0 + {I_{s}} + l(||x||)} \right)} dx.
\label{eqn:fixed_point}
\end{equation}

We call the heuristic in Equation (\ref{eqn:perf_heuristic}) a \emph{second-order heuristic} since it follows from an approximation of the second-order moment measure of $\phi_0$ as follows. Let ${I_{s}}$ denote the  interference of a typical point at $\phi_0$ and \emph{assume} it is non-random and equal to its mean. Then, Equation (\ref{eqn:perf_heuristic}) follows from Rate-Conservation  in Equation (\ref{eqn:simulation_exact}). To compute $I_s$, we  use the following approximation of the second order moment measure $\rho^{(2)}(x,y)$ of $\phi_0$ as 
\begin{equation}
\rho^{(2)} (x,y) \approx \frac{\beta \lambda L}{   C \log_2 \left( 1 + \frac{l(T)}{\mathcal{N}_0 + {I_s} + l(||x-y||)} \right)}.
\label{eqn:moment_measure_approx}
\end{equation}

Intuitively, the approximation is a form of cavity approximation which can be understood as follows. Two points at locations $x$ and $y$ will each ``see'' an interference of $I_s$ which is the interference of a typical point plus the additional interference caused by the presence of the other point. Using the above interpretation of interference, Equation (\ref{eqn:moment_measure_approx}) is a form of Rate-Conservation on the pair of points at $x$ and $y$. The average increase of the pair happens at rate $2 \lambda \beta$ and the average decrease of the pair happens at the rate equal to the sum of rates (since file-sizes are i.i.d. exponential) received by points $x$ and $y$ which is approximately $2(C/L) \log_2 \left( 1 + \frac{l(T)}{\mathcal{N}_0 + I_s + l(||x-y||)} \right)$ from the cavity approximation. Now, using the fact that $\mathbb{E}^{0}_{\phi_0}[I_0] := I_s = \frac{1}{\beta} \int_{x \in \mathbf{S}} l(||x||)\rho^{(2)}(x,0) dx$, we get Equation (\ref{eqn:fixed_point}) from Equation (\ref{eqn:moment_measure_approx}).
\\

The heuristic $\beta_s$  is compared against the true $\beta$ and the Poisson heuristic $\beta_f$ in Figure \ref{fig:perf_comparision}. The second-order heuristic performs much better compared to the Poisson-heuristic as it takes into account some notion of spatial correlations which the Poisson heuristic completely ignores.

\section{Simulation Studies}

\label{sec:simulations}

%In this section, we perform simulation studies to provide more insight into the dynamics which leads to newer  research questions for the future. First, we look at the tightness of the Poisson heuristic bound $\beta_f$ we obtained in equation (\ref{eqn:simulation_heuristic}) to identify regimes when the bound is good and when the bound is poor. We also study the tails of the delay of a typical point and explore if there is a heuristic to understand the tails similar to the Poisson heuristic $\beta_f$ for the mean density. We observe that the behavior of the delay tails is very different from that of an equivalent processor sharing queue which one can obtain from a simple `spatial fluid' approximation. This tells us that there is no rough `back-of-the envelope' computations that can be done for the delay tails.  We then explore through simulations, the behavior of the dynamics under heavy-tailed distribution on file sizes which is a more practical assumption than exponentially distributed file sizes. We finally conclude by remarking on the interesting phenomenon observed and motivate some future mathematical research directions.  
%\\

We perform simulations to gain a finer understanding of our model. We see that the bound in Proposition \ref{proposition_1} is tight in the two asymptotic regimes of light and heavy-traffic where the effect of spatial correlations vanishes.
We also argue that, in these two asymptotic regimes, the heuristic $\beta_s$ is ``close'' to $\beta_f$ thereby implying that  $\beta_s$ is also a good approximation to $\beta$. As noticed in Figure \ref{fig:perf_comparision}, $\beta_f$ is a poor approximation to $\beta$ compared to $\beta_s$ in the intermediate traffic-regime which we further highlight in this section. 
\\
%However, we notice  that in the intermediate  regime  $\beta_f$ is poor while $\beta_s$ is a better approximation of $\beta$ as seen in Figure \ref{fig:perf_comparision}.

To  explore the impact of spatial correlations, we study the tails of delay of a typical link and correlation between delays of different links in space. We observe that our model exhibits marked difference in terms of tail delay behavior from that of an equivalent queuing system which is obtained by a ``spatial-fluid'' approximation. We conclude from these studies that although our model resembles that of a queue (for e.g. the dynamics satisfies Little's Law), there are significant differences due to the spatial correlations, which in hindsight is not so surprising. We finally perform simulations with heavy-tailed file size distribution  and observe qualitatively the same phenomena as seen under exponential file-size distribution. We state our simulation results as claims which are not formal conjectures, but are  meant to provide a starting point for future research.

\subsection{Simulation Setup}

The path-loss function we consider  is $l(r) = (r+1)^{-4}$. Although all of the results qualitatively hold for any bounded-non-increasing function, we choose this power law function due to its wide-spread popularity in modeling wireless propagation. We assume unit link-length $T=1$  unless otherwise mentioned. We however note that all the qualitative results  carry over for any value of $T$ including the case of $T=0$. The pictures of point-process and the Ripley K-functions we test are those corresponding to the receivers.% i.e. the ground point process of $\phi_0$. 

\subsection{Tightness of $\mathbf{\beta_f}$}

We  study the bound in Proposition \ref{proposition_1} by empirically noticing how much $K_{\phi_0}$, the Ripley K-function of $\phi_0$, deviates from that of an equivalent PPP denoted by $K_{\mathrm{PPP}}$. The two Ripley K-functions being almost identical implies that the steady-state is ``almost'' Poisson and thereby the bound in Proposition \ref{proposition_1} is good. On the other hand, if there is significant deviation between the two Ripley K-function, then the bound is poor. We know from Corollary \ref{cor:ripley} that $K_{\phi_0}(r) \geq K_{\mathrm{PPP}}(r)$ for all $r \geq 0$. Here, we are interested in seeing how large  this difference can be.
\\

To plot the Ripley K-functions, we simulated the Markov chain $\phi_t$ in forward time for a long time to obtain a single sample of the steady-state $\phi_0$. We used the Spatstat package in R \cite{Spatstat} to perform spatial statistics and plot $K_{\phi_0}$. A single sample is sufficient as we take a large enough space (i.e. large $\mathbf{S}$) so that a single sample of $\phi_0$ has about $500$ points. Due to spatial ergodicity of $\phi_0$, we get a smooth estimate of the K-function from a single sample.
\\

 We observe  in Figures \ref{fig:clustering_1}, \ref{fig:clustering_2} and \ref{fig:clustering_3}, that the functions $K_{\phi_0}$ and $K_{\mathrm{PPP}}$ are very close in  heavy and light traffic and are very different in intermediate traffic. The heavy traffic  corresponds to the scenario when $\lambda$ is very close to the critical $\lambda_c$ and the light traffic  corresponds to the case when $\lambda$ is very ``close'' to $0$. We do not rigorously demarcate the exact space-time scaling needed to define the two asymptotic limiting regimes as it is beyond the scope of this paper. 
 
\begin{claim}
$\phi_0$ is almost Poisson in light-traffic. Moreover, the delay of a typical link converges weakly to an exponential distribution with mean $L \log_2 \left( 1 + \frac{1}{N_0}\right)^{-1}$ as $\lambda \rightarrow 0$. 
\label{claim_low}
\end{claim}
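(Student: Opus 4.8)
The plan is to control $\phi_t$ in light traffic by sandwiching it between two non-spatial processes, and then read off both assertions. I would set $r_0:=C\log_2\!\big(1+l(T)/\mathcal N_0\big)$, the Shannon rate of a link alone in the network (for $T=0$ this is $C\log_2(1+\mathcal N_0^{-1})$), and $\mu_0:=r_0/L$, so that an isolated link's sojourn is $\mathrm{Exp}(\mu_0)$, of mean $L/r_0$ --- the constant in the statement. Driven by the common arrival process $\mathcal A$ and the common files $\{L_p\}$, I would couple $\phi_t$ pathwise so that $\xi_t\subseteq\phi_t\subseteq\zeta_t$, where $\xi_t$ is the interference-free spatial $M/M/\infty$ process in which every link stays for exactly $L_p/r_0$, and $\zeta_t$ is the ``worst-case'' process obtained by replacing the interference at every receiver by $(|\zeta_u|-1)\,l(0)$; the inclusions follow from the monotonicity of the rate function noted in Section~\ref{subsec:data_rate}. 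In stationarity $\xi_0$ is a Poisson point process of intensity $\lambda L/r_0$ with i.i.d.\ uniform transmitter marks, and the occupancy of $\zeta_t$ is a birth--death chain with birth rate $\lambda|\mathbf S|$ (positive recurrent for $\lambda$ small); both are light-traffic trivial, with mean occupancy vanishing as $\lambda\downarrow0$.

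\emph{Almost Poisson.} From the sandwich, $\mathbb P(\phi_0\neq\xi_0)\le\mathbb P(\zeta_0\neq\xi_0)$, and under the coupling a link can survive in $\zeta$ past its $\xi$-departure epoch only if $\zeta$ ever carried two or more simultaneous links, an event of stationary probability $O(\lambda^2)$. Hence $\mathbb P(\phi_0\neq\xi_0)\to 0$ and, by the coupling inequality, $d_{\mathrm{TV}}\big(\mathcal L(\phi_0),\ \mathrm{PPP}(\lambda L/r_0)\big)\to 0$ as $\lambda\downarrow0$: the steady state is asymptotically Poisson.

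\emph{The typical delay.} Since $\mathbb E[\phi_0(\mathbf S)]\le\mathbb E[\zeta_0(\mathbf S)]=O(\lambda)$, the network is empty with probability tending to $1$. By PASTA a typical arriving link, which I place at time $0$ with file $L_0\sim\mathrm{Exp}(L)$, sees the network in state $\phi_0$; and by Slivnyak's theorem the other points of $\mathcal A$ form a Poisson process independent of $L_0$. I would work on the event
\[
E_\lambda \;:=\; \{\phi_{0^-}=\emptyset\}\ \cap\ \{\mathcal A\setminus\{0\}\text{ has no point in }\mathbf S\times(0,\,L_0/r_0]\},
\]
on which this link is the only one present throughout $[0,L_0/r_0]$, so it receives rate $r_0$ the whole time and departs exactly at $L_0/r_0$; thus its delay equals $L_0/r_0\sim\mathrm{Exp}(\mu_0)$ on $E_\lambda$. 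Then $\mathbb P(E_\lambda^{\,c})\le\mathbb P(\phi_0\neq\emptyset)+\mathbb E\big[1-e^{-\lambda|\mathbf S|L_0/r_0}\big]\le o(1)+\lambda|\mathbf S|L/r_0\to 0$, so for any bounded continuous $g$ one gets $\big|\mathbb E[g(d_0-b_0)]-\mathbb E[g(L_0/r_0)]\big|\le 2\|g\|_\infty\,\mathbb P(E_\lambda^{\,c})\to 0$, which is exactly the claimed weak convergence of the delay to $\mathrm{Exp}(\mu_0)$.

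\emph{Main obstacle.} The genuinely subtle point is that ``$\phi_0$ is almost Poisson'' is not a single precise statement. The coupling delivers total-variation closeness to a Poisson process cleanly, but the paper diagnoses Poissonness through the Ripley $K$-function, which is a Palm (conditional) quantity; although two coexisting links become vanishingly rare, \emph{given} that they coexist they still slow each other down by a factor bounded away from $1$, so the limiting $K$-function need not coincide with $\pi r^2$ and is close to it only when interference is weak. I would therefore prove the total-variation version here and keep the $K$-function comparison as the empirical companion to the delay result. A secondary, purely technical task --- routine given the pathwise construction of $\phi_t$ developed in the Appendix --- is to make the two dominations $\xi_t\subseteq\phi_t\subseteq\zeta_t$ and the light-traffic estimates on $\zeta$ rigorous at the level of sample paths.
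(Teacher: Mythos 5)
The paper does not actually prove Claim~\ref{claim_low}: it is stated as a simulation-backed claim, supported only by the heuristic that in light traffic the interference is negligible, so the dynamics resembles a spatial $M/M/\infty$ queue with Poisson stationary law and per-link rate $\approx C\log_2(1+1/\mathcal N_0)$, together with the empirical Ripley $K$-function evidence; the paper explicitly defers any rigorous formulation. Your proposal therefore goes beyond the paper, and the route you choose is the natural rigorization of the same intuition: the sandwich $\xi_t\subseteq\phi_t\subseteq\zeta_t$ is exactly the pair of bounding systems ($M/M/\infty$ below, all-interference $M/M/1$-GPS above) already mentioned in the paper's introduction. The lower inclusion is immediate since $R(x,\phi)\le r_0$ pointwise, while the upper inclusion needs the same monotone coupling used in the domination lemma inside the proof of Theorem~\ref{thm:sufficient_condn} (replace $l(\|x-u\|)$ by $l(0)$ rather than by $l_\epsilon$), so it is indeed available with the paper's machinery. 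Granting that, your delay argument (PASTA, Slivnyak, and the event $E_\lambda$ on which the tagged link is alone throughout its service) is sound and yields weak convergence of the typical delay to an exponential of mean $L/r_0$ with $r_0=C\log_2(1+l(T)/\mathcal N_0)$; note this is actually more careful than the claim's constant, which drops $C$ and implicitly takes $l(T)=1$. The $O(\lambda^2)$ overlap estimate needs a little care because a link in $\zeta$ can be old when the queue is large (per-link rate decays like $1/N$), but large queues are exponentially rare in light traffic, so this is routine as you say.

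The one substantive caveat is the first half of the claim. Your total-variation statement is trivially true: both $\phi_0$ and the reference PPP of intensity $\lambda L/r_0$ converge to the empty configuration as $\lambda\downarrow 0$, so $d_{\mathrm{TV}}\to 0$ carries no information about the sense in which the paper means ``almost Poisson,'' namely that clustering vanishes --- a Palm, second-order statement ($K_{\phi_0}$ close to the $K$-function of a PPP of the same intensity, cf.\ Corollary~\ref{cor:ripley} and Conjecture~\ref{conjecture_1}). Your coupling does not control the Palm law: conditioned on two links coexisting (which is precisely the conditioning behind $K$), they slow each other down by a factor bounded away from $1$, so establishing $K_{\phi_0}(r)/K_{\mathrm{PPP}}(r)\to 1$ would require a genuine second-order estimate (e.g.\ on $\rho^{(2)}$ under an appropriate light-traffic scaling), not a union bound. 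You identify this obstacle honestly; just be aware that, as written, the ``almost Poisson'' half is proved only in a vacuous sense, which is weaker than what the paper's heuristic and Figure~\ref{fig:clustering_3} are meant to convey, while the delay half of the claim is fully handled by your argument.
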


In the light-traffic regime, $\lambda$ is very ``small'' compared to $L$, and thereby  $\beta$ is also ``small''. This then implies that the distribution for the interference $I(0,\phi_0)$ is close to $0$, thereby making the interaction between the points almost negligible. The rate-function can then be approximated as $R(x,\phi) \approx \log_2 \left( 1 + \frac{1}{N_0}\right)$ and the dynamics   resembles that of a spatial $M/M/\infty$ queue whose stationary spatial distribution is  a PPP.  The intensity $\beta$ in this  regime is $\beta_{l} = \lambda \left(\log_2 \left( 1 + \frac{1}{N_0}\right)\right)^{-1}$. The subscript $l$ refers to the density computation in the interaction-less approximation. Figure \ref{fig:clustering_3} provides numerical evidence that  $\phi_0$ exhibits very little clustering in this regime and is ``close'' to a PPP. 
%
%
% In-fact, it is not hard to see that $\beta_{l} \leq \beta_f \leq \beta$, i.e. $\beta_l$ is a bound on $\beta$ which we conjecture to be asymptotically tight.

\begin{claim}
In the heavy-traffic regime, $\phi_0$ is almost Poisson, i.e. the effect of clustering vanishes as $\lambda \rightarrow \lambda_c$. 
\label{claim_heavy}
\end{claim}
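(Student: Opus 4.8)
The plan is to make rigorous the informal picture behind the claim: near criticality the network is so crowded that the interference field becomes essentially spatially constant, which decouples the per-link death rates and turns the dynamics into an exchangeable one whose stationary spatial law, given the number of links, is binomial — and a binomial process with many points is locally Poisson. \textbf{Step 1 (heavy traffic $\Leftrightarrow \beta\to\infty$).} First I would show that the stationary intensity $\beta=\beta(\lambda)$ diverges as $\lambda\uparrow\lambda_c=\frac{Cl(T)}{L\ln(2)a}$. Because the rate function is monotone in the configuration (a property the paper stresses), the birth-death process is attractive, so a monotone coupling — subset coupling of the arrivals and the uniformized ``death-attempt'' coupling for departures, which works precisely because a smaller configuration gives each of its links a \emph{larger} death rate — makes $\phi_t$, hence $\beta(\lambda)$, nondecreasing in $\lambda$; thus $\beta(\lambda)\uparrow\beta^\star:=\sup_{\lambda<\lambda_c}\beta(\lambda)\in(0,\infty]$. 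To rule out $\beta^\star<\infty$ I would revisit the Lyapunov/fluid argument underlying Theorem~\ref{thm:sufficient_condn}: the drift of the relevant Lyapunov function is negative with slope proportional to $(\lambda_c-\lambda)$, which pins the stationary mean number of links $\beta|\mathbf{S}|$ at order $1/(\lambda_c-\lambda)$ and hence forces divergence; equivalently one argues the count chain $\phi_t(\mathbf{S})$ is null recurrent at $\lambda=\lambda_c$, which together with monotonicity and tightness on the compact torus is incompatible with a finite $\beta^\star$. This is the one place where the delicate critical case the paper sets aside genuinely enters.

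\textbf{Step 2 (mean-field reduction).} Next I would show that, conditionally on $\{\phi_0(\mathbf{S})=n\}$ with $n\to\infty$, the interference concentrates, $\sup_{x}\big|I(x,\phi_0)-\tfrac{n}{|\mathbf{S}|}a\big|=o(n)$ with high probability, so that every alive link dies at the common rate $\tfrac{C}{L}\log_2\!\big(1+\tfrac{l(T)}{\mathcal{N}_0+na/|\mathbf{S}|}\big)(1+o(1))$. The natural route is a fixed-point/bootstrap argument over laws of marked point processes, or a stochastic coupling with the explicit \emph{mean-field} birth-death process in which births are uniform at rate $\lambda|\mathbf{S}|$ and each link dies at a rate that depends on the state only through $\bar I:=\tfrac{\phi_t(\mathbf{S})}{|\mathbf{S}|}a$: if the true interference stays within $o(n)$ of $\bar I$, then deaths are asymptotically uniform over links \emph{and} over space, so the configuration given $\phi_0(\mathbf{S})=n$ approaches a binomial process of $n$ i.i.d.\ uniform receivers; and for a binomial process $I(x)=\sum l$ does concentrate around $na/|\mathbf{S}|$ with $O(\sqrt{n})$ fluctuations by a Bernstein bound, closing the loop. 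Boundedness of $l$ and $a<\infty$ are exactly the ingredients this step needs (and this is also where the power-law case, $a=\infty$, would break — consistent with the Corollary to Theorem~\ref{thm:necessary_condn}).

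\textbf{Step 3 (binomial $\Rightarrow$ Poisson, and the precise meaning).} In the mean-field limit the stationary law of $\phi_0$ given $\phi_0(\mathbf{S})=n$ is exactly a binomial process: $n$ i.i.d.\ uniform receivers and — since a uniform point on the torus convolved with the uniform law on a circle of radius $T$ is again uniform on the torus — i.i.d.\ uniform transmitters as well, because arrivals carry i.i.d.\ uniform marks and departures are uniformly chosen. For such a process the Ripley $K$-function equals $\tfrac{n-1}{n}\pi r^2$, the pair-correlation equals $\tfrac{n-1}{n}$, and the Palm-to-stationary shot-noise ratio $\mathbb{E}^{0}_{\phi_0}[F(0,\phi_0)]/\mathbb{E}[F(0,\phi_0)]$ equals $\tfrac{n-1}{n}$, all converging to the Poisson values $\pi r^2$, $1$, $1$ as $n\to\infty$, which happens because $\beta\to\infty$ by Step~1. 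This is the sense in which ``clustering vanishes'': the spatial law of $\phi_0$ conditioned on the number of links — precisely the object estimated by the single-sample $K$-function in the simulations, and the quantity compared in Corollary~\ref{cor:ripley} and Theorem~\ref{thm:clustering} — converges locally to a Poisson process, with the $K$-function deviating from Poisson by $O(1/n)=O(\lambda_c-\lambda)$. I would emphasize the caveat that the \emph{unconditioned} count $\phi_0(\mathbf{S})$ stays over-dispersed here (near $\lambda_c$ it behaves like an $M/M/1$ queue length, so its variance is of order $\mathbb{E}[\phi_0(\mathbf{S})]^2$), whence $\mathbb{E}[\phi_0(\mathbf{S})(\phi_0(\mathbf{S})-1)]/\mathbb{E}[\phi_0(\mathbf{S})]^2\to 2$, not $1$: ``almost Poisson'' is a statement about the spatial arrangement given the number of links, not about temporal fluctuations of that number.

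The main obstacle is Step~2: controlling the two-way feedback between the random interference field and the death rates, uniformly over a configuration whose size is itself growing, and doing this near $\lambda_c$ where the chain is only marginally stable. Concentration of $I$ for a \emph{given} binomial configuration is routine; what is not routine is showing the dynamics actually keeps $\phi_0$ within an $o(1)$ — in fact $O(1/\sqrt{n})=O(\sqrt{\lambda_c-\lambda})$ — perturbation of binomial, i.e.\ that the strictly positive clustering guaranteed by Theorem~\ref{thm:clustering} at every subcritical $\lambda$ degrades to the Poisson spatial law as $\lambda\uparrow\lambda_c$.
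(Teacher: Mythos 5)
First, be aware that the paper does not prove this statement at all: it is one of the ``claims'' in the simulation section that the authors explicitly flag as \emph{not} formal conjectures, supported only by the empirical Ripley $K$-function comparisons (Figure \ref{fig:clustering_1}) and by a heuristic analogy with a heavily loaded $M/M/1$ PS queue; the authors state that making the claim rigorous, or even phrasing a precise mathematical conjecture, would require a space-time scaling akin to the diffusion scaling of PS queues and is beyond the paper's scope. So your proposal is being measured against intuition, not against an existing proof, and as a research outline it is broadly consistent with that intuition (note, though, that the paper's picture is ``interference dominated by quasi-static \emph{local} geometry,'' whereas yours is ``interference spatially constant $\approx \beta a$''; these are not the same mechanism). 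Your Step~3 refinement --- that ``almost Poisson'' should mean conditionally-binomial given $\phi_0(\mathbf{S})$, with the unconditioned count remaining over-dispersed --- is a sensible sharpening of the informal statement.

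However, as a proof the proposal has genuine gaps beyond the one you concede. In Step~1, the divergence $\beta(\lambda)\to\infty$ does not follow from the fluid/Lyapunov analysis behind Theorem \ref{thm:sufficient_condn}: negative drift of slope $O(\lambda_c-\lambda)$ yields \emph{upper} bounds on stationary means (Foster--Lyapunov), not the lower bound of order $1/(\lambda_c-\lambda)$ you assert; the only exact relation available, $\lambda L=\beta\,\mathbb{E}^0_{\phi_0}[R(0,\phi_0)]$ from Equation (\ref{eqn:proof1_RCL_bits2}), gives $\beta\geq \lambda L/(C\log_2(1+l(T)/\mathcal{N}_0))$, which stays bounded as $\lambda\uparrow\lambda_c$; and null recurrence at $\lambda=\lambda_c$ is established nowhere (the paper deliberately avoids the critical case). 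The asymptote at $\lambda_c$ holds for the heuristic $\beta_f$ of Equation (\ref{eqn:simulation_heuristic}), not, as far as anyone has shown, for the true $\beta$. In Step~2 the difficulty is not merely technical: the argument is circular (interference concentration presupposes a near-uniform spatial law, and the near-uniform law is deduced from concentration), and near $\lambda_c$ all per-link rates tend to $0$, so what matters for the spatial distribution is the \emph{ratio} of rates at clustered versus typical locations; Theorem \ref{thm:clustering} and Corollary \ref{cor:ripley} guarantee strictly positive clustering at every subcritical $\lambda$, and nothing in the proposal bounds the Palm interference excess as $o(\beta a)$ uniformly as $\lambda\uparrow\lambda_c$. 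Without such a quantitative estimate the bootstrap never closes, so the proposal should be read as a program (essentially the open problem the paper states) rather than a proof.
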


The intuition behind the heavy-traffic behavior is that as $\lambda$ approaches $\lambda_c$, the stationary distribution is very dense, i.e. $\beta$ is large. Hence, the interference of a typical arriving link is mainly dominated by the local geometry which does not change much during the life-time of  the typical link. This indicates that the dynamics behaves very similarly to a heavily loaded $M/M/1$ Processor Sharing (PS) queue and the correlation across space is negligible in this regime. Moreover, it is easy to see that $\lambda = \lambda_c$ is an asymptote for  Equation (\ref{eqn:simulation_heuristic}) i.e. as $\lambda \rightarrow \lambda_c$, $\beta_f \rightarrow \infty$.  This further strengthens the belief that the stationary distribution is close to Poisson in the heavy-traffic regime as it predicts the correct stability boundary.  Making this claim rigorous or  even just state a mathematical conjecture is quite challenging and  would require an appropriate scaling of space and time  similar to the diffusion scaling considered for a single server PS queue \cite{gromoll_diffusion}.

%\begin{figure*}
% \centering
%\subfloat[Comparison of $\SNR$ analysis with $\SINR$ simulation.]
%{\label{fig:validation1}{\includegraphics[width=0.5\columnwidth, trim = 0.9cm 0.3cm 1cm 0.6cm,clip]{Fig1}}}
%\subfloat[Comparison of per user rate]
%{\label{fig:validation2}{\includegraphics[width= 0.5\columnwidth, trim = 0.9cm 0.3cm 1cm 0.6cm,clip]{Fig2}}}
%\caption{Validation of $\SNR$ analysis in noise-limited scenario shows a tight match with the physical channel model simulations. Tradeoff between $\SINR$ and rate coverage is also shown with MU-MIMO.}
%\label{fig:validation}
%\end{figure*}

\begin{figure*}
\centering
\subfloat[$\lambda = 1.40$]{
%\centering
\includegraphics[scale=0.28]{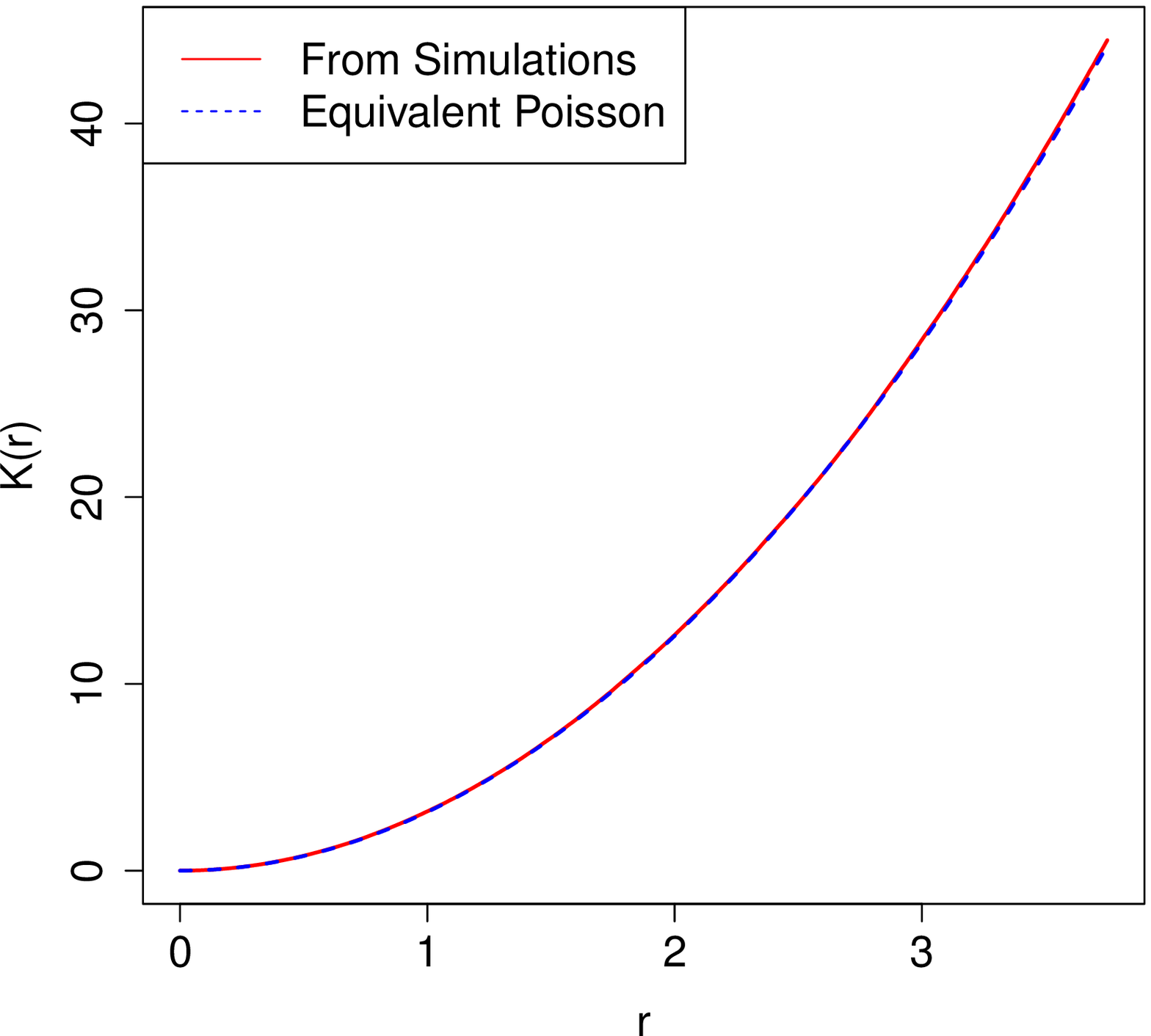}
%%\subcaption{Empirical Ripley's K-function when, $\lambda = 1.40$. The plot indicates the lack of clustering in $\phi_0$ in this heavy traffic regime}
%\subcaption{Heavy-Traffic $\lambda = 1.4$}
\label{fig:clustering_1}}
%\end{subfig}
\subfloat[ $\lambda = 0.99$]{
%\centering
\includegraphics[scale=0.28]{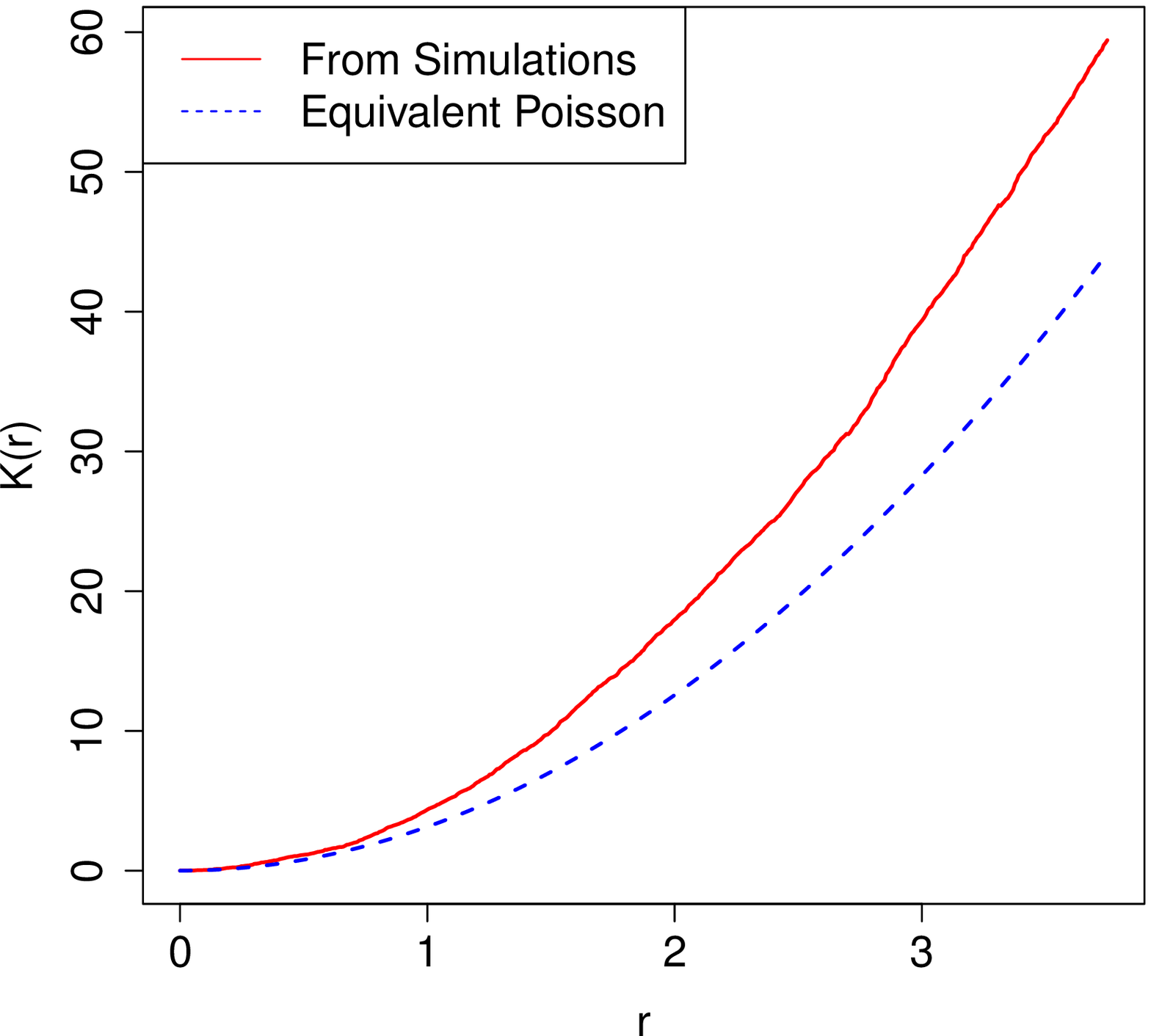}
%%\subcaption{Empirical Ripley's K-function $\lambda = 0.99$. The plot indicates that $\phi_0$ is clustered in this intermediate traffic regime.}
%\subcaption{Intermediate Traffic $\lambda=0.99$}
\label{fig:clustering_2}}
%\end{subfig}
\subfloat[$\lambda = 0.2$]{
%\centering
\includegraphics[scale=0.28]{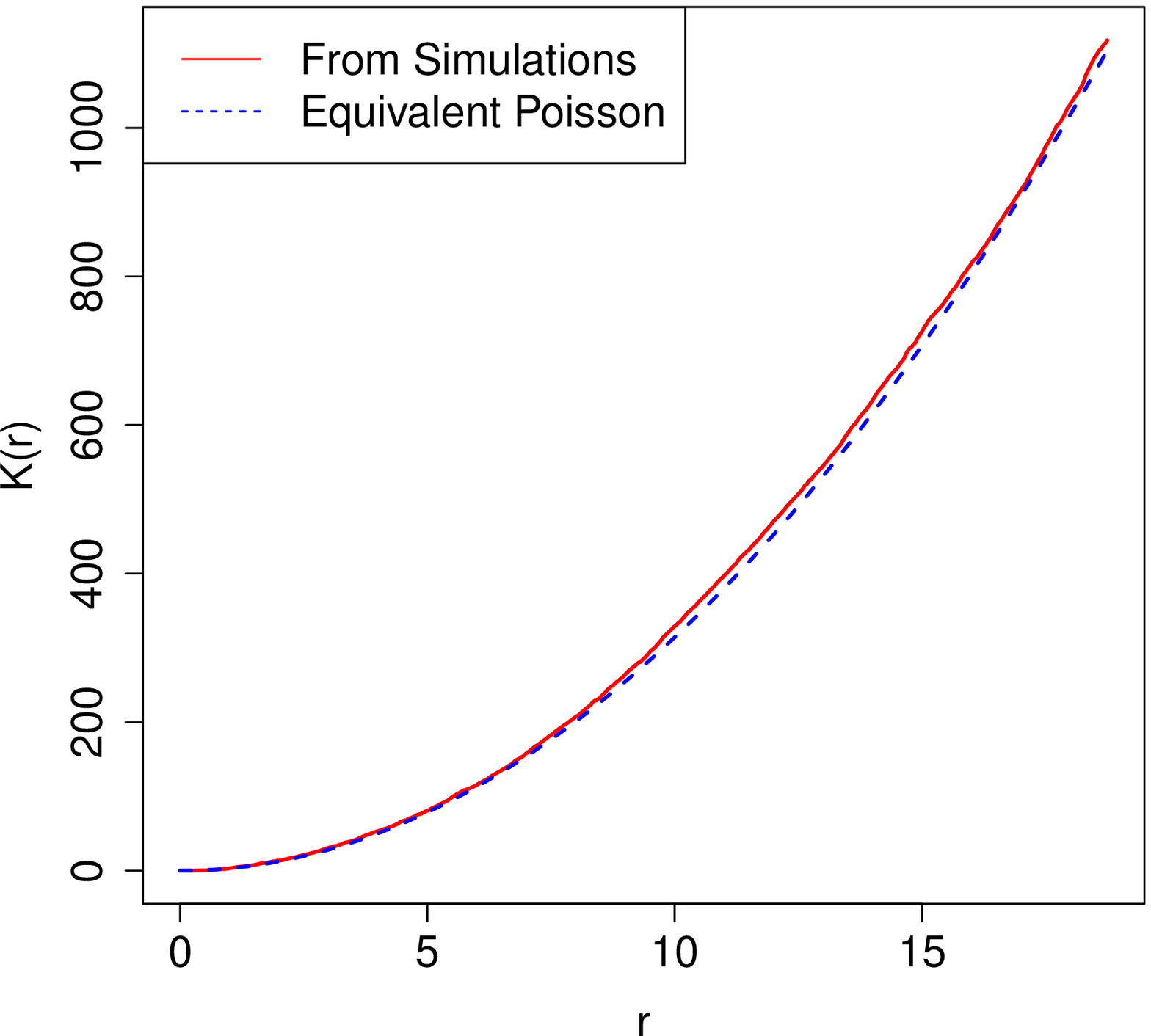}
%%\subcaption{Empirical Ripley's K-function when $\lambda = 0.2$. The plot indicates the lack of significant clustering of $\phi_0$ in this low traffic regime.}
%\subcaption{Light-Traffic $\lambda=0.2$}
\label{fig:clustering_3}}
%\end{subfig}
\label{fig:clustering_123}
\caption{Plot comparing the Empirical Ripley K-function $K_{\phi_0}$ with that of an equivalent PPP. The path loss function is $l(r) = (r+1)^{-4}$, $T=1$. The critical $\lambda_c = 1.42$. This shows that there is little clustering in the heavy and light traffic regimes but significant clustering in the intermediate regime.}
\end{figure*}

\subsection{Tightness of $\mathbf{\beta_s}$}
We argue here that in both the low and heavy-traffic regimes, the approximation $\beta_s$ is close to $\beta_f$ and is hence a good approximation of the true $\beta$. In low-traffic, as $\lambda \rightarrow 0$, the smallest solution of Equation (\ref{eqn:fixed_point}) tends to $0$ and hence the formula for $\beta_s \approx \frac{\lambda L}{C \log_2 \left(1 + \frac{1}{N_0 }  \right)}$. This from Claim \ref{claim_low} gives that $\beta_s$ and $\beta_f$ predict the same value in low-traffic. In high-traffic regime, as $\lambda \rightarrow \lambda_c$, the value if $I_s$ from Equation (\ref{eqn:fixed_point}) is very high. Thus, the second-order moment-measure approximation $\rho^{(2)}(x,y)$ in Equation (\ref{eqn:moment_measure_approx}) is almost constant i.e. does not depend of the actual values of $x$ and $y$ as $l(\cdot)$ is a bounded function. This implies that the effect of clustering vanishes in this heuristic and hence is close to $\beta_f$.

\subsection{Intermediate Clustered Regime}

In the intermediate regime, the Poisson approximation is poor and the steady state-point process is quite clustered (see Figures \ref{fig:clustering_points} and \ref{fig:clustering_2}) i.e. $K_{\phi_0}$ is much larger than $K_{\mathrm{PPP}}$. However, we see from Figure \ref{fig:perf_comparision} that the second-order heuristic $\beta_s$ performs much better than the Poisson heuristic in this regime as it takes into account some form of spatial-correlations atleast upto second-order moment measure of $\phi_0$. However, Figure \ref{fig:clustering_points} which shows a snapshot of $\phi_0$ which is a clustered process is very interesting as it indicates finer properties of higher order moment measures. One observes for instance ``\emph{filaments}'' of points which are locally directional in-spite of the fact that the dynamics is isotropic. Such behavior indicates that the higher order moment measures of $\phi_0$ (of order greater than $2$) may  have interesting  properties which we  capture neither in Theorem \ref{thm:clustering} nor in the second order  heuristic $\beta_s$. Understanding the higher order moment measure of $\phi_0$ can also aid in proposing a \emph{provably} better performance bound in this intermediate regime. Studying these higher order moment measures of $\phi_0$ will be a very interesting and challenging direction of research.

%Loosely speaking, one says a point-process exhibits $k$-order clustering if $\beta^{(k)} \geq \beta^{k}$ where $\beta$ is the intensity of the point-process and $\beta^{(k)}$ is the $k$th order factorial-moment measure density of the point process. The Ripley's K-function and Theorem \ref{thm:clustering} captures this inequality for $k=2$, the second order moment-measure relation.
%\\
%
%The dynamics seems to lead us to the emergence of a new class of point-process that exhibit strong clustering even for moment measures beyond order $2$. 
%
% The dynamics seems  rise to a completely different class of point-process that exhibits strong clustering. Understanding this new class of clustered point-process emerging from dynamics is interesting which we leave for future mathematical work.

\begin{figure}
\centering
\includegraphics[scale=0.4]{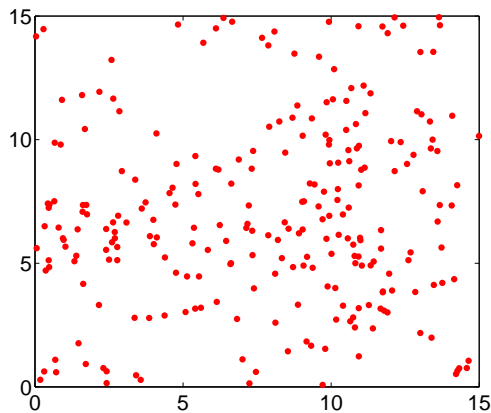}
\caption{A sample of $\phi_0$ when $\lambda = 0.99$ and $l(r) = (r+1)^{-4}$. This is a visual representation of the clustering of points.}
\label{fig:clustering_points}
\end{figure}

\subsection{Delay Tails}

To get a heuristic understanding of the delay tails, one would be tempted at first glance to approximate our model by an equivalent $M/M/1$  PS queue using a \emph{spatial-fluid} approximation that neglects randomness in space. We see through simulations that any approximation that neglects spatial interactions will predict much larger delays for a typical link than the true delays in our model. 

\begin{claim}
The delay tails in our model are exponential and have a faster decay than that of an equivalent $M/M/1$ PS queue obtained by a ``spatial-fluid'' approximation.
\end{claim}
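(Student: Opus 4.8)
\emph{Approach.} The statement is an empirical one, so the argument I would give pairs a pathwise stochastic-domination skeleton (which already yields the inequality at the level of tail probabilities) with a heuristic large-deviations computation identifying the two exponential rates. First I would fix the comparison object: the ``spatial-fluid'' queue is the ``worse'' system of the Introduction, obtained by setting $l(\|x-u\|)\equiv l(0)=1$ so that the interference seen by any receiver depends on $\phi_t$ only through $N_t=\phi_t(\mathbf S)$; the dynamics then collapses to a birth--death chain on $\mathbb N$ --- the single-server ($M/M/1$ generalized-processor-sharing / PS) queue of the Introduction. Since $l(0)=1\ge l(r)$ for every $r$, this substitution only raises each receiver's interference and, by the monotonicity $R(x,\phi_1)\ge R(x,\phi_2)$ for $\phi_1\subseteq\phi_2$ recorded in Section~\ref{subsec:data_rate}, only lowers every instantaneous rate along every sample path. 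A monotone coupling therefore yields $\phi_t(\mathbf S)\preceq_{\mathrm{st}}N^{\mathrm{PS}}_t$ and, on the stationary Palm versions, $d_0-b_0\preceq_{\mathrm{st}}V^{\mathrm{PS}}$, so $\mathbb P(d_0-b_0>t)\le\mathbb P(V^{\mathrm{PS}}>t)$ for all $t$.

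\emph{Exponential shape and the ordering of rates.} The $M/M/1$-PS sojourn time is known to have a purely exponential tail, $\mathbb P(V^{\mathrm{PS}}>t)\sim c\,t^{-5/6}e^{-\theta_{\mathrm{PS}}t}$ with $\theta_{\mathrm{PS}}=\mu(1-\sqrt\rho)^2$, where $\mu$ and $\rho=\lambda/\mu<1$ are the service rate and load of the comparison queue expressed through $\lambda,L,\mathcal N_0,C$. For the true model I would argue the tail is also exponential by conditioning on the Palm neighbourhood of the typical link: because $l$ is bounded, $a=\int_{\mathbf S}l(\|x\|)dx<\infty$, and $\phi_0$ is a.s.\ locally finite, with probability $1-o(1)$ the interference $I(0,\phi_u)$ stays in a bounded band throughout the typical link's lifetime, on which $R(0,\phi_u)\ge r_{\min}>0$; hence $d_0-b_0\le L_0/r_{\min}$ with $L_0\sim\mathrm{Exp}(1/L)$ there, giving an exponential upper tail, while a matching lower tail comes from the event that $L_0$ is large and the link is born into an essentially empty neighbourhood, where the rate is at most $C\log_2(1+l(T)/\mathcal N_0)$ --- exactly the light-traffic rate of Claim~\ref{claim_low}. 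Combined with $d_0-b_0\preceq_{\mathrm{st}}V^{\mathrm{PS}}$ this gives $\mathbb P(d_0-b_0>t)\asymp e^{-\theta_{\mathrm{spatial}}t}$ with $\theta_{\mathrm{spatial}}\ge\theta_{\mathrm{PS}}$, and the strict inequality $\theta_{\mathrm{spatial}}>\theta_{\mathrm{PS}}$ reflects the structural fact that in the spatial model only the $O(1)$ nearest transmitters matter, so the tagged rate never degrades to $0$ however many links accumulate globally, whereas in $M/M/1$-PS it is $\Theta(1/N^{\mathrm{PS}}_t)\to 0$ as the queue builds: a global large deviation of the link count costs the tagged delay nothing in the first system and everything in the second. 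One can sharpen the lower side by sandwiching $d_0-b_0$ also from below by the interference-free $M/M/\infty$ sojourn time, whose rate $C\log_2(1+l(T)/\mathcal N_0)/L$ strictly exceeds $\theta_{\mathrm{PS}}$.

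\emph{Main obstacle.} The delicate step is the rare-event control needed to turn the ``bounded interference band'' picture into a genuine exponential tail with the claimed rate: one must show that the probability that a large cluster forms around the typical link and persists for a macroscopic time decays at least as fast as $e^{-\theta_{\mathrm{PS}}t}$. This is a large-deviations statement about the Palm law of $\phi_0$, for which the paper supplies only first- and second-moment (clustering) information --- Theorem~\ref{thm:clustering} and Conjecture~\ref{conjecture_1} --- and even the finiteness of $\mathbb E^{0}_{\phi_0}[e^{sI(0,\phi_0)}]$ for all $s>0$ is open (see the discussion following Conjecture~\ref{conjecture_1}). A rigorous proof would require such exponential-moment bounds, most plausibly via the domination/coupling arguments used to prove Theorem~\ref{thm:sufficient_condn}, or a diffusion-type scaling of the kind invoked around Claim~\ref{claim_heavy}; absent that, the statement rests on the heuristic above together with the simulated delay-tail plots, which is why it is recorded as a Claim rather than a Theorem.
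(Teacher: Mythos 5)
The paper does not actually prove this statement: it is one of the simulation-backed claims of Section~\ref{sec:simulations} (``claims which are not formal conjectures''), supported by the empirical log-CCDF plots (Figures \ref{fig:tails_combined} and \ref{fig:tails_compare_overall}) and by the delay-decorrelation evidence of Figure \ref{fig:delay_corr}. So the relevant question is whether your sketch could upgrade that evidence, and here there is a concrete flaw: you identify the ``spatial-fluid'' $M/M/1$ PS queue with the ``worse'' system of the Introduction obtained by setting $l\equiv l(0)=1$. These are different objects. The comparison queue of the Claim, defined in the paragraph following it, has arrival rate $\lambda$, mean file $L$, and server capacity $\lambda_c=\frac{Cl(T)}{\ln(2)La}$ shared equally, hence the \emph{same} stability region as the spatial model and mean delay $\frac{L}{\lambda_c-\lambda}$. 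The $l\equiv1$ system is a PS queue with a much smaller capacity: each of the $N_t-1$ interferers contributes power $1$, so its total service rate saturates at $Cl(T)/\ln 2$ over all of $\mathbf{S}$ and its critical rate is of order $\frac{Cl(T)}{\ln(2)L|\mathbf{S}|}\ll\lambda_c$ (since $a\ll|\mathbf{S}|$ for the path-loss used in the simulations). The monotone coupling you invoke (the same device as in the lemma inside the proof of Theorem \ref{thm:sufficient_condn}) does give $\phi_t\subseteq\phi_t^{l\equiv1}$, but throughout essentially the whole regime $\lambda<\lambda_c$ that dominating system is unstable, so the bound $\mathbb{P}(d_0-b_0>t)\le\mathbb{P}(V>t)$ is vacuous, and in any case it says nothing about the $\lambda_c$-capacity queue the Claim refers to. For that queue no pathwise domination in the needed direction exists: in sparse configurations the spatial model's total departure rate is roughly $N_t\,C\log_2(1+l(T)/\mathcal{N}_0)/L$, far below the fluid queue's fixed capacity, so neither system dominates the other sample-path-wise --- which is precisely why the paper resorts to simulation.

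The second half of your argument is also weaker than it reads. The exponential shape of the spatial delay tail and the strict ordering of the decay rates hinge on controlling rare events in which a large cluster forms around the tagged link and persists; as you acknowledge, the paper provides only first- and second-order information (Theorem \ref{thm:clustering}, Conjecture \ref{conjecture_1}), and exponential moments of $I(0,\phi_0)$ under the Palm law are not available. Moreover, the structural assertion that ``the tagged rate never degrades to $0$ however many links accumulate'' is at odds with the clustering the paper establishes and exhibits (Figure \ref{fig:clustering_points}): arbitrarily many links can accumulate locally, driving $I(0,\phi_u)$ arbitrarily high and the rate arbitrarily close to $0$. So your write-up correctly diagnoses why this is a Claim rather than a Theorem, but the proposed domination skeleton compares against the wrong queue and would not, even granted the missing large-deviations input, yield the stated comparison with the spatial-fluid $M/M/1$ PS approximation.
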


%The equivalent queue for our spatial model can be constructed  as follows. Since we know the critical arrival rate $\lambda_c = \frac{C}{L\ln(2) a}$,
An equivalent $M/M/1$ PS queue approximation has the following parameters - arrival rate $\lambda$, service requirement of mean $L$ and service capacity of the server $\lambda_c$ which is split equally among all customers in the queue. Such a PS queuing model is equivalent to  a first-order approximation where the spatial randomness vanishes and a point in steady-state receives rate of $C\log_2 \left(1 + \frac{1}{N_0 + \beta a} \right)$ where $\beta$ is the density of points in  steady-state. Hence, the quantity $\frac{C}{\ln(2) a}$ (which is an upper bound on the total rate given to all points i.e. $C\beta\log_2 \left(1 + \frac{1}{N_0 + \beta a} \right) \leq \frac{C}{\ln(2) a}$)  can be seen as the maximum service capacity of the spectrum in $\mathbf{S}$ which is equally shared by all links accessing the spectrum. Another simple picture as to why the above $M/M/1$ PS queue is a simple heuristic is to observe that this queue corresponds to the scenario when one ignores spatial interactions among the arriving points and assumes that the total spectrum ``capacity'' of $\lambda_c$ is shared equally among all the links sharing the spectrum in $\mathbf{S}$.  Hence, the mean-delay under the $M/M/1$ - PS model for a typical point is $\frac{L}{\lambda_c - \lambda}$ and the stability criteria for this queue is the same as that for our spatial model. However, we note from simulations (Figure \ref{fig:tails_compare_overall}) that the delay tails predicted by the heuristic $M/M/1$ queue which completely ignores spatial interactions are much larger than those observed in our model. 
\\

The poor performance of the  queuing approximation can be  understood by studying the correlation between the delays of different links. In Figure \ref{fig:delay_corr}, we plot the correlation between the delay experienced by two links arriving at the \emph{same time} as a function of their distance. We consider the $T=0$ case and hence the distance between two links is just the distance between the two points. Numerically, we plotted Figure \ref{fig:delay_corr} by first sampling a steady-state point process (by running the Markov Chain $\phi_t$ for a long time) and then introducing two additional links to this sample with independent file-sizes. We then run the dynamics from this state until the two additional links die and then compute the correlation between their delays
\\  %The computation is done by repeating the above procedure  and calculating the empirical correlation.

We see from Figure \ref{fig:delay_corr} that as the distance between the two links increases, the delays of the two links are almost uncorrelated even though they arrive at the same time. This indicates that,  two links arriving at the same time will be almost oblivious to each other and will each roughly receive independent service if they arrive far enough apart in space. This is unlike in the $M/M/1$ - PS queue approximation where  two customers arriving at the same time have positively associated delays as both of them will be competing for the same spectrum resource. This suggests that the spatial heterogeneity is key in extracting more ``service'' from the spectrum than predicted by a model which considers spectrum as a fixed quantity of good to be divided among contending links.

\begin{figure}[h]
\centering
\includegraphics[scale = 0.35]{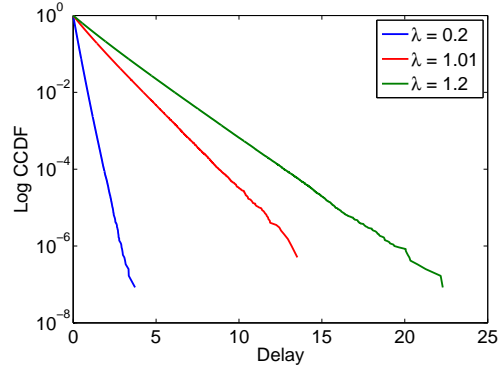}
\caption{ Plot of  logarithm of  CCDF of delay.}
\label{fig:tails_combined}
\end{figure}

\begin{figure}[h]
\centering
\includegraphics[scale=0.3]{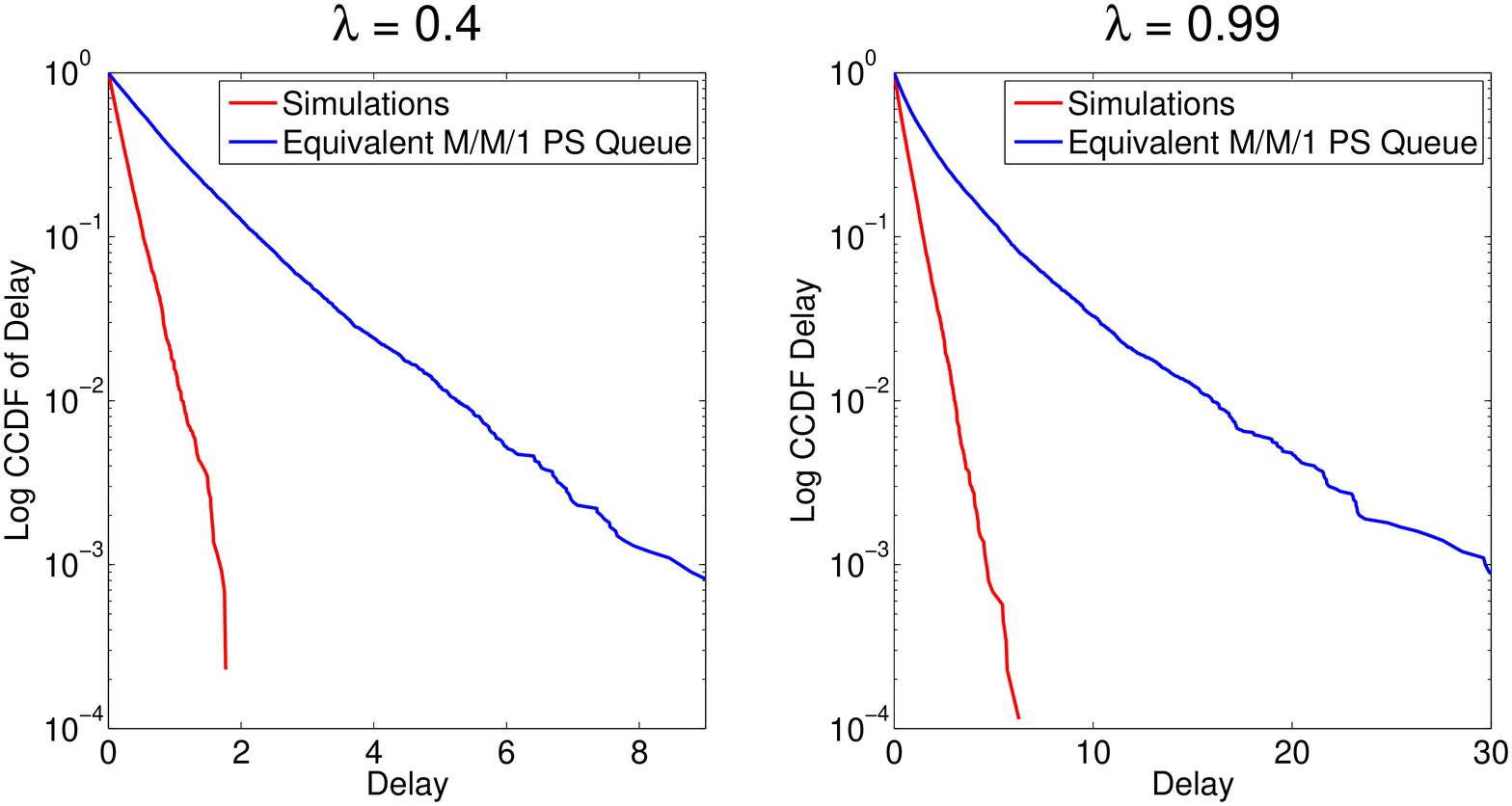}
\caption{Comparison of the delays  with that of an equivalent M/M/1 - PS queue. The critical $\lambda_c = 1.42$.}
\label{fig:tails_compare_overall}
\end{figure}

\begin{figure}
\centering
\includegraphics[scale=0.4]{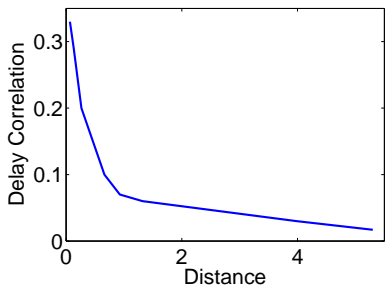}
\caption{Decay of delay correlation of two points born at the same time, as a function of their distance. $\lambda = 0.8$.}
\label{fig:delay_corr}
\end{figure}

\subsection{Heavy Tailed File Sizes}

 %We explore  the impact of heavy-tailed file sizes in our model by considering Pareto distributions with mean $1$ and different shape parameters $\alpha$. %We  \ref{thm:sufficient_condn}.
 \begin{claim}
$\phi_t$, with file-sizes being Pareto distributed of mean $L$ and finite variance, admits a stationary regime with the critical $\lambda$ being smaller than or equal to $\frac{C}{\ln (2) L a}$. 
\label{claim_heavy}
\end{claim}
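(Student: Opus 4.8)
The plan is to bracket the critical arrival rate of the Pareto-driven dynamics between two strictly positive quantities: a lower bracket obtained from a stochastic domination (which proves that a stationary regime does exist for all small $\lambda$), and an upper bracket obtained from a conservation-of-bits argument (which proves no stationary regime for large $\lambda$); the claim follows because the upper bracket is at most $\frac{C}{\ln(2)La}$.

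First I would prove the upper bound on the critical $\lambda$ by a conservation-of-bits argument that is insensitive to the shape of the file-size law. In any stationary and ergodic regime of $\phi_t$ with mean file size $L$, bits are created at space-time rate $\lambda L$ and, at a link located at $x$, are served at instantaneous rate $R(x,\phi_t)$; equating mean input and mean output over a unit of space-time and appealing to Campbell's formula and Palm calculus yields the rate-conservation identity $\lambda L = \beta\,\mathbb{E}^{0}_{\phi_0}[R(0,\phi_0)]$, i.e. Equation (\ref{eqn:simulation_exact}), whose derivation never uses the Markov property. Combined with the deterministic, configuration-wise estimate $\sum_{x\in\phi}R(x,\phi)\le \frac{C\,l(T)\,|\mathbf{S}|}{\ln(2)\,a}$ — the purely information-theoretic inequality at the heart of the proof of Theorem \ref{thm:necessary_condn} — and with $l(T)\le l(0)=1$, existence of a stationary regime forces $\lambda L\le \frac{C\,l(T)}{\ln(2)\,a}\le\frac{C}{\ln(2)\,a}$, so the critical $\lambda$ of the Pareto system is at most $\frac{C}{\ln(2)La}$.

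Second I would establish existence of a stationary regime for all small $\lambda$ by dominating $\phi_t$ by a state-dependent $M/G/1$ processor-sharing queue. On a common probability space, run the true dynamics alongside a ``worst-case interference'' dynamics $\widehat\phi_t$ in which every active transmitter contributes the maximal received power $l(0)=1$ to every receiver, so a link sees interference $|\widehat\phi_t|-1$ and rate $C\log_2\!\big(1+l(T)/(N_0+|\widehat\phi_t|-1)\big)$, a function of the population size alone; thus $\widehat\phi_t$ is an $M/G/1$-PS queue with state-dependent total service rate $\mu(n)=Cn\log_2\!\big(1+l(T)/(N_0+n-1)\big)\xrightarrow[n\to\infty]{}\tfrac{C l(T)}{\ln(2)}$. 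Driving both processes by the same births, transmitter marks and file sizes, a monotone-coupling induction — using $l\le l(0)=1$, so that the true interference at a link never exceeds the worst-case one, hence the true rate dominates and links die no later in the true system — gives $\phi_t\subseteq\widehat\phi_t$ for all $t$. Since the $M/G/1$-PS stability condition is insensitive to the service distribution beyond its mean, $\widehat\phi_t$ is ergodic whenever $\lambda L|\mathbf{S}|<\liminf_n\mu(n)$, i.e. for all $\lambda$ in a non-empty interval $(0,\lambda_0)$; the domination then makes $|\phi_t|$ tight, and the monotonicity/Loynes argument already used for Theorem \ref{thm:sufficient_condn} upgrades tightness to the existence of a unique stationary regime. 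Together the two parts place the critical $\lambda$ of the Pareto model in $[\lambda_0,\tfrac{C l(T)}{\ln(2)La}]\subseteq(0,\tfrac{C}{\ln(2)La}]$. The finite-variance hypothesis is used only where second moments of the stationary regime are needed (finiteness of $\beta$, hence validity of Little's law and of the rate-conservation identity).

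The main obstacle is not any single estimate but the passage from the Markov setting to the general one: one must first construct the dynamics implicitly defined by Equation (\ref{eqn:dynamics_defn}) on the whole time line and show it is well-posed — on the compact torus this reduces to a graphical/percolation argument that the ancestor clusters of dependences are almost surely finite — then redevelop the stationary-ergodic and Palm framework for the non-Markov process, whose state is now infinite-dimensional because residual file sizes must be tracked, and finally verify that both the monotone coupling and the rate-conservation identity persist at this level of generality. The information-theoretic sum-rate bound and the insensitivity of $M/G/1$-PS stability then enter essentially as black boxes.
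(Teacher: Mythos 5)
First, a point of reference: the paper does not prove this statement at all — it appears in Section \ref{sec:simulations} as a simulation-backed claim (the authors explicitly say their claims ``are not formal conjectures'' and are left as starting points for future work) — so your proposal has to stand entirely on its own. Your Part 2 (existence of a stationary regime for small $\lambda$) is broadly plausible: since deaths are defined by the integral criterion in Equation (\ref{eqn:dynamics_defn}) and rates are monotone in the configuration, the sample-path domination by the ``worst-case interference'' system works for arbitrary file sizes (this is the same monotone coupling as in the paper's Lemma 1; note, however, that Theorem \ref{thm:sufficient_condn} is proved by discretization, fluid limits and a drift criterion, not by a Loynes argument), and the insensitivity of the resulting state-dependent PS queue gives a non-empty stability interval. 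But be aware this only certifies stability for $\lambda < C\,l(T)/(\ln 2\,L\,|\mathbf{S}|)$, with $|\mathbf{S}|$ in place of $a$, and domination by an ergodic process gives tightness, not uniqueness — existence is all the claim needs, but the construction of a stationary version of the non-Markov dynamics still has to be carried out, as you acknowledge.

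The genuine gap is in Part 1. The ``deterministic, configuration-wise estimate'' $\sum_{x\in\phi}R(x,\phi)\le C\,l(T)\,|\mathbf{S}|/(\ln 2\,a)$ is not the inequality at the heart of Theorem \ref{thm:necessary_condn}, and it is false in general. The paper's deterministic bound is the product bound $R(x,\phi)\,I(x,\phi)\le C\,l(T)/\ln 2$; converting it into a bound on the \emph{total} rate would require a pointwise lower bound on the interference of order $\phi(\mathbf{S})\,a/|\mathbf{S}|$, which holds only on average (via Campbell), not configuration-wise: a single link with small $\mathcal{N}_0$, or a spread-out configuration under a path-loss with small integral $a$, has total rate exceeding your bound. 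In the paper, the passage from the product bound to $\lambda\le C\,l(T)/(\ln 2\,L\,a)$ goes through rate conservation for the aggregate interference $\mathbf{I}_t$ combined with Papangelou's theorem, which identifies the stochastic intensity of death epochs as $\frac{1}{L}\sum_{x\in\phi_t}R(x,\phi_t)$; that identification rests on the constant hazard rate of exponential file sizes and fails for Pareto files, whose death intensity depends on residual file sizes and ages. So the upper bound on the Pareto critical $\lambda$ — which is exactly the nontrivial content of the claim, and presumably why the authors left it at the level of simulation evidence — is not established: the bit-conservation identity $\lambda L=\beta\,\mathbb{E}^{0}_{\phi_0}[R(0,\phi_0)]$ alone does not yield it without either the (invalid) deterministic sum-rate bound or a genuine Pareto analogue of the interference/Papangelou step.
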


 This model also exhibits the interesting phenomenon of  prominent clustering  in the intermediate traffic regime and very little to no-clustering in the asymptotic regimes of high and low traffic. Note that the term ``high-traffic'' in this context is somewhat  loose  since we do not even know exactly the stability region. With regards to delays, our model predicts tails that are stochastically dominated by the delay of a typical customer of an equivalent $M/GI/1$ PS queue (see Figure \ref{fig:delay_heavy}). The equivalent queue we compared against had a capacity of $\lambda_c$ which from Claim \ref{claim_heavy} is an upper bound on the  capacity. Nonetheless, the delay predicted in our model is stochastically smaller. This observation again highlights the importance of taking into account the spatial heterogeneity in modeling the ``service'' provided by the spectrum.

 %However, one similarity we notice between our model and the $M/GI/1$ PS queue is that the $k$th moment of delay tail is finite as long as the $k$th moment of the file-size distribution is finite. This is a similarity as it is known that the delays in a $M/GI/1$  PS queue has finite $k$th moments if and only if the service requirement has finite $k$the moment \cite{Zwart_PS}.

\begin{figure}[h]
\centering
\includegraphics[scale=0.3]{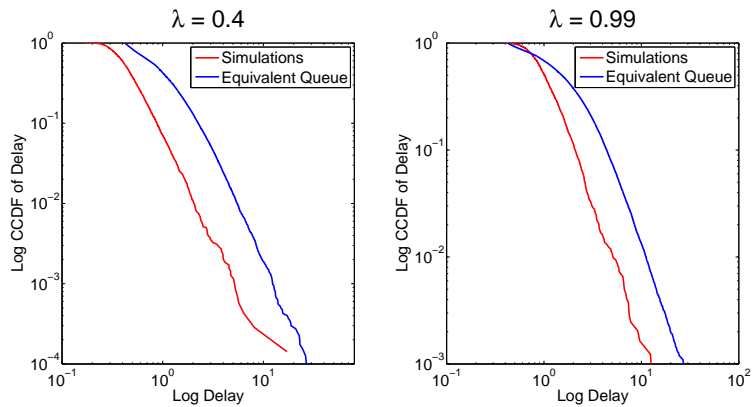}
\caption{Comparision of the delay  under Pareto file size distribution with mean $L$ and shape $\alpha=2.5$.}
\label{fig:delay_heavy}
\end{figure}

\section{Extension of the Dynamics to the Multiple Input Multiple Output (MIMO) Channel}
\label{sec:mimo_extnsions}

In the previous sections, we considered the case when both the transmitter and the receiver of a link have a single antenna. In this section, we briefly highlight, an extension of the Spatial Birth-Death (SBD) model to account for the scenario when both  transmitters and receivers have multiple antennas while still treating interference from other transmitters as noise.

% If one considered models, where the transmitters each have $X_t$ antennas and the receivers each have $X_r$ antennas, then one can define a similar particle dynamics as in Equation (\ref{eqn:rate_defn}) but where the interference is now a vector field rather than a scalar field. 

\subsection{Generalized MIMO Framework for treating Interference as Noise under SBD Dynamics}

The MIMO setting is similar to the single antenna dynamics described in Section \ref{sec:system_model} except, the rate function in Equation (\ref{eqn:rate_defn}) will be modified suitably to account for the presence of multiple antennas. In this section, we provide an extension of the rate function with fading in Equation (\ref{eqn:random_rate}) to this MIMO setting. In particular, we extend Telatar's formula \cite{telatar} for MIMO channel capacity to the network case with interference treated as noise. This in itself is not new, however, when combined with our dynamic framework presents a natural example in which the framework we developed for the single antenna case naturally applies. We however note that the general case of the dynamics with MIMO is mathematically very challenging and we leave it for future work.
\\%In the sequel, we discuss two examples of this framework. 

The Telatar formula for capacity of a MIMO point-to-point channel with $X_t$ transmit antennas and $X_r$ receive antennas is given by the formula
\begin{align}
\mathcal{C} = \mathbb{E} \left[ \log_2 \left( \det(  \mathbf{I}_{X_r} + H \Sigma_{S} H^{\dag} \Sigma^{-1}_{N}   )\right)\right],
\label{eqn:telatar}
\end{align}
where the expectation is with respect to the channel matrix $H$ which is a random variable and possibly on the matrices $\Sigma_S$ and  $\Sigma_N$,  which could in general be functions of $H$. The notation $H^{\dag}$ is used to denote the complex conjugate of $H$. The matrix $\Sigma_S$ is a $X_t \times X_t$ matrix denoting the outer product of the signaling vector and $\Sigma_N$ is a $X_r \times X_r$ matrix denoting the outer product of the noise vector at the receiver. The matrix $I_{X_r}$ is the $X_r \times X_r$ identity matrix. The channel capacity formula in Equation (\ref{eqn:telatar}) captures many different scenarios such as presence or absence of Channel State Information at Transmitter (CSIT) by suitably optimizing over the correlation matrix $\Sigma_S$. For instance, in the absence of CSIT, the optimal $\Sigma_S$ is deterministic, while in the presence of CSIT, the optimal power allocation is by a water-filling on the singular values of the channel matrix $H$ and the signaling vector is along the principal components of the matrix $H$ (Chapter $8$, \cite{tse_book}). Thus, in the CSIT case, $\Sigma_S$ will be a function of $H$  and hence the expectation in Equation (\ref{eqn:telatar}) will also bear on $\Sigma_S$ in this case.
\\

We  present an extension of the formula in Equation (\ref{eqn:telatar}) to a network setting by suitably defining the rate function $R(x;\phi)$ of Equation (\ref{eqn:random_rate}).  Recall that the arrival process $\mathcal{A}$ is a marked PPP on $\mathbf{S} \times \mathbb{R}$ with the atoms denoting the receivers and the marks, which are $\mathbf{S}$ valued, denoting the location of the transmitters. We call $\phi$ a configuration of links if its atoms are locations of receivers and the marks of the atoms denote the corresponding transmitter locations. The notation $\phi^T$ is used to denote the set of transmitters or marks of the atoms of $\phi$. %We assume further that all receivers have $X_r$ antennas and all transmitters have $X_t$ antennas. 

\begin{proposition}
The generalized MIMO rate function $R(x;\phi)$ where $\phi$ is a configuration of links on $\mathbf{S}$ and $x \in \phi$ is an atom of the point-process $\phi$ is given by 
\begin{align}
R(x;\phi) = C\mathbb{E} \left[ \log_2 \left( \det(  \mathbf{I}_{X_r} + H_{xx} S_{x} S_{x}^{\dag} H^{\dag}_{xx} \Sigma^{-1}_{N} (\phi; \{H_{yx}\}_{y \in \phi \setminus \{x\}}, \{S_y\}_{y \in \phi \setminus \{x\}})   )\right)\right],
\label{eqn:mimo_rate}
\end{align}
where the expectation is with respect to the i.i.d. collection of channel random matrices $\{H_{xy}\}_{y \in \phi}$ and the i.i.d. collection of signaling vectors $\{S_y\}_{y \in \phi}$. $\mathbf{I}_{X_r}$ denotes a $X_r \times X_r$ identity matrix. The $X_t \times 1$ vector $S_x$ denotes the signaling vector  of the transmitter located at $u$ whose corresponding receiver is at location $x$. The vector $S_x$ may or may not depend on $H_{xx}$ depending on whether CSIT is present or absent, but nonetheless these vectors are i.i.d. across $x$. The matrix $\Sigma_{N} $ which is a $X_r \times X_r$  matrix,  is the outer product of the   interference vector plus noise, i.e. $\Sigma_N = N_0 \mathbf{I}_{X_r} + (\sum_{y \in (\phi^T \setminus \{u\} )} \sqrt{l(||x-y||)}H_{yx} S_y) (\sum_{y \in (\phi^T \setminus \{u\} )} \sqrt{l(||x-y||)}H_{yx} S_y)^{\dag}$.
\end{proposition}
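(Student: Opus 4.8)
The statement is, in effect, a derivation: it lifts Telatar's single-link formula \eqref{eqn:telatar} to the interference network operated under the ``treat interference as noise'' (IAN) discipline, so a proof amounts to exhibiting, for a fixed configuration $\phi$ and a tagged atom $x$ with mark $u$, the channel seen by the receiver of $x$ as a point-to-point MIMO Gaussian channel with colored noise, and then quoting \eqref{eqn:telatar}. The plan is as follows. First I would write the vector received at $x$ in one channel use as $Y_x = H_{xx}X_u + \sum_{y\in\phi^{T}\setminus\{u\}}\sqrt{l(\|x-y\|)}\,H_{yx}X_y + Z_x$, where $X_u$ is the transmitted vector of the tagged transmitter (zero mean, covariance $S_xS_x^{\dag}$), each $X_y$ is the transmitted vector of interferer $y$ (zero mean, covariance $S_yS_y^{\dag}$), $Z_x\sim\mathcal{CN}(0,N_0\mathbf{I}_{X_r})$ is thermal noise, and the channel matrices, transmit vectors and noise are mutually independent; following the single-antenna convention (so that the formula reduces correctly to \eqref{eqn:random_rate}), the path loss $l(\|x-u\|)$ of the useful link is absorbed into $H_{xx}$.

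Second, I would invoke the IAN assumption inherited from the single-antenna dynamics and from \cite{baccelli_gammal}: each interferer uses a point-to-point Gaussian codebook, so that, conditionally on $\phi$, on the interferer channels $\{H_{yx}\}_{y\in\phi\setminus\{x\}}$ and on their signaling vectors $\{S_y\}_{y\in\phi\setminus\{x\}}$, the aggregate disturbance $\sum_{y}\sqrt{l(\|x-y\|)}H_{yx}X_y+Z_x$ is a zero-mean Gaussian vector whose covariance is exactly the matrix $\Sigma_N(\phi;\{H_{yx}\},\{S_y\})$ displayed in the statement. Conditionally, then, receiver $x$ faces precisely the Telatar point-to-point channel with signal covariance $S_xS_x^{\dag}$ and noise covariance $\Sigma_N$, whose rate by \eqref{eqn:telatar} equals $\log_2\det(\mathbf{I}_{X_r}+H_{xx}S_xS_x^{\dag}H_{xx}^{\dag}\Sigma_N^{-1})$. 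Averaging this conditional rate over $H_{xx}$ and over $S_x$ — which may or may not depend on $H_{xx}$ according to whether CSIT is available, but is i.i.d. across links either way — as well as over the interferer channels and signaling vectors, and finally multiplying by the dimensional constant $C$ exactly as in \eqref{eqn:random_rate}, produces \eqref{eqn:mimo_rate}.

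Finally, I would record the consistency checks and isolate the one delicate point. Setting $X_t=X_r=1$ collapses the determinant to $1+|H_{xx}|^2|S_x|^2/(N_0+\sum_y l(\|x-y\|)|H_{yx}|^2|S_y|^2)$, which, under unit-power signaling and the path-loss convention above, is the integrand of \eqref{eqn:random_rate}; and the map $(x,\phi)\mapsto R(x;\phi)$ so defined is deterministic (all randomness being integrated out), measurable, and monotone non-increasing under the addition of links to $\phi$, because an extra interferer contributes a positive-semidefinite term to $\Sigma_N$, hence shrinks $\Sigma_N^{-1}$ in the Loewner order and lowers the determinant — these being exactly the two structural properties (determinism and monotonicity) the single-antenna proofs rely on. The hard part is really just making the IAN step legitimate: one must assume the interferers transmit Gaussian point-to-point codewords so that the conditional disturbance is genuinely Gaussian with the stated covariance and \eqref{eqn:telatar} applies verbatim, and one must check that the outer expectation is finite, which follows from $0\le\log_2\det(\mathbf{I}+A)\le\operatorname{tr}(A)/\ln 2$ together with $\Sigma_N\succeq N_0\mathbf{I}_{X_r}$ and the standing moment assumptions on the channel matrices and signaling vectors. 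Everything else is bookkeeping.
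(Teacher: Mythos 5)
Your proposal is correct and follows essentially the same route as the paper's own proof: write the received interference vector $\sum_{y}\sqrt{l(\|x-y\|)}H_{yx}S_y$, identify $\Sigma_N$ as its outer product plus the thermal noise covariance $N_0\mathbf{I}_{X_r}$, and invoke Telatar's formula \eqref{eqn:telatar} with this noise-plus-interference covariance before averaging over the channel matrices and signaling vectors. Your additional consistency checks (reduction to the SISO formula, monotonicity in $\phi$, finiteness of the expectation) go slightly beyond the paper's short sketch but do not change the argument.
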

\begin{proof}

For a static and deterministic configuration of links $\phi$, and for a receiver at location $x \in \phi$ with its corresponding transmitter at location $u$ (referred to as the tagged link in  this proof), we need to argue that Equation (\ref{eqn:mimo_rate}) is the capacity of this link under fast-fading when treating interference as noise.  Assume that the channel between any transmitter whose receiver is at location $a \in \phi$ and any receiver $b \in \phi$ is given by $H_{ab}$ and is i.i.d. across $a$ and $b$ and equal in distribution to a random matrix $H$. Denote by $S_x$ as the $X_t \times 1$ random signaling vector of the transmitter of the tagged link. Note that $S_x$ could possibly depend on the channel realization $H_{xx}$ depending on whether there is CSIT or not. The interference signal at the receiver in location $x$ is  
 \begin{align}
 \mathcal{I}(x;\phi) = \sum_{y \in (\phi \setminus \{x\} )^{Tx} } \sqrt{l(||x-y||)}H_{yx} S_y,
 \end{align}
 where $H_{yx}$ and $S_y$ are i.i.d. and independent of each other with $H_{yx}$ equal in distribution to $H$ and $S_y$ equal in distribution to $S_x$. Thus, the matrix $\Sigma_{N}$ is the sum of the outer product of $\mathcal{I}(x;\phi)$  and the thermal noise co-variance matrix ${N_0} \mathbf{I}_{X_r} $. Recall that the path-loss function $l(\cdot)$ denotes the attenuation in the signal power and hence, the signal itself is attenuated by $\sqrt{l(\cdot)}$. Now using the Telatar formula of Equation (\ref{eqn:telatar}) for the case when the noise signal is the sum of thermal noise and interference, we get Equation (\ref{eqn:mimo_rate}).
\end{proof}

%This is a generalized formula, since the matrices $\Sigma_S$ and $\Sigma_N(\phi)$ can be chosen appropriately to model different scenarios (for ex. presence or absence of CSIT (Channel State Information at Transmitter)). The birth-death dynamics in the MIMO case is then given by Equation (\ref{eqn:dynamics_defn}). 
%\\

This propositions is ofcourse not new, but we include it here for completeness. This formulation allows us to define the birth-death dynamics in the MIMO setting. The dynamics can be described through Equation (\ref{eqn:dynamics_defn}) with the rate-function as given in Equation (\ref{eqn:mimo_rate}). The formulation in Equation (\ref{eqn:mimo_rate}) is the network version of channel capacity of MIMO under presence of fast fading and treating interference as noise. This generalized setup of the MIMO channel also allows us to study the Multiple Input Single Output (MISO) and Single Input Multiple Output (SIMO) cases by setting $X_r$ or $X_t$ to $1$ respectively. In the sequel, we discuss an example of the MIMO framework which can be analyzed as a corollary of the single antenna system.

\subsection{Independent Channels with no Channel State Information at Transmitter (CSIT)}
\label{sec:mimo_special}

We show in this sub-section, that in the special case when the signal $S_x$ is Gaussian with outer product equal to $\frac{1}{X_t} \mathbf{I}$ and the channel matrices are equal in distribution to a random matrix $H$ where each entry is i.i.d. complex normal with $0$ mean and unit variance, then the MIMO dynamics can be reduced to an equivalent single-antenna system and the critical arrival rate for this model can then be computed. Note that the transmission strategy where $S_{x}S^{\dag}_{x} = \frac{1}{X_t} \mathbf{I}$  is optimal in the case  when there is no CSIT, total transmit power constraint of $1$ and the channel matrix is composed of i.i.d. entries (Chapter $8$, \cite{tse_book}).
\\

The following statistical assumptions model the independent channel MIMO system for which the critical arrival density $\lambda_c$ can be computed as a corollary of the single antenna analysis.
\begin{itemize}
\item All channel realizations between any transmit antenna and receive antenna are i.i.d. complex normal with $0$ mean and unit variance. 
\item For any coordinate $i \in [1,X_t]$, and any link $x$ and at any time $t$, we have $\mathbb{E}[(S_x^t)_i (S_x^t)_{i}^{*}] = 1/X_t$ where the vector $S_x^t$ is the transmitted signal by transmitter whose receiver is at location $x$ at time $t$. This indicates that the total power $1$ is split equally on each antenna.
\item For any coordinates $i,j \in[1,X_r]$ and any two links $x \neq y$ and any time $t$, we have $\mathbb{E}[(S_x^t)_i (S_y^t)_{j}^{*}] = 0$. This assumption gives that the signal across antennas are uncorrelated. 
\end{itemize}

Under the foregoing assumptions, the rate function in Equation (\ref{eqn:mimo_rate}) can be simplified to 

\begin{align}
R(x,\phi) = C\mathbb{E}_{h} \left[ \sum_{i=1}^{X_r} \log_2 \left(  1 + \frac{1 }{ X_t(N_0 + I(x;\phi))} \sigma_i(H H^\dag) \right) \right],
\label{eqn:mimo_rate_defn_indep}
\end{align} 
where $H$ is a $X_t \times X_r$ random matrix denoting the channel statistics.  The quantity $\sigma_i(A)$ refers to the $i$th eigen-value of the matrix $A$ where the eigenvalues are indexed in some arbitrary fashion. The interference $I(x,\phi)$ is just a scalar and is given by 
\begin{align}
I(x,\phi) = \sum_{y \in \phi^T \setminus \{u\}} l(||y-x||),
\end{align}
where $\phi^T$ is the set of points on $\mathbf{S}$ corresponding to the transmitters and the transmitter of the receiver at location $x \in \phi$ is assumed to be present at $u \in \phi^T$. If one, employs the MIMO rate Equation (\ref{eqn:mimo_rate_defn_indep}), then one gets the following result:
\begin{corollary}
The critical arrival intensity of links $\lambda_c$ under the foregoing assumptions is $\frac{C X_r}{La \ln(2)}$. 
\label{cor:mimo_result}
\end{corollary}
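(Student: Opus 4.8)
The plan is to reduce the MIMO critical-rate computation to the single-antenna results, Theorems~\ref{thm:necessary_condn} and~\ref{thm:sufficient_condn}, by checking that the rate function~(\ref{eqn:mimo_rate_defn_indep}) retains the two structural features on which those proofs rely, and by pinning down the scalar constant that now plays the role of the received signal power $l(T)$. Observe first that, after the expectation over fading, $R(x,\phi)$ in~(\ref{eqn:mimo_rate_defn_indep}) is a \emph{deterministic} function of the configuration, and it is monotone non-increasing under inclusion: adding links only increases the scalar $I(x,\phi)$, which decreases every summand $\log_2\!\big(1+\sigma_i(HH^{\dagger})/(X_t(N_0+I(x,\phi)))\big)$ pathwise in $H$. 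These are exactly the two properties (determinism and monotonicity) singled out after~(\ref{eqn:random_rate}) as the ones on which the ergodicity analysis hinges.

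Next I would identify the ``effective signal constant''. Starting from the general MIMO rate~(\ref{eqn:mimo_rate}), the no-CSIT equal-power choice $S_xS_x^{\dagger}=\tfrac{1}{X_t}\mathbf{I}$ gives $\mathbb{E}[H_{yx}S_yS_y^{\dagger}H_{yx}^{\dagger}]=\mathbf{I}_{X_r}$ for i.i.d.\ unit-variance Gaussian entries, so the interference-plus-noise covariance seen by a receiver treating interference as noise collapses to the scalar multiple of the identity $\big(N_0+I(x;\phi)\big)\mathbf{I}_{X_r}$; diagonalizing $H_{xx}H_{xx}^{\dagger}$ then yields~(\ref{eqn:mimo_rate_defn_indep}) with $I(x,\phi)$ a scalar. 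Now, for every fixed configuration, letting the scalar interference grow and using $\log_2(1+u)=u/\ln 2+o(u)$ together with dominated convergence,
\[
\lim_{I\to\infty}\big(N_0+I\big)\,\frac{R(x,\phi)}{C}=\frac{1}{\ln 2}\,\mathbb{E}_h\!\Big[\sum_{i=1}^{X_r}\frac{\sigma_i(HH^{\dagger})}{X_t}\Big]=\frac{1}{X_t\ln 2}\,\mathbb{E}_h\big[\mathrm{tr}(HH^{\dagger})\big]=\frac{X_r}{\ln 2},
\]
using $\sum_{i=1}^{X_r}\sigma_i(HH^{\dagger})=\mathrm{tr}(HH^{\dagger})$ (the surplus eigenvalues vanish) and $\mathbb{E}_h[\mathrm{tr}(HH^{\dagger})]=X_tX_r$. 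For the single-antenna rate the same limit equals $l(T)/\ln 2$; hence in this MIMO model the constant $X_r$ occupies the slot held by $l(T)$ in the single-antenna analysis.

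With this dictionary in place, I would re-run the arguments of Section~\ref{sec:proofs} with $R$ from~(\ref{eqn:mimo_rate_defn_indep}) in place of the single-antenna rate. For the necessary direction, $\log_2(1+u)\le u/\ln 2$ gives $\sum_{x\in\phi^{Rx}}R(x,\phi)\le \frac{CX_r}{\ln 2}\sum_{x}\frac{1}{N_0+I(x,\phi)}$, which is precisely the bound obtained in the proof of Theorem~\ref{thm:necessary_condn} with $l(T)$ replaced by $X_r$; tracking this factor through the drift/rate-conservation argument shows $\phi_t$ admits no stationary regime once $\lambda>\frac{CX_r}{La\ln 2}$. For the sufficient direction, the coupling and Palm-calculus steps proving Theorem~\ref{thm:sufficient_condn} use the rate function only through its monotonicity, its boundedness, and the limiting constant computed above, so they carry over after the same substitution and yield ergodicity for $\lambda<\frac{CX_r}{La\ln 2}$. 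Combining the two gives $\lambda_c=\frac{CX_r}{La\ln 2}$.

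The main obstacle is the sufficiency half: one must verify that \emph{every} estimate in the proof of Theorem~\ref{thm:sufficient_condn} depends on the rate function only through the three features above — in particular that the dominating-system coupling still works when the per-link rate is the increasing, concave-in-$1/(N_0+I)$, but no longer logarithmic, expression~(\ref{eqn:mimo_rate_defn_indep}) — rather than through the precise shape $C\log_2\!\big(1+l(T)/(N_0+\cdot)\big)$. Minor points, routine given the Gaussian assumption, are the eigenvalue bookkeeping when $X_r>X_t$ and the interchange of limit and expectation in the displayed slope computation.
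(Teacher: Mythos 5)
Your proposal follows essentially the same route as the paper: the necessary direction is the rate-conservation argument of Theorem~\ref{thm:necessary_condn} with the bound on $R(0,\phi_0)I(0,\phi_0)$ recomputed from (\ref{eqn:mimo_rate_defn_indep}) via a limit/expectation interchange and $\mathbb{E}_h[\mathrm{tr}(HH^{\dagger})]=X_tX_r$, giving $X_r$ in the slot of $l(T)$, and the sufficient direction is obtained by rerunning the coupling/fluid-limit proof of Theorem~\ref{thm:sufficient_condn} with the new rate function, exactly as the paper's sketch asserts. Your flagged caveat about checking the sufficiency half is fair, but it is no more of a gap than in the paper's own proof sketch.
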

We provide a proof sketch in Appendix \ref{sec:mimo_results}. We phrase the above result as a corollary since it is not surprising to have as critical density in this case of independent MIMO channel with no CSIT as $X_r$ times the critical density for a single-antenna link based SBD process. The total transmit power is $1$ just as in the case with single antenna, however the presence of $X_r$ receive antennas per link implies that the network can support upto $X_r$ times more transmitters than in the case with single antenna link.
This result indicates that the effect of having multiple antennas at the transmitter is not beneficial if there is no CSIT. On the other hand, in the presence of CSIT, one would expect to receive gain from the presence of multiple transmit antennas as $S_x$ is a function of the channel realization $H_{xx}$ thereby exploiting the diversity from multiple transmit antennas better. However, we do not pursue this question in the present paper and leave the analysis of the generalized MIMO system to future work.

\section{Conclusion and Future Work}

In this paper, we proposed a novel space-time interacting particle system to model spectrum sharing in ad-hoc wireless networks. We computed exactly the phase-transition point for time ergodicity. We also proved the intuitive fact that the steady-state point-process corresponding to this dynamics exhibits clustering. In order to understand the performance metric of density of links in steady-state, we proposed a Poisson heuristic $\beta_f$ (which is a bound subject to Conjecture \ref{conjecture_1}) and a second order heuristic $\beta_s$. We saw from simulations that both the heuristics are tight in the two asymptotic regimes of heavy and light traffic. However, in the intermediate traffic regime, we found that the heuristic $\beta_s$ performs much better compared to the Poisson heuristic $\beta_f$ as $\beta_s$ accounts for some spatial correlations which are non negligible in this regime. We also saw through simulations that any form of simplistic modeling of spatio-temporal interactions through PPP or equivalent queues ignoring spatial clustering, leads to poor estimates for  performance.
\\
 %spatial (PPP) or temporal (equivalent queues) interactions leads to poor estimates of the key performance metrics. 

From a mathematical perspective, we identified several challenging directions of future work in the simulation section. In particular, understanding the higher order moment measure of $\phi_0$ will  be key in evaluating or providing provably tighter bounds for performance metrics. Understanding the higher-order moment measures may also aid in making progress on Conjecture \ref{conjecture_1}. From an information-theoretic perspective, we considered a dynamic interference network where links treat interference as noise. However, it will be interesting to consider other receiver schemes such as Successive Interference Cancellation or Joint-Decoding and show that the critical arrival rate for these schemes are strictly better than considering all Interference as Noise. This will then yield the complete dynamic version of the model considered in \cite{baccelli_gammal}, namely a dynamic version of an interference network with point-to-point codes.

%From a modeling perspective,  an  interesting future direction is to consider a notion of scheduling in this spatial model.  We considered in this paper, a scheduling (or resource allocation) based on ``Interference as Noise'' policy.

%In this paper, we had a structural characterization of the second order moment-measure of $\phi_0$ in Theorem \ref{thm:clustering} which we leveraged to propose the Poisson bound in Proposition \ref{proposition_1}. We however, saw through simulations that this bound is only correct in the asymptotic traffic regimes and is poor in the intermediate traffic regimes. A better bound can potentially be proposed in this regime using higher order moment measures of $\phi_0$. 
%\\

\section*{Acknowledgements}

This work was supported by an award from the Simons Foundation ($\# 197982$) to The University of Texas at Austin and from the grant
No. NSF-CCF-1218338. The authors also acknowledge the support of TACC (Texas Advanced Computing Center) for providing access to computing resources to perform the simulations.

%\newpage
%\begin{spacing}{0.9}
\bibliographystyle{IEEEtran}
\bibliography{sigproc}

%\end{spacing}

\newpage

\section*{Appendix}

\section{Proof of Theorem \ref{thm:necessary_condn}}
\label{sec:proofs}

%\subsection{Proof of Theorem \ref{thm:necessary_condn}}
\label{sec:proof1}
\begin{proof}
We prove this by contradiction. Assume that $\phi_t$ is in stationary regime and that $\lambda > \frac{C l(T)}{\ln (2) L a}$. We use the Miyazawa's Rate-Conservation Principle or Law (RCL) (e.g. \cite{queuing_theory}, 1.3.3)  to set-up a system of equations and identify a contradiction. Applying the RCL to the stochastic process $\phi_t(\mathbf{S})$ which counts the number of links yields,
\begin{equation}
\lambda |\mathbf{S}| = \lambda_d,
\label{eqn:proof1_RCL_Number}
\end{equation}
where $\lambda_d$ is the intensity of the point-process on $\mathbb{R}$ corresponding to the epochs of a death-time. Since we assumed that $\phi_t$ is in stationary regime, the point process formed on the real line by the instants of a death is stationary with intensity $\lambda_d = \lambda |\mathbf{S}|$.  Applying  RCL to the total ``work-load'' in the network i.e. the total number of bits that each of the transmitters present are yet to send to their corresponding receivers, we get
\begin{equation}
\lambda|S|L = \mathbb{E} \left[ \sum_{x \in \phi_0} R(x, \phi_0) \right],
\label{eqn:proof1_RCL_bits1}
\end{equation}
where $R(x,\phi)$ is given in Equation (\ref{eqn:rate_defn}). From the definition of Palm Probability of $\phi_0$, we have that 
\begin{equation}
\lambda |\mathbf{S}| L = \mathbb{E}^{0}_{\phi_0} \left[R(0,\phi_0) \right] \mathbb{E}[\phi_0(\mathbf{S})],
\label{eqn:proof1_RCL_bits2}
\end{equation}
where $\mathbb{E}^{0}_{\phi_0}$ is the (spatial) Palm Probability of $\phi_0$ and ${\phi_0(\mathbf{S})}$ is the random variable denoting the number of links in the network in steady-state. Note that from our assumption that $\phi_t$ is in stationary regime ensures the existence of the Palm Probability measure of the spatial point process $\phi_0$. Applying rate-conservation to the stochastic process  $\mathbf{I}_t = \sum_{x \in \phi_t} I(x,\phi_t)$, the sum interference seen at all receivers (which could possibly be $\infty$), we get
\begin{equation}
\lambda |\mathbf{S}| \mathbb{E}^{\uparrow}[\mathcal{I}] = \lambda_d \mathbb{E}^{\downarrow}[\mathcal{D}],
\label{eqn:proof1_RCL_interference1}
\end{equation}
with $\mathcal{I} = \mathbf{I}_{0_{+}} - \mathbf{I}_{0}$ and $\mathcal{D} = \mathbf{I}_{0} - \mathbf{I}_{0_{+}}$. Here, $\mathbb{E}^{\uparrow}$ denotes the (time) Palm probability corresponding to the point process on $\mathbb{R}$ of birth instants and $\mathbb{E}^{\downarrow}$ denotes the (time) Palm probability of the point process on $\mathbb{R}$ corresponding to the instants of death. From Equation (\ref{eqn:proof1_RCL_Number}) we have 
\begin{equation}
\mathbb{E}^{\uparrow}[\mathcal{I}]  = \mathbb{E}^{\downarrow}[\mathcal{D}].
\label{eqn:proof1_RCL_interference2}
\end{equation}
From the PASTA  property and the fact that the births are uniform in $\mathbf{S}$, we have from Campbell's theorem that
\begin{equation}
\mathbb{E}^{\uparrow}[\mathcal{I}] = 2 \mathbb{E}[ \phi_0(\mathbf{S})] \frac{a}{|\mathbf{S}|}.
\label{eqn:proof1_RCL_interference3}
\end{equation}

Since the file-sizes at all transmitters are i.i.d. exponential with mean $L$, the point process on the real line corresponding to the death-instants admits as stochastic-intensity $\mathbf{R}_t = \frac{1}{L} \sum_{x \in \phi_t} R(x,\phi_t)$ with respect to the filtration $ \mathcal{F}_t = \sigma ( \phi_s : s \leq t)$, the sigma algebra corresponding to the locations. Hence, it then follows from Papangelou's theorem (e.g. \cite{queuing_theory}, Theorem $1.9.2$) that 
\begin{equation}
\frac{d \mathbb{P}^{\downarrow}}{d \mathbb{P}} \vert_{\mathcal{F}_{0_{-}}} = \frac{\mathbf{R}_0}{\mathbb{E}[\mathbf{R}_0]}.
\label{eqn:proof1_papangelou}
\end{equation}
Since the decrease in total interference (in state $\phi_{0_{-}}$) is of magnitude $I(X,\phi_{0})$ with probability $\frac{R(X,\phi_0)}{L\mathbf{R}_0}$ if $X \in \phi_{0_{-}}$, we get
\begin{align}
\mathbb{E}^{\downarrow}[\mathcal{D}] &= 2 \mathbb{E} \left[ \frac{\mathbf{R}_0}{\mathbb{E}[\mathbf{R}_0]} \sum_{x \in \phi_0} \frac{R(x,\phi_0)}{L \mathbf{R}_0} I(x,\phi_0) \right] \nonumber \\
&= 2 \frac{\mathbb{E}[\sum_{x \in \phi_0} R(x, \phi_0) I(x,\phi_0)]}{L \mathbb{E}[\mathbf{R}_0]} \nonumber \\
%\end{align*}
%\begin{align}
&= 2 \frac{\mathbb{E}^{0}_{\phi_0}[ R(0, \phi_0) I(0,\phi_0)]}{L \mathbb{E}[\mathbf{R}_0]}  \mathbb{E}[\phi_0(\mathbf{S})].
\label{eqn:proof1_RCL_interference4}
\end{align}
Now combining, Equations (\ref{eqn:proof1_RCL_interference4}), (\ref{eqn:proof1_RCL_interference3}) and (\ref{eqn:proof1_RCL_bits1}), we get
\begin{equation}
a = \frac{\mathbb{E}^{0}_{\phi_0}[ R(0, \phi_0) I(0,\phi_0)]}{L \lambda}.
\label{eqn:proof1_penultimate}
\end{equation}

From Equation (\ref{eqn:rate_defn}) and basic calculus, we have that  $R(0, \phi_0) I(0,\phi_0) \leq \frac{C l(T)}{\ln (2) }$ which is a deterministic bound that is true for \emph{any} $\phi \in \mathbf{M}(\mathbf{S})$. Applying this inequality to Equation (\ref{eqn:proof1_penultimate}), we get the inequality that
\begin{equation}
\lambda \leq \frac{C l(T)}{\ln(2) L a}.
\label{eqn:proof1_final}
\end{equation}
Inequality (\ref{eqn:proof1_final}) is a contradiction to our assumption that $\phi_t$ is in stationary regime and that $\lambda >   \frac{C l(T)}{\ln(2) L a}$. %Hence, we can conclude that if $\lambda >  \frac{C}{\ln(2) L a}$, then $\phi_t$ does not admit a stationary regime. 
\end{proof}

\section{Proof of Theorem \ref{thm:sufficient_condn}}

 For simplicity of the proof,  we assume the link distance $T = 0$. The proof for arbitrary $T$ follows with significantly more notation that obscures the essence of the proof. Thus, to keep the proof ideas simple, we first outline the proof for the special case of $T=0$ with remarks in between as to how the intermediate steps can generalize. At the end, we will give the complete construction of the coupling (which will be explained later) in the general case of $T$ being arbitrary. This will then complete the proof in the general case as well.
 \\
 
 Assume $T=0$ for the time being. Thus, the dynamics is that of points arriving and exiting the network and the network at any point of time consists of a collection of points distributed in space $\mathbf{S}$. The high level idea of the proof is that we tesselate the space $\mathbf{S}$ and study another ``upper-bound'' Markov Chain living on a countable state-space which we analyze through fluid limit techniques. We then conclude about the ergodicity of $\phi_t$ which is a Markov Chain on the topological space $\mathbf{M}(\mathbf{S})$. 
\\
%\subsubsection*{The Upper-Bounding Markov Chain $\phi_{t}^{(\epsilon)}(t)$}

To define the upper-bound chain, we first tessellate the square $\mathbf{S}$ into cells where each cell is a square of length exactly $\epsilon$. Since $\mathbf{S}$ is a torus, we  assume without loss of generality that the origin is in the center of a cell. One can find a sequence of such  tessellations with  the side length of the cells going to $0$.  The tessellation for each valid $\epsilon > 0$ results in $n_{\epsilon}$, a finite number of cells as $\mathbf{S}$ is compact. Index the cells by $i$ and let $A_i$ denote the subset of $\mathbf{S}$ corresponding to cell $i$ and $a_i \in A_i$ denote its center. The cell containing the origin is indexed $0$ i.e. $a_0 = 0$.  For such an $\epsilon$ tessellation, we define a new path-loss function $l_{\epsilon}(x,y)$ where $l_{\epsilon}(x,y) = l_{\epsilon}(a_i,a_j)$ for all $x \in A_i$ and $y \in A_j$ and
\begin{align}
l_{\epsilon}(a_i,a_j) = \sup \{ l(|| b_i - b_j ||) : ||b_i - a_i || , || b_j - a_j || \in \{ 0 , \epsilon \} \}. \nonumber
%\label{eqn:proof3_path_loss_defn}
\end{align}

Note that the function $l_{\epsilon}$ satisfies  
\begin{equation}
\sum_{i}l_{\epsilon}(a_i,a_j) = \sum_{i}l_{\epsilon}(a_i,0) = \frac{1}{\epsilon^2}\int_{x \in \mathbf{S}}l_{\epsilon}(||x||)dx,
\label{eqn:proof3_torus_symmetry}
\end{equation}
since $\mathbf{S}$ is a square torus and each cell $A_i$ is a square of side-length  $\epsilon$,

The upper bound Markov-Chain is denoted as $\phi_{t}^{(\epsilon)}$ which takes value in the space $\mathbf{M}(\mathbf{S})$. This chain has the exact same dynamics as described in Equation (\ref{eqn:dynamics_defn}) \emph{except} that the interference comes from $l_{\epsilon}(.,.)$ instead of from $l(\cdot)$,

\begin{lemma}
For all time $t$, the point-process $\phi_{t}^{(\epsilon)}$ stochastically dominates $\phi_t$. This implies that if $\phi_{t}^{(\epsilon)}$ is stable for a particular $\lambda$, then so is $\phi_t$ for that value of $\lambda$.
\end{lemma}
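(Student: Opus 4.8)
The plan is to establish the stochastic domination $\phi_t \preceq \phi_t^{(\epsilon)}$ by a pathwise coupling argument, driving both chains by the same source of randomness, and then argue inductively that the ordering is preserved at every jump. The key structural observation making this work is that $l_\epsilon(x,y) \ge l(\|x-y\|)$ for every pair $x,y$ (by construction, since the supremum defining $l_\epsilon$ ranges over perturbations that include the actual positions), which means the $\epsilon$-chain always sees at least as much interference, hence gives at least as slow a service rate, hence holds on to points at least as long.

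First I would set up the coupling. Both $\phi_t$ and $\phi_t^{(\epsilon)}$ are piecewise-constant jump Markov processes on $\mathbf{M}(\mathbf{S})$. I would drive births by a common Poisson process on $\mathbf{S} \times \mathbb{R}$ of intensity $\lambda$: every birth epoch and location is shared by both chains, and the associated file size $L_p \sim \mathrm{Exp}(1/L)$ is also shared. For deaths, I would use the standard ``residual work'' construction: attach to link $p$ an independent $\mathrm{Exp}(1)$ random variable $E_p$, shared between the two chains, and declare that in chain $\phi$ link $p$ dies at the first time $t$ such that $\int_{b_p}^{t} R(x_p,\phi_u)\,du \ge L_p$ (equivalently when the accumulated normalized service crosses $E_p$), and similarly for $\phi^{(\epsilon)}$ with $R$ replaced by the $\epsilon$-rate function $R_\epsilon$ built from $l_\epsilon$. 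Under this coupling I claim the event $\{\phi_t \subseteq \phi_t^{(\epsilon)} \text{ for all } t\}$ has probability one, where $\subseteq$ is the natural inclusion of point configurations (recall $T=0$, so configurations are simple unmarked counting measures and this is just set inclusion of the atoms).

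Next I would prove the inclusion is maintained by induction on the (a.s. discrete) sequence of jump times of the coupled pair. At a birth, the same point is added to both configurations, so inclusion is trivially preserved. The delicate case is a death: suppose $\phi_{t^-} \subseteq \phi_{t^-}^{(\epsilon)}$ and link $p$ is present in $\phi$; I must show $p$ dies in $\phi$ no later than it dies in $\phi^{(\epsilon)}$, so that the smaller configuration never ``overtakes'' the larger by retaining a point the larger one has shed. This is where the monotonicity of the rate function (stated in Section~\ref{subsec:data_rate}) and the inequality $l_\epsilon \ge l$ combine: for any $u$ with $\phi_u \subseteq \phi_u^{(\epsilon)}$ we have
\begin{align}
R(x_p,\phi_u) \;=\; C\log_2\!\Big(1+\tfrac{1}{\mathcal{N}_0 + I(x_p,\phi_u)}\Big) \;\ge\; C\log_2\!\Big(1+\tfrac{1}{\mathcal{N}_0 + I_\epsilon(x_p,\phi_u^{(\epsilon)})}\Big) \;=\; R_\epsilon(x_p,\phi_u^{(\epsilon)}), \nonumber
\end{align}
because the interference in the larger configuration measured with the larger path-loss dominates the interference in the smaller configuration measured with $l$. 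Hence the accumulated normalized service of $p$ in $\phi$ reaches $E_p$ no later than in $\phi^{(\epsilon)}$, so $d_p \le d_p^{(\epsilon)}$, and inclusion survives the death. A standard argument (the set of ``bad'' times is open, contains no jump time, and the process is piecewise constant) then upgrades this to: inclusion holds for all $t$ a.s.

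The main obstacle I anticipate is not any single inequality but the bookkeeping needed to make the induction rigorous: one must argue that under the coupling the jump times of the two chains can be listed as a single increasing sequence with no accumulation point (finiteness of $\mathbf{S}$ and boundedness of the birth rate give this), and one must handle the fact that the ordering of \emph{which} points are present, not just how many, is what propagates — a purely cardinality-based coupling would not suffice because the rate a given point receives depends on the identities and locations of the others. Once the pathwise inclusion $\phi_t \subseteq \phi_t^{(\epsilon)}$ is in hand for all $t$, the stochastic domination in the lemma follows immediately (any increasing functional is dominated), and in particular positive recurrence / ergodicity of $\phi_t^{(\epsilon)}$ forces tightness and hence ergodicity of $\phi_t$ for the same $\lambda$, since $\phi_t(\mathbf{S}) \le \phi_t^{(\epsilon)}(\mathbf{S})$ pathwise.
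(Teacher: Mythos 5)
Your proposal is correct and follows essentially the same route as the paper: the same two monotonicity facts ($l_\epsilon \ge l$, and $R$ decreasing in the configuration) used to build a pathwise coupling with shared births and file sizes under which $\phi_t \subseteq \phi_t^{(\epsilon)}$ for all $t$, from which the domination and the transfer of ergodicity follow. Your version merely spells out the induction over jump times and the first-failure argument that the paper leaves implicit.
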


\begin{proof}

We have from the monotonicity of $l(\cdot)$, $l_{\epsilon}(x,y) \geq l(x,y) = l(||x-y||)$ for each $x ,y \in \mathbf{S}$. Thus, for each $x \in \mathbf{S}$ and each $\phi \in \mathbf{M}(\mathbf{S})$, $I_{\epsilon}(x,\phi) \geq I(x,\phi)$ and subsequently $R_{\epsilon}(x,\phi) \leq R(x,\phi)$ as $R(x,\phi)$ is a decreasing function of $I(x,\phi)$. Therefore the point process $\phi_{t}^{(\epsilon)}$ \emph{stochastically dominates} the point process $\phi_t$. This follows from the fact that for any $\phi \in \mathbf{M}(\mathbf{S})$, we have that the birth rate $\lambda |S|$ is the same for both process, whereas the death-rate of each point of $x \in \phi$ satisfies $\frac{1}{L}R_{\epsilon}(x,\phi) \leq \frac{1}{L} R(x,\phi)$. Also form Equation (\ref{eqn:rate_defn}), if $\phi_1 \subseteq \phi_2$, then for each $x \in \phi_1 \cap \phi_2$, $R(x,\phi_1) \geq R(x,\phi_2)$.  Hence one can construct a coupling of the process $\phi_t$ and $\phi_{t}^{(\epsilon)}$ such that $\phi_t \subseteq \phi_{t}^{(\epsilon)}$ $\forall t$, i.e.  a point is alive in $\phi_t$ only if it is also alive in $\phi_{t}^{(\epsilon)}$. Therefore,  if  $\phi_{t}^{(\epsilon)}$ is ergodic for a given $\lambda$, then $\phi_t$ is also ergodic for that arrival rate  $\lambda$. 
\end{proof}

Define $\mathbf{X}^{(\epsilon)}(t) = \{\phi_{t}^{(\epsilon)}(A_i)\}_{i=1}^{n_{\epsilon}}$  as the $n_{\epsilon}$ dimensional vector taking values in $\mathbb{N}^{n_{\epsilon}}$. It is easy to see that  $\mathbf{X}^{(\epsilon)}(t)$ is a Markov-Chain since the path-loss function $l_{\epsilon}(x,y)$ does not distinguish between two different locations of space inside a cell. It is also evident that if $\mathbf{X}^{(\epsilon)}(t)$ is ergodic, then $\phi_{t}^{(\epsilon)}$ is ergodic since $\lim_{t \rightarrow \infty}\mathbb{P}[ \phi_{t}^{(\epsilon)}(\mathbf{S}) < \infty] = \lim_{t \rightarrow \infty} \mathbb{P} [||\mathbf{X}^{(\epsilon)}(t)||_{1} < \infty] = 1$. The second equality follows from the fact that $\mathbf{X}^{(\epsilon)}(t)$ is a finite-dimensional ergodic Markov chain on $\mathbb{N}^{n_\epsilon}$. Hence, a sufficient condition for stability of $\phi_t$ is a condition for the Markov Chain $\mathbf{X}^{(\epsilon)}(t)$ to be ergodic.
\\

%The steady-state of $\phi_{t}^{(\epsilon)}$ can be sampled by first sampling a vector $\mathbf{Y}$ corresponding to the steady-state of $\mathbf{X}^{(\epsilon)}(t)$ and placing in each grid $A_i$, $Y_i$ points i.i.d. uniformly at random. Hence, $\phi_{t}^{(\epsilon)}$ is ergodic \emph{if and only if} $\mathbf{X}^{(\epsilon)}(t)$ is ergodic. 

We  show in Theorem \ref{thm:XofT} that $\mathbf{X}^{(\epsilon)}(t)$ (and hence $\phi^{(\epsilon)}_t$) is ergodic if 
\begin{equation}
\lambda < \frac{C}{L \ln(2) \int_{x \in \mathbf{S}}l^{\epsilon}(x,0)dx},
\label{eqn:proof3_epsilon_stability_condn1}
\end{equation}
which will actually conclude the proof of Theorem \ref{thm:sufficient_condn}. This can be seen as follows. Since the point process $\phi^{\epsilon}_t$ stochastically dominates $\phi_{t}$, we can optimize the stability region in Equation (\ref{eqn:proof3_epsilon_stability_condn1}) by choosing the best $\epsilon$.  As the function $r \rightarrow l(r)$ is monotone, $l_{\epsilon}(x,0)$ is monotone increasing in $\epsilon$ for each $x \in \mathbf{S}$ and hence we want to have $\epsilon$ as small as possible. Furthermore, the function $r \rightarrow l(r)$ has only a countable set of discontinuity points (as it is bounded non-increasing), we have that as $\epsilon$ goes to $0$, $l_{\epsilon}(x,0)$ converges to $l(x,0)$ for almost-every $x \in \mathbf{S}$. Hence, $\lim_{\epsilon \rightarrow 0}\int_{x \in \mathbf{S}}l_{\epsilon}(||x||)dx = \int_{x \in \mathbf{S}}l(||x||)dx$ from the Monotone Convergence theorem.  Therefore, if  $\mathbf{X}^{(\epsilon)}(t)$ is ergodic under condition in Equation (\ref{eqn:proof3_epsilon_stability_condn1}), then $\phi_t$ will be ergodic under the condition

\begin{align}
 \lambda &< \lim \sup_{\epsilon \rightarrow 0}\frac{C}{L \ln(2) \int_{x \in \mathbf{S}}l^{\epsilon}(x,0)dx} = \frac{C}{L \ln(2)\int_{x \in \mathbf{S}}l(x,0)dx},
 \end{align}
 which will conclude the proof of Theorem \ref{thm:sufficient_condn}.

% Since the function $r \rightarrow l(r)$ is monotone, we want to have $\epsilon$ as small as possible since $l_{\epsilon}(x,0)$ is monotone increasing in $\epsilon$ for each $x \in \mathbf{S}$. Further since, the function $r \rightarrow l(r)$ has only a countable set of discontinuity points (as it is bounded non-increasing), we have that as $\epsilon$ goes to $0$, $l_{\epsilon}(x,0)$ converges to $l(x,0)$ for almost-every $x \in \mathbf{S}$. Hence, $\lim_{\epsilon \rightarrow 0}\int_{x \in \mathbf{S}}l_{\epsilon}(||x||)dx = \int_{x \in \mathbf{S}}l(||x||)dx$ from the Monotone Convergence theorem. Therefore, 

\begin{theorem}
$\mathbf{X}^{(\epsilon)}(t)$ is ergodic under the condition in Equation (\ref{eqn:proof3_epsilon_stability_condn1}).
\label{thm:XofT}
\end{theorem}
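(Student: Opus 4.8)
\emph{Proof sketch.} The plan is to realize $\mathbf{X}^{(\epsilon)}(t)$ as an explicit jump Markov chain on $\mathbb{N}^{n_\epsilon}$ and to prove its positive recurrence through a fluid-limit argument. Births land in cell $i$ at rate $\lambda\epsilon^2$, and the $X_i$ points currently in cell $i$ die at total rate $\frac{X_i}{L}R_\epsilon^{(i)}(\mathbf{X})$, where, using $T=0$ (unit received power) and $l_\epsilon(a_i,a_i)=l(0)=1$,
\[
R_\epsilon^{(i)}(\mathbf{X}) \;=\; C\log_2\!\left(1+\frac{1}{\mathcal{N}_0-1+\sum_{j}l_\epsilon(a_i,a_j)X_j}\right).
\]
Set $s:=\sum_j l_\epsilon(a_i,a_j)$; by the torus symmetry (\ref{eqn:proof3_torus_symmetry}) this does not depend on $i$ and $\epsilon^2 s=\int_{\mathbf{S}}l^\epsilon(x,0)\,dx$, so hypothesis (\ref{eqn:proof3_epsilon_stability_condn1}) is exactly $\delta:=\frac{C}{L\ln(2)\,s}-\lambda\epsilon^2>0$. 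Since $X_i R_\epsilon^{(i)}(\mathbf{X})$ is bounded uniformly in $\mathbf{X}$, the chain is non-explosive, and it is clearly irreducible on $\mathbb{N}^{n_\epsilon}$.

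First I would pass to the fluid scaling $\mathbf{X}^{(N)}(t):=N^{-1}\mathbf{X}^{(\epsilon)}(Nt)$ with $\mathbf{X}^{(N)}(0)\to\xi(0)$. The transition rates being bounded and locally Lipschitz away from the origin, $\mathbf{X}^{(N)}(\cdot)$ converges (along subsequences, uniformly on compacts) to a Lipschitz path $\xi(\cdot)$; computing $\frac{X_i}{L}R_\epsilon^{(i)}(\mathbf{X})\to\frac{C}{L\ln(2)}\cdot\frac{\xi_i}{\sum_j l_\epsilon(a_i,a_j)\xi_j}$ shows that $\xi$ solves
\[
\dot\xi_i(t)=\lambda\epsilon^2-\frac{C}{L\ln(2)}\cdot\frac{\xi_i(t)}{\sum_j l_\epsilon(a_i,a_j)\xi_j(t)}\qquad\text{for a.e.\ }t\text{ with }\xi(t)\neq\mathbf{0},
\]
with $\mathbf{0}$ absorbing for the fluid model (at $\Theta(1)$ occupancies the dynamics is $M/M/\infty$-like, hence $o(N)$ in the scaling). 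By the standard fluid-limit criterion for positive recurrence of Markov chains it then suffices to exhibit a fixed $T$ such that every such $\xi(\cdot)$ with $\|\xi(0)\|_1\le1$ vanishes by time $T$.

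The crux is the choice of Lyapunov function for the fluid model. The obvious candidate $\sum_i\xi_i$ \emph{does not work}: at a configuration with all mass piled into one cell the total death rate saturates at $\frac{C}{L\ln(2)}$, which can be strictly below the total birth rate $\lambda|\mathbf{S}|$, so $\frac{d}{dt}\sum_i\xi_i$ may be positive. Instead I would track the most loaded cell, $V(\xi):=\max_i\xi_i$. At any $t$ with $\xi(t)\neq\mathbf{0}$ and any $m\in\arg\max_i\xi_i(t)$, nonnegativity of the weights together with $\xi_j(t)\le\xi_m(t)$ gives $\sum_j l_\epsilon(a_m,a_j)\xi_j(t)\le\xi_m(t)\sum_j l_\epsilon(a_m,a_j)=s\,\xi_m(t)$, whence $\dot\xi_m(t)\le\lambda\epsilon^2-\frac{C}{L\ln(2)\,s}=-\delta$. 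Since $V$ is the maximum of finitely many differentiable coordinates, $\frac{d^+}{dt}V(\xi(t))\le-\delta$ while $\xi(t)\neq\mathbf{0}$, so $\max_i\xi_i(t)\le\|\xi(0)\|_1-\delta t$ and $\xi(t)=\mathbf{0}$ for all $t\ge T:=1/\delta$. Feeding this into the fluid-limit theorem gives ergodicity of $\mathbf{X}^{(\epsilon)}(t)$ under (\ref{eqn:proof3_epsilon_stability_condn1}), which — as explained before the statement — yields ergodicity of $\phi^{(\epsilon)}_t$ and of $\phi_t$, finishing the proof of Theorem \ref{thm:sufficient_condn}.

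The main obstacle is precisely this Lyapunov step: one has to notice that although the \emph{total} occupancy can drift upward, the \emph{maximal} occupancy cannot, because a most-loaded cell's own self-interference (coefficient $l_\epsilon(a_m,a_m)=1$) together with the other, no-larger occupancies forces the interference it sees to be at most $s$ times its own contribution, so its service rate is at least $\frac{C}{L\ln(2)\,s}$ — and this constant $s$, identified through (\ref{eqn:proof3_torus_symmetry}) with $\int_{\mathbf{S}}l^\epsilon(x,0)\,dx/\epsilon^2$, is exactly the threshold claimed. The remaining points — justifying the fluid convergence and the behaviour near $\mathbf{0}$ (the ODE right-hand side is bounded by $1$, so fluid paths are $1$-Lipschitz, and non-uniqueness of the fluid solution at $\mathbf{0}$ is harmless since the Dini bound holds for every solution) and handling ties in $\arg\max$ — are routine.
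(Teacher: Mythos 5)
Your proposal is correct and takes essentially the same route as the paper: the same discretized jump chain, the same fluid ODE, and the same key idea of using the maximal cell occupancy $\|\cdot\|_\infty$ as Lyapunov function, bounding the interference seen by a most-loaded cell by $\sum_j l_\epsilon(a_i,a_j)$ times its own occupancy via the torus symmetry (\ref{eqn:proof3_torus_symmetry}) to get a uniform negative drift $-\delta$ and emptying by time $1/\delta$. The steps you defer as routine (the fluid-limit convergence and the uniform-integrability needed for the positive-recurrence criterion) are exactly what the paper supplies in Theorem \ref{thm:convergence} and Lemma \ref{lemma:fluid1}.
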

We remark that, even in the general case of $T > 0$, the same theorem statement will hold for a slightly modified version of $\mathbf{X}^{(\epsilon)}(t)$ which we will construct later. Thus, if Theorem \ref{thm:XofT} is established for arbitrary $T$, the proof of the main theorem will be complete by similar reasoning in the previous paragraph.

\begin{proof}

%From the arguments in the preceding section, proving this theorem is sufficient to prove Theorem \ref{thm:sufficient_condn}.  
We can write the following evolution for the vector $\mathbf{X}^{(\epsilon)}(t)$ which we refer to as $\mathbf{X}(t)$ in the sequel for convenience as
%We will drop the superscript $\epsilon$ and refer to $\mathbf{X}^{(\epsilon)}(t)$  as $\mathbf{X}(t)$. The evolution of the Markov Chain $\mathbf{X}(t)$ is  as follows
\begin{align*}
X_i &\rightarrow X_i + 1 \text{  at rate } \lambda \epsilon^2 \\
X_i &\rightarrow X_i - 1 \text{  at rate } \\& X_i \log_2 \left( 1 + \frac{1}{N_0 + \sum_{j=1}^{n_{\epsilon}} (X_j - \mathbf{1}(j=i)) l_{\epsilon}(a_i,a_j)} \right).
\end{align*}
%This seemingly complex dynamics reduces to a tractable differential equation in the fluid scale. We analyze this differential equation to  conclude about the ergodicity of $\mathbf{X}(t)$ through a Lyapunov argument using ideas in \cite{dai1995} and \cite{massoulie2007}. We will show that $\mathbf{X}(t)$ is stable under condition in Equation (\ref{eqn:proof3_epsilon_stability_condn1}).
%\\

We note that generalizing this dynamics to the case when $T > 0$ is slightly different alebit the same principles and we outline it at the end of the proof.
\\

Under condition in Equation (\ref{eqn:proof3_epsilon_stability_condn1}), we show the following drift argument to hold which will conclude the proof.
\begin{theorem} \cite{Roberts_Book} 
Let $\mathbf{X}(t)$ be a Markov Chain taking values in a countable state space $\mathcal{S}$. Assume there exists a function $L : \mathcal{S} \rightarrow \mathbb{R}_{+}$ and constants $A < \infty$ , $\epsilon > 0$ and an integrable stopping time $\hat{\tau} > 0$ such that for all $x \in \mathcal{S}$:
\begin{equation}
L(x) > A \implies \mathbb{E}_{x}L(\mathbf{X}(\hat{\tau})) \leq L(x) - \epsilon \mathbb{E}_{x}(\hat{\tau}).
\label{eqn:proof3_markov_lyapunov_thm}
\end{equation}
If in addition the set $\{x: L(x) \leq A \}$ is finite and \\ $\mathbb{E}_{x}L(\mathbf{X}(1)) < \infty$ for all $x \in \mathcal{S}$, then  $\mathbf{X}(t)$ is ergodic.
\end{theorem}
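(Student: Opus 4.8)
The plan is to derive from the stopping-time drift inequality the two ingredients of the classical positive-recurrence criterion for irreducible countable Markov chains: a finite set that the chain re-enters with uniformly bounded mean time, and finite mean hitting time of that set from every state. Write $C := \{x \in \mathcal{S} : L(x) \le A\}$, which is finite by hypothesis. The first observation is that the drift inequality already controls the mean length of a single ``block'': for $L(x) > A$, since $L \ge 0$,
\[
\epsilon\, \mathbb{E}_x[\hat\tau] \;\le\; L(x) - \mathbb{E}_x[L(\mathbf{X}(\hat\tau))] \;\le\; L(x),
\]
and this bound $\mathbb{E}_y[\hat\tau] \le L(y)/\epsilon$ is what will make the iteration below integrable.

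Next I would build an increasing sequence of stopping times by iterating $\hat\tau$ along the shift: set $\tau_0 = 0$, $\tau_{k+1} = \tau_k + \hat\tau \circ \theta_{\tau_k}$, and $Y_k := \mathbf{X}(\tau_k)$. Since the time index is $\mathbb{N}$ and $\hat\tau > 0$ forces $\hat\tau \ge 1$, we get $\tau_{k+1} - \tau_k \ge 1$, hence $\tau_k \ge k \to \infty$. Let $\sigma := \inf\{k \ge 0 : Y_k \in C\}$. On $\{k < \sigma\}$ we have $L(Y_k) > A$, so the strong Markov property and the hypothesis give $\mathbb{E}[L(Y_{k+1}) \mid \mathcal{F}_{\tau_k}] \le L(Y_k) - \epsilon\, \mathbb{E}[\tau_{k+1}-\tau_k \mid \mathcal{F}_{\tau_k}]$. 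Consequently $Z_k := L(Y_{k\wedge\sigma}) + \epsilon\, \tau_{k\wedge\sigma}$ is a nonnegative supermartingale for $(\mathcal{F}_{\tau_k})_k$; an induction on $k$, using the block bound $\mathbb{E}_y[\hat\tau] \le L(y)/\epsilon$ to keep $\mathbb{E}_x[\tau_{k\wedge\sigma}]$ finite, shows each $Z_k$ is integrable, so $\mathbb{E}_x[Z_k] \le Z_0 = L(x)$.

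Then, discarding the nonnegative term $L(Y_{k\wedge\sigma})$ gives $\epsilon\, \mathbb{E}_x[\tau_{k\wedge\sigma}] \le L(x)$ for all $k$; monotone convergence as $k \to \infty$, together with $\tau_k \to \infty$, yields both $\sigma < \infty$ almost surely and $\mathbb{E}_x[\tau_\sigma] \le L(x)/\epsilon$. Since $Y_\sigma = \mathbf{X}(\tau_\sigma) \in C$, the original chain reaches $C$ by time $\tau_\sigma$, so its hitting time of $C$ has finite mean from every state. For a return from within $C$: starting at $x \in C$, condition on the first step — if $\mathbf{X}(1) \in C$ the return time is $1$, otherwise restart the construction at $\mathbf{X}(1)$ for an extra mean time at most $L(\mathbf{X}(1))/\epsilon$; hence the mean return time to $C$ is at most $1 + \epsilon^{-1}\mathbb{E}_x[L(\mathbf{X}(1))] < \infty$ by the hypothesis $\mathbb{E}_x[L(\mathbf{X}(1))] < \infty$, and this is bounded uniformly over the finite set $C$. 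Finally, an irreducible aperiodic countable chain possessing a finite set with uniformly bounded mean return time is positive recurrent, hence ergodic, by the standard theory of countable Markov chains, which completes the proof.

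The main obstacle is the bookkeeping that makes the iterated drift rigorous: one must iterate the single-block inequality via the strong Markov property and check integrability of $Z_k$ (equivalently of $\tau_{k\wedge\sigma}$) at each finite $k$ before passing to the limit. The point that unlocks this is that the hypothesis secretly bounds the mean block length, $\mathbb{E}_y[\hat\tau] \le L(y)/\epsilon$; with that the integrability follows by a routine induction and the limiting step is painless. A secondary point to state explicitly is that ``ergodic'' here presumes the chain is irreducible and aperiodic, as is the case for the application $\mathbf{X}^{(\epsilon)}(t)$: the drift hypothesis provides positive recurrence, and irreducibility promotes positive recurrence of the finite set $C$ to that of the whole chain.
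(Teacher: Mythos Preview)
The paper does not prove this statement; it is quoted from \cite{Roberts_Book} as a standard Foster--Lyapunov criterion and used as a black box in the proof of Theorem~\ref{thm:XofT}. Your argument---iterate the stopping time via the strong Markov property, obtain the nonnegative supermartingale $Z_k=L(Y_{k\wedge\sigma})+\epsilon\,\tau_{k\wedge\sigma}$, deduce finite mean hitting time of the sublevel set $C$, and then handle returns from $C$ by conditioning on the first unit of time---is the standard proof of such criteria and is correct. One remark: you tacitly treat the chain as discrete-time (``$\hat\tau>0$ forces $\hat\tau\ge1$''), whereas in the paper's application $\mathbf{X}^{(\epsilon)}(t)$ is a continuous-time jump process and $\hat\tau=L(\mathbf{X}(0))\tau$ is a deterministic real time; this is harmless here since on $\{k<\sigma\}$ the increments $\tau_{k+1}-\tau_k=L(Y_k)\tau>A\tau$ are bounded away from zero, so $\tau_k\to\infty$ on $\{\sigma=\infty\}$ still holds. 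Your closing caveat that irreducibility is an implicit hypothesis is apt and does hold for $\mathbf{X}^{(\epsilon)}(t)$.
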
 

We will show that the above theorem is satisfied with the Lyapunov function $L(x) = ||x||_{\infty}$ and 
\begin{equation}
\hat{\tau} = L(\mathbf{X}(0))\left(  \frac{C}{L \ln(2)  \sum_{k=0}^{n_{\epsilon}-1}l_{\epsilon}(a_k,0)} - \lambda \epsilon^2 \right)^{-1} := L(\mathbf{X}(0))\tau,
\label{eqn:proof3_tau_defn}
\end{equation}
 a deterministic finite stopping-time. We will use the notation that $||x||_{\infty} = |x|$ which is also equal to $L(x)$.

To establish the drift condition, we pass to the fluid-limit. A fluid limit of the Markov-Process $\mathbf{X}(t)$ is denoted by $x(t)$ which is a $n_{\epsilon}$ dimensional vector. $x(t)$ is defined as a fluid limit if there exists non-decreasing Lipschitz continuous function $\{D_{i}(t)\}_{i=0}^{n_{\epsilon}-1}$ such that 
\begin{align*}
x_{i}(t) = x_{i}(0) + \lambda \epsilon^2 t - D_{i}(t),
\end{align*} 
where the derivative of $D_{i}(t)$ satisfies  $\dot{D}_{i}(t) = \frac{Cx_{i}(t)}{L \ln(2)\sum_{j=0}^{n_{\epsilon}-1} x_j(t) l_{\epsilon}(a_i,a_j)}$, or equivalently, the fluid limit $x(t)$ satisfies the following set of differential equations. If $||x(t)||_{\infty} > 0$,
\begin{equation}
\frac{d}{dt}x_{i}(t) = \lambda \epsilon^2 - \frac{C x_{i}(t)}{L \ln(2) \sum_{k=0}^{n_{\epsilon}-1} x_{k}(t) l_{\epsilon}(a_i,a_k)}
\label{eqn:proof3_diff_eqn_defn}
\end{equation}
and if $||x(t)||_{\infty} = 0$, 
\begin{equation*}
\frac{d}{dt}x_{i}(t) = 0.
\end{equation*}

For $y \in \mathbb{R}^{n_{\epsilon}}$, denote by $S(y)$  the set of fluid functions $x(t)$ such that $x(0) = y$. The following theorem establishes that the above fluid equation is indeed obtained through an appropriate space and time scaling. It also establishes as a corollary that $S(y)$ is non-empty for any $y \in \mathbb{R}^{n_{\epsilon}}$.
\begin{theorem}
Consider a sequence of deterministic initial conditions    $\{{X}^{(k)}(0)\}_{k \geq 1}$ for the Markov Chain $\mathbf{X}(t)$ and a sequence of positive integers $\{z_k\}_{k \geq 1}$ with $\lim_{k \rightarrow \infty}z_k = \infty$ such that the limit $\lim_{k \rightarrow \infty}z_{k}^{-1}X^{(k)}(0) = x(0)$ exists. Then for all $s>0$ and all $\delta > 0$, the following convergence takes place
\begin{equation}
\lim_{k \rightarrow \infty} \mathbb{P} \left( \inf_{f \in S(x(0))} \sup_{t \in [0,s]} |z_{k}^{-1} \mathbf{X}^{(k)}(z_kt) - x(t)| > \delta \right) = 0. \nonumber
%\label{eqn:proof3_convergence}
\end{equation}
\label{thm:convergence}
%where $\mathbf{X}^{(k)}(u) $ denotes the state of the Markov Chain $\mathbf{X}$ at time $u$, given that it was started at state ${X}^{(k)}(0)$ at time $0$.
\end{theorem}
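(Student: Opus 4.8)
The plan is to prove Theorem~\ref{thm:convergence} by a standard fluid-limit argument (see, e.g., \cite{Roberts_Book}): write the chain in Poisson/time-change form, show that the rescaled trajectories together with their companion departure processes are relatively compact, and identify every subsequential limit as an element of $S(x(0))$. Concretely, with $\{\Pi_i^{+},\Pi_i^{-}\}_{i}$ a fixed family of independent unit-rate Poisson processes (common to all $k$), write
\[
X_i^{(k)}(t)=X_i^{(k)}(0)+\Pi_i^{+}\!\big(\lambda\epsilon^2 t\big)-\Pi_i^{-}\!\Big(\int_0^t\mu_i\big(X^{(k)}(u)\big)\,du\Big),
\]
where $\mu_i(x)=\tfrac{C}{L}\,x_i\log_2\!\big(1+(N_0+\sum_j(x_j-\mathbf 1(j=i))l_\epsilon(a_i,a_j))^{-1}\big)$ is the departure rate of cell $i$, and set $\bar X^{(k)}(t):=z_k^{-1}X^{(k)}(z_kt)$ and $\bar D_i^{(k)}(t):=z_k^{-1}\Pi_i^{-}\!\big(z_k\int_0^t\mu_i(z_k\bar X^{(k)}(u))\,du\big)$, so that $\bar X_i^{(k)}(t)=\bar X_i^{(k)}(0)+z_k^{-1}\Pi_i^{+}(\lambda\epsilon^2 z_kt)-\bar D_i^{(k)}(t)$.

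The first substantive step is a uniform bound on the rescaled rates. Since $l_\epsilon(a_i,a_i)=l(0)=1$, whenever cell $i$ is nonempty we have $N_0+\sum_j(x_j-\mathbf 1(j=i))l_\epsilon(a_i,a_j)\ge N_0+x_i-1$; combining this with the elementary inequality $\log_2(1+u)\le u/\ln 2$ yields $\mu_i(x)\le \tfrac{C x_i}{L\ln 2\,(N_0+x_i-1)}\le \bar\mu_{\max}:=\tfrac{C}{L\ln2\,\min(1,N_0)}$, i.e.\ the logarithmic service rate saturates at the fluid scale. Consequently the argument of each $\Pi_i^{-}$ in $\bar D_i^{(k)}$ grows at rate at most $z_k\bar\mu_{\max}$, and by the functional strong law of large numbers for the Poisson process $z_k^{-1}\Pi_i^{\pm}(z_k\,\cdot\,)\to(\,\cdot\,)$ almost surely, uniformly on $[0,s]$. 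This gives a common asymptotic Lipschitz modulus for $(\bar X^{(k)},\bar D^{(k)})$ on $[0,s]$; together with $\bar X^{(k)}(0)\to x(0)$ and Arzel\`a--Ascoli, the family $\{(\bar X^{(k)},\bar D^{(k)})\}$ is relatively compact in $C([0,s];\mathbb R^{2n_\epsilon})$, so along any subsequence there is a further subsequence along which $(\bar X^{(k)},\bar D^{(k)})\to(\bar X,\bar D)$ uniformly on $[0,s]$, a.s.

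It remains to identify the limit. Passing to the limit in the representation, $\bar X_i(t)=x_i(0)+\lambda\epsilon^2 t-\bar D_i(t)$ with $\bar D_i$ nondecreasing and Lipschitz. On the open set $\{t:\|\bar X(t)\|_\infty>0\}$ I would verify that $\dot{\bar D}_i(t)=\bar\mu_i(\bar X(t)):=\tfrac{C\bar X_i(t)}{L\ln2\sum_j\bar X_j(t)l_\epsilon(a_i,a_j)}$: this rests on the pointwise limit $\mu_i(z_k y^{(k)})\to\bar\mu_i(y)$ whenever $y^{(k)}\to y\ne 0$ -- a direct computation using $N_0>0$ and $\sum_j y_j l_\epsilon(a_i,a_j)\ge y_i l_\epsilon(a_i,a_i)>0$ when $y_i>0$ (and vanishing of the $x_i$ prefactor otherwise) -- which is uniform on compact subsets of $\mathbb R_{\ge0}^{n_\epsilon}\setminus\{0\}$ because there are only $n_\epsilon<\infty$ cells, together with the uniform bound $\mu_i\le\bar\mu_{\max}$ and dominated convergence applied to $\bar D_i^{(k)}(t_2)-\bar D_i^{(k)}(t_1)=\int_{t_1}^{t_2}\mu_i(z_k\bar X^{(k)}(u))\,du+o(1)$. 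On the closed set $\{t:\|\bar X(t)\|_\infty=0\}$, $\bar X_i\equiv0$ forces $\dot{\bar X}_i=0$ a.e., matching the second defining relation of $S(x(0))$. Hence $\bar X\in S(x(0))$, and therefore
\[
\inf_{f\in S(x(0))}\sup_{t\in[0,s]}\big|z_k^{-1}\mathbf X^{(k)}(z_kt)-f(t)\big|\ \le\ \sup_{t\in[0,s]}\big|\bar X^{(k)}(t)-\bar X(t)\big|\ \longrightarrow\ 0
\]
along the subsequence; since every subsequence admits such a sub-subsequence, the left-hand side converges to $0$ in probability, which is the assertion. Non-emptiness of $S(y)$ for every $y$ drops out as a by-product of this construction (run it from $\mathbf X^{(k)}(0)=\lfloor z_k y\rfloor$).

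The main obstacle, relative to a textbook density-dependent chain, is precisely the non-linear scaling of the departure rate: the fluid drift $\bar\mu_i$ is obtained as the \emph{singular} limit of the Shannon logarithm, $\mu_i(z_k y)\to \bar\mu_i(y)$, and one must establish this convergence uniformly and propagate it through the integral equations, paying attention to the boundary of the nonnegative orthant where $\bar\mu_i$ is only $0$-homogeneous -- hence merely continuous (on all the path-loss examples of the paper, where $l>0$ on $\mathbf S$) but not Lipschitz near the origin. This lack of Lipschitzness makes fluid solutions non-unique after they hit $0$, which is exactly why the statement is phrased with $\inf_{f\in S(x(0))}$; since only membership of the subsequential limits in $S(x(0))$ is needed, uniqueness is not required and the argument closes.
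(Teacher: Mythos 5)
Your proposal is correct and follows essentially the same route as the paper: Poisson time-change representation of the chain, a uniform bound on the per-cell departure rates, a law-of-large-numbers estimate for the driving Poisson noise, Arzel\`a--Ascoli compactness of the (Lipschitz) integrated departure processes, and identification of every subsequential limit as an element of $S(x(0))$ via dominated convergence. The differences are cosmetic: you invoke the functional SLLN and the subsequence principle where the paper uses an explicit Poisson concentration lemma plus Borel--Cantelli inside a proof by contradiction, and your handling of the drift's behaviour near the boundary of the orthant is, if anything, slightly more careful than the paper's.
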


This proof is standard and  is postponed later on in the appendix.

From the description of the dynamics, if we have $L(x(t)) = 0$, then $x_{i}(t) = 0$ for all $i$. Since $x(t)$ is a finite-dimensional vector, there exists at-least one coordinate $i^{*}(t)$  such that $x_{i^{*}(t)}(t) = L(x(t))$. Then one can write
\begin{align}
\frac{d}{dt}L(x(t)) &= \lambda \epsilon^{2} - \frac{C L(x(t))}{L \ln(2) \sum_{k=0}^{n_{\epsilon}-1} x_{k}(t) l_{\epsilon}(a_{i^{*}(t)},a_k)} \nonumber \\
& \leq \lambda \epsilon^2 - \frac{C}{L \ln(2) \sum_{k=0}^{n_{\epsilon}-1} l_{\epsilon}(a_k,0)},
\label{eqn:proof3_fluid_inequality}
\end{align}
where the second inequality comes by the fact that $x_{k}(t) \leq L(x(t))$ and the symmetry of the torus as given in Equation (\ref{eqn:proof3_torus_symmetry}). From Equation (\ref{eqn:proof3_fluid_inequality}), we see that under the condition given in (\ref{eqn:proof3_epsilon_stability_condn1}), $L(x(s)) = 0$ for all $s \geq \tau$ whenever $L(x(0)) = 1$, where $\tau = \left( \frac{C}{L \ln(2)  \sum_{k=0}^{n_{\epsilon}-1}l(a_k,0)} -  \lambda \epsilon^2  \right)^{-1}$, a deterministic time as defined in Equation (\ref{eqn:proof3_tau_defn}). 
\\

 We remark that inequality (\ref{eqn:proof3_fluid_inequality}) will be identical even in the case of arbitrary link distance $T$ and hence, the rest of the proof ingredients are the same for both when $T=0$ and $T > 0$.

\begin{lemma}
If condition in Equation (\ref{eqn:proof3_epsilon_stability_condn1}) holds, then
\begin{equation}
\lim_{L(x) \rightarrow \infty} \frac{1}{L(x)}\mathbb{E}_{x}[ |\mathbf{X}(L(x)\tau)|] = 0. 
\label{eqn:proof3_L1_Convergence}
\end{equation}
where $\tau$ is defined in Equation (\ref{eqn:proof3_tau_defn})
\label{lemma:fluid1}
\end{lemma}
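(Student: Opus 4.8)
The plan is to argue by contradiction, combining the fluid-limit convergence of Theorem \ref{thm:convergence} with the a priori differential inequality (\ref{eqn:proof3_fluid_inequality}) and a uniform-integrability estimate that upgrades convergence in probability to convergence in mean. Suppose the statement fails: then there exist $\eta > 0$ and states $x^{(k)}$ with $z_k := L(x^{(k)}) \to \infty$ such that $\frac{1}{z_k}\mathbb{E}_{x^{(k)}}[\,|\mathbf{X}(z_k\tau)|\,] \ge \eta$ for every $k$. I would use $x^{(k)}$ as the deterministic initial state $X^{(k)}(0)$ and $z_k$ as the scaling sequence in Theorem \ref{thm:convergence}. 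Every coordinate of $z_k^{-1}x^{(k)}$ lies in $[0,1]$ and $L(z_k^{-1}x^{(k)}) = 1$, so after passing to a subsequence I may assume $z_k^{-1}x^{(k)} \to x(0)$ with $L(x(0)) = 1$, and then Theorem \ref{thm:convergence} applies with, say, $s = \tau$.

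The crux is that \emph{every} fluid solution $f \in S(x(0))$ is absorbed at the origin by time $\tau$. Indeed, while $L(f(t)) > 0$, inequality (\ref{eqn:proof3_fluid_inequality}) together with the torus symmetry (\ref{eqn:proof3_torus_symmetry}) gives $\frac{d}{dt}L(f(t)) \le \lambda\epsilon^2 - \frac{C}{L\ln(2)\sum_{k=0}^{n_\epsilon-1} l_\epsilon(a_k,0)} = -1/\tau$, hence $L(f(t)) \le 1 - t/\tau$, so $L(f(\tau)) = 0$, and $L(f(t)) = 0$ for all $t \ge \tau$ since the fluid drift vanishes at the origin. Because $f(\tau) = 0$ for all $f \in S(x(0))$, for every sample path one has $|z_k^{-1}\mathbf{X}^{(k)}(z_k\tau)| = |z_k^{-1}\mathbf{X}^{(k)}(z_k\tau) - f(\tau)| \le \inf_{f \in S(x(0))}\sup_{t\in[0,\tau]}|z_k^{-1}\mathbf{X}^{(k)}(z_k t) - f(t)|$, and Theorem \ref{thm:convergence} then yields $z_k^{-1}\mathbf{X}^{(k)}(z_k\tau) \to 0$ in probability.

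It remains to promote this to convergence in $L^1$. Points are only ever created, never destroyed, and births occur (uniformly over all configurations) at total rate $\lambda|\mathbf{S}|$; hence the total count $\mathbf{X}^{(k)}(z_k\tau)(\mathbf{S})$ is stochastically dominated by $\|x^{(k)}\|_1 + N_k$, where $N_k$ is Poisson with mean $\lambda|\mathbf{S}|z_k\tau$, so $|z_k^{-1}\mathbf{X}^{(k)}(z_k\tau)| \le n_\epsilon + z_k^{-1}N_k$ and $\sup_k \mathbb{E}[(z_k^{-1}N_k)^2] < \infty$. Thus $\{|z_k^{-1}\mathbf{X}^{(k)}(z_k\tau)|\}_k$ is bounded in $L^2$, hence uniformly integrable, and convergence in probability gives $\frac{1}{z_k}\mathbb{E}_{x^{(k)}}[\,|\mathbf{X}(z_k\tau)|\,] = \mathbb{E}[\,|z_k^{-1}\mathbf{X}^{(k)}(z_k\tau)|\,] \to 0$, which contradicts the lower bound $\eta$ and proves the lemma. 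I expect the main obstacle to be the second paragraph: because Theorem \ref{thm:convergence} only asserts proximity to \emph{some} fluid solution, the argument genuinely requires that \emph{all} fluid trajectories started from an initial condition of sup-norm one reach, and are then frozen at, the origin by the deterministic time $\tau$ — this is precisely where the a priori inequality (\ref{eqn:proof3_fluid_inequality}) is needed — while the secondary technical point is the Poisson-domination bound securing the uniform integrability.
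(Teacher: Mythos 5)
Your proposal is correct and takes essentially the same route as the paper's own proof: pass to a convergent subsequence of the normalized initial states, use inequality (\ref{eqn:proof3_fluid_inequality}) to show every fluid trajectory in $S(x(0))$ is at the origin by the deterministic time $\tau$, invoke Theorem \ref{thm:convergence} to get convergence in probability of the rescaled state at time $\tau$ to zero, and upgrade this to convergence in mean via uniform integrability obtained from Poisson domination of the arrivals. The only cosmetic differences are your contradiction framing and the $L^2$ bound on the total birth count, where the paper instead bounds the variance of each per-cell scaled arrival process; both yield the required uniform integrability.
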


\begin{proof}
The first observation is that the family of random variables $\left\{ \frac{|\mathbf{X}_{x}(|x|t)|}{|x|}\right\}_{x \in \mathbb{N}^{n_{\epsilon}} \setminus \{0\}}$ is uniformly integrable. Indeed, let $\{A_i(\cdot)\}_{i=0}^{n_{\epsilon}-1}$ be i.i.d. unit rate PPP denoting the arrivals into cell $i$. Then
\begin{align}
 X_{i}(t) \leq X_{i}(0) + A_{i}(\lambda \epsilon^2 t).
 \label{eqn:bounding_stability}
 \end{align}
 Thus, for $\mathbf{X}(0) = x$,
 \begin{align}
 \frac{X_{i}(|x|\tau)}{|x|} \leq \frac{x_{i}}{|x|} + \frac{A_{i}(\lambda \epsilon^2 |x| \tau)}{|x|}.
 \end{align}
 We have that $\frac{x_{i}}{|x|} \leq 1$ and the mean of $\frac{A_{i}(\lambda \epsilon^2 |x| \tau)}{|x|}$ equal to $\lambda \epsilon^2 \tau$. The variance of $\frac{A_{i}(\lambda \epsilon^2 |x| \tau)}{|x|}$ is $\frac{\lambda \epsilon^2 \tau}{|x|} \leq \lambda \epsilon^2 \tau$ for all $x \in \mathbb{N}^{n_{\epsilon}} \setminus \{0\}$. As the variance is uniformly bounded, the random variables $\left\{\frac{X_{i}(|x|\tau)}{|x|}\right\}_{x \in \mathbf{N}^{N_{\epsilon}} \setminus \{0 \}}$ are uniformly integrable. In addition, $\frac{|\mathbf{X}(|x|\tau)|}{|x|} \leq \sum_{i} \frac{X_{i}(|x| \tau)}{|x|}$, gives that  $\left\{ \frac{|\mathbf{X}_{x}(|x|\tau) |}{|x|}\right\}_{x \in \mathbf{N}^{N_{\epsilon}} \setminus \{0\}}$ is uniformly integrable since it is bounded above by a finite sum of random variables belonging to uniformly integrable families. 
 \\

 Let $x_k$ be any sequence of initial conditions such that $|x_k| \rightarrow \infty$. This implies that $a _k = {\mathbf{X}}^{(k)}(0)/|x_k| = x_k/|x_k|$ with $a_k \in [-1,1]^{n_{\epsilon}}$ for all $k$. Since the cube $[-1,1]^{n_{\epsilon}}$ is compact, there is a convergent sub-sequence i.e. $\frac{\mathbf{X}^{k(l)}(0)}{|x_{k(l)}|} \rightarrow x(0)$ with $|x(0)| = 1$. From Theorem \ref{thm:convergence},   there is a further sub-sequence of $k(l)$ such that $\frac{{\mathbf{X}}^{k^{'}(l)}(|x_{k^{'}(l)}|\tau)}{|x_{k^{'}(l)}|} \rightarrow x(\tau)$ almost surely where the function $x(\cdot) \in S(x(0))$. Under the  stability condition (\ref{eqn:proof3_epsilon_stability_condn1}), we have that for any fluid-limit function $x(\cdot) \in S(x(0))$, $x(\tau) = 0$ whenever $|x(0)| \leq 1$. This establishes that given any arbitrary sequence of initial conditions $x_k$ with $|x_k| \rightarrow \infty$, one can find a further sub-sequence $k^{'}(l)$ such that 
\begin{equation}
 \lim_{k^{'}(l) \rightarrow \infty} \frac{1}{|x_{k^{'}(l)}|}|\mathbf{X}^{k^{'}(l)}(|x_{k^{'}(l)}| \tau)| = 0, \text{   } a.s.
\end{equation}
Therefore, we can conclude that for any sequence $x_k$ with $|x_k| \rightarrow \infty$, we have $\frac{1}{|x_k|}|\mathbf{X}^{k}(|x_{k}|\tau)|$ tends to $0$  in probability.  But since, the family of random variables $\left\{ \frac{|\mathbf{X}_{x}(|x|\tau) |}{|x|}\right\}_{x \in \mathbf{N}^{N_{\epsilon}} \setminus \{0\}}$ is uniformly integrable, we have that 
 \begin{equation}
 \lim_{k \rightarrow \infty} \frac{1}{|x_k|}\mathbf{E}[|\mathbf{X}^{k}(|x_k| \tau)|] = 0.
 \end{equation}
 As $x_k$ was an arbitrary sequence, Equation (\ref{eqn:proof3_L1_Convergence}) holds whenever  condition (\ref{eqn:proof3_epsilon_stability_condn1}) holds.
\end{proof}

 From Equation (\ref{eqn:proof3_L1_Convergence}), we have that for any $\epsilon > 0$, there is a large enough $A_{\epsilon}$  such that Equation (\ref{eqn:proof3_markov_lyapunov_thm}) holds. Furthermore, for any finite $A$, the set $\{x \in \mathbf{N}^{N_{\epsilon}}:||x||_{\infty} \leq A\}$ is finite.  Hence, we have that $\mathbf{X}(t)$ is stable under the stability condition (\ref{eqn:proof3_epsilon_stability_condn1}) which proves Theorem \ref{thm:sufficient_condn}.
\\

\subsubsection*{Generalization to arbitrary Link Distance $T$}

To generalize the proof for arbitrary link distances $T$, we need to construct the appropriate discretization of the chain $\phi_{t}^{(\epsilon)}$. Once, we construct an appropriate discrete state space chain, then it is easy to see that the fluid version of this chain will satisfy inequality (\ref{eqn:proof3_fluid_inequality}) and Lemma \ref{lemma:fluid1} as is. This will conclude that the case with arbitrary link distance $T$ also yields the same stability result.
\\

The discrete state space process in this case will  naturally involve two vectors $\{X_i(t)\}_{i = 1}^{n \epsilon}$ and $\{Y_i(t)\}_{i = 1}^{n \epsilon}$, which represent the vector of transmitters and receivers in the discrete grid. However, in addition, we need a list of vectors $\{ \mathcal{M}_i (t)\}_{i = 1}^{n_{\epsilon}}$ where $\mathcal{M}_{i}$ is a $n_{\epsilon}$ dimensional vector whose  $j$th coordinate denotes how many   transmitters in cell $i$ have a corresponding receiver in cell $j$. The triple $\mathbf{X}_{t} := (X_i(t), Y_i(t), \mathcal{M}_i(t))_{i = 1}^{n_{\epsilon}}$ then evolves in a Markovian fashion on a countable state-space. The  evolution of $\mathbf{X}(t)$ is as follows. To each cell $i$, a new receiver is born at rate $\lambda \epsilon^2$. When, a receiver is born in cell $i$, we first pick an uniformly random location in the cell $A_i \subset \mathbf{S}$ and then centered around this point, we draw a ball of radius $T$ and pick the location of the transmitter uniformly on the circumference to decide the cell in which the transmitters land. Thus, at the instant of birth, both a transmitter and receiver is born. Thus, conditioned on the event that a receiver is placed in cell $i$, there is a distribution on the set $\{1,2,\cdots, n_{\epsilon} \}$ from which we sample the cell to place the corresponding transmitter in. To compute the interference seen at any receiver, we sum up the interference power from all transmitters in $\{Y_i\}_{i = 1}^{n_{\epsilon}}$ including the intended signaling transmitter, which forms an upper bound on the interference. On the event of a death of a receiver in cell $i$, we also delete an uniformly random transmitter such that it has a receiver in cell $i$. 
\\

This process $\mathbf{X}(t)$ can be studied using fluid limits as above but with significantly more computations. The fluid equations for this case (which is the analog of Theorem \ref{thm:convergence}) will be as follows.
\begin{align*}
\frac{d}{dt} x_i = \lambda \epsilon^2 - \frac{x_i l(T)}{ \sum_{j = 1}^{n^{\epsilon} } y_i l_{\epsilon}(a_i, a_j)  }
\end{align*}
whenever $x_i > 0$, else $\frac{d}{dt}x_i = 0$. Since, the number of transmitters and receivers are the same at all instants of time, we get the following inequality  immediately
\begin{align*}
\frac{d}{dt} y_i \leq \lambda \epsilon^2 - \frac{y_i l(T)}{ ||y||_{\infty} \sum_{j = 1}^{n^{\epsilon} } l_{\epsilon}(a_i, a_j)  }
\end{align*}
whenever $y_i > 0$. This is an inequality and not an equality  due to the fact that the interference is measured by a transmitter process $||y||_{\infty} \mathbf{1}$ which coordinate wise dominates the original transmitters $y$. Thus, we can see that by employing the  Lyapunov function $L(z)$ for $z = (x,y,\mathcal{M})$ as $L(z) := ||y||_{\infty}$, we will get exactly the same inequality as in Equation (\ref{eqn:proof3_fluid_inequality}). Furthermore, it is easy to check that Lemma \ref{lemma:fluid1} holds as is with $|\mathbf{X}(t)| := ||y||_{\infty}$. This will then establish that Theorem \ref{thm:XofT}
will hold as is for the chain constructed in this paragraph with generalized link distance $T$, which concludes the proof. 

\end{proof}

\section{Proof of Theorem \ref{thm:clustering}}
\label{proof_clustering}
\begin{proof}
The proof idea is to apply  Rate-Conservation equations similar to that of Theorem \ref{thm:necessary_condn}. For any receiver-transmitter pair $(x;y) \in \phi_t$, define $B_t(x) = \sum_{T \in \phi_{t}^{tx} \setminus \{ y\}}f(||T-x||)$ and the cadlag process $\mathcal{B}_t = \sum_{x \in \phi_{t}^{Rx}} B_t(x)$.

 Since we assume that the dynamics $\phi_t$ is ergodic, we write RCL for the stochastic process $\mathcal{B}_t$

\begin{align}
\lambda |S| \mathbb{E} \left[ 2 \int\limits_{x \in \mathbf{S}} B_0(x)\frac{dx}{|\mathbf{S}|}  \right] &= \lambda_d \mathbb{E}\left[ \frac{\mathbf{R}_0}{\mathbb{E}[\mathbf{R}_0]}  \sum_{T_n \in \phi_{0}} \frac{R(T_n, \phi_{0} )}{  \mathbf{R}_0 } 2 B_0(T_n) \right]
\label{eqn:RCL_arb_func1}
\end{align} 
The LHS follows from PASTA and the fact that a birth can happen anywhere in $\mathbf{S}$ uniformly and independently. The RHS follows from the Papangelou's theorem that the point process on $\mathbb{R}$ corresponding to the death epochs admits $\mathbf{R}_t = \frac{1}{L}\sum_{X_n \in \phi_0}R(X_n,\phi_0)$ as its Stochastic Intensity with respect to the  filtration $\mathcal{F}_t = \sigma \left( \{ \phi_s: s \leq t \}\right)$, the sigma algebra generated by the location of the links. We also have $\lambda_d = \lambda|S|$ from Equation (\ref{eqn:proof1_RCL_Number}) and $\mathbb{E}[\mathbf{R}_0] = \lambda |S| $ from Equation (\ref{eqn:proof1_RCL_bits1}). Using this to simplify Equation (\ref{eqn:RCL_arb_func1}), we get
\begin{align}
\mathbb{E}[B_0(0)]  &= \frac{1}{\lambda |S| L} \mathbb{E}\left[\sum_{T_n \in \phi_{0}} R(T_n, \phi_{0} )  B_0(T_n)\right],
\end{align}
where we used Fubini's theorem and the fact that $\phi_0$ is stationary in simplifying the LHS. Using the definition of Palm probability to simplify the RHS, we get
\begin{align}
\mathbb{E}[B_0(0)] &= \frac{\beta |\mathbf{S}|}{\lambda L |\mathbf{S}|} \mathbb{E}^{0}_{\phi_0} \left[ R(0,\phi_0) B_0(0)\right].
\label{eqn:RCL_arb_func_2}
\end{align}
Since both $f(\cdot)$ and the path-loss $l(\cdot)$ are positive non-increasing functions, we have the deterministic behavior that if $B_0(0)$ increases, then $R(0,\phi_0)$ decreases. Hence, we can use the association inequality
\begin{align}
\mathbb{E}^{0}_{\phi_0} \left[ R(0,\phi_0) B_0(0)\right] \leq \mathbb{E}^{0}_{\phi_0} \left[ R(0,\phi_0) \right]  \mathbb{E}^{0}_{\phi_0} \left[B_0(0)\right]
\label{eqn:RCL_arb_func_3}
\end{align}

Employing Inequality (\ref{eqn:RCL_arb_func_3}) in Equation (\ref{eqn:RCL_arb_func_2}), and the RCL $\lambda L = \beta \mathbb{E}^{0}_{\phi_0}[R(0,\phi_0)]$ from equation (\ref{eqn:proof1_RCL_bits2}) we get
\begin{align}
\mathbb{E}[B_0(0)] \leq \mathbb{E}^{0}_{\phi_0}[B_0(0)].
\end{align}
\end{proof}

\section{Proof of Theorem \ref{thm:convergence}}

\begin{proof} 
This can be argued by contradiction. Assume that for some $\epsilon>0$ and a sub-sequence 
\begin{equation}
\mathbb{P} \left( \inf_{f \in S(x(0)} \sup_{t \in [0,T]} |z_{k}^{-1}X(z_kt) - f(t)| > \epsilon \right) \geq \epsilon
\end{equation}
Without loss of generality,  assume  the above holds true for all $k \geq 1$. 
\\

The trajectories of the process $X^{k}(t)$ can be written in terms of independent unit-rate Poisson process $A_{i}^{k}$ and $D_{i}^{k}$ 
\begin{multline}
X_{i}^{k}(t) = X_{i}^{k}(0) + A_{i}^{k}(\lambda \epsilon^2 t) -  D_{i}^{k}\left( \int_{0}^{t} X_{i}^{k}(u) \log_2 \left( 1 + \frac{1}{N_0 + I_{i}^{\epsilon}(t)} du \right)  \right).
\label{eqn:functional_trajectory}
\end{multline} 
%This is a deterministic equation in $\mathbf{X}(t)$ conditioned on the realization of the point process. 
%\\

That is, $X^{k}(t)$ is a functional of the Point Process satisfying the set of Equations (\ref{eqn:functional_trajectory}).
\\

One can rewrite equation (\ref{eqn:functional_trajectory})  by a change of variables as 
\begin{multline}
\frac{1}{z_k}X^{k}_{i}(z_kt) = \frac{1}{z_k}X^{k}_{i}(0) + \frac{1}{z_k} A^{k}_{i}(\lambda \epsilon^2 z_kt) -  \frac{1}{z_k} D_{i}^{k}\left( \int_{0}^{z_kt} X_{i}^{k}(u) \log_2 \left( 1 + \frac{1}{N_0 + I_{i}^{\epsilon}(t)} \right) du  \right).
\end{multline}

Now replacing $u$ by $z_kl$, we get the following
\begin{multline}
\frac{1}{z_k}X^{k}_{i}(z_kt) = \frac{1}{z_k}X^{k}_{i}(0) + \frac{1}{z_k} A^{k}_{i}(\lambda \epsilon^2 z_kt) -  \frac{1}{z_k} D_{i}^{k}\left( z_k \int_{0}^{t} X_{i}^{k}(z_kl) \log_2 \left( 1 + \frac{1}{N_0 + I_{i}^{\epsilon}(z_kl)} \right) dl  \right),
\end{multline}
which can be written as 
\begin{multline}
\frac{1}{z_k}X^{k}_{i}(z_kt) = \frac{1}{z_k}X^{k}_{i}(0) + \lambda \epsilon^2 t -    \int_{0}^{t} X_{i}^{k}(z_kl) \log_2 \left( 1 + \frac{1}{N_0 + I_{i}^{\epsilon}(z_kl)} \right) dl + \delta_{i}^{k}(t),
\label{eqn:fluid_limit_error}
\end{multline}

where the error term $\delta_{i}^{k}(t)$  satisfies the stochastic bound 
\begin{multline}
\sup_{t \in [0,T]}|\delta_{i}^{k}(t)| \leq \frac{1}{z_k} \sup_{t \in [0,\lambda \epsilon^2 T]} |A_{i}^{k}(z_kt) - z_kt|  +   \frac{1}{z_k} \sup_{t \in [0,T \log_2(e)]} |D_{i}^{k}(z_kt) - z_kt|.
\end{multline}
 The error term $\delta_{i}^{k}(t)$ is bounded by the following lemma.
\begin{lemma} \cite{massoulie2007}
Let $\Xi$ be a unit rate PPP on the real line. Then for all $T > 0$ and all $\lambda > 0$,
\begin{equation}
\mathbb{P} \left( \sup_{t \in [0,T]} | \Xi(t) - t| \geq \lambda T \right) \leq e^{-Th(\lambda)} + e^{-Th(-\lambda)}
\end{equation}
where $h(\lambda) = (1 + \lambda) \log (1 + \lambda) - \lambda$.
\end{lemma}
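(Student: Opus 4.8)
The plan is to prove this as a two-sided maximal deviation bound for the Poisson process, using the exponential martingale of $\Xi$ together with Doob's maximal inequality and then optimizing over an exponential tilt.

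First I would fix $\theta\in\mathbb{R}$ and recall that
\[
M^{\theta}_t \;:=\; \exp\!\bigl(\theta\,\Xi(t) - t(e^{\theta}-1)\bigr)
\]
is a non-negative, right-continuous martingale with $\mathbb{E}[M^{\theta}_t]=1$ for every $t\ge 0$. The identity that drives everything is
\[
\exp\!\bigl(\theta(\Xi(t)-t)\bigr) \;=\; M^{\theta}_t\,\exp\!\bigl(t(e^{\theta}-1-\theta)\bigr),
\]
and since $e^{\theta}-1-\theta\ge 0$ for all $\theta$, the correction factor $\exp(t(e^{\theta}-1-\theta))$ is non-decreasing in $t$ and hence bounded on $[0,T]$ by $\exp(T(e^{\theta}-1-\theta))$. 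This uniformization over $[0,T]$ is exactly what turns a single Chernoff-type estimate into a maximal inequality, and it is the one step I would handle with care.

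For the upper tail I would take $\theta>0$. Then $\{\sup_{t\in[0,T]}(\Xi(t)-t)\ge\lambda T\}$ coincides with $\{\sup_{t\in[0,T]}\exp(\theta(\Xi(t)-t))\ge e^{\theta\lambda T}\}$, which by the bound above is contained in $\{\sup_{t\in[0,T]}M^{\theta}_t\ge \exp(\theta\lambda T - T(e^{\theta}-1-\theta))\}$. Doob's maximal inequality for the non-negative martingale $M^{\theta}$ gives $\mathbb{P}(\sup_{t\le T}M^{\theta}_t\ge c)\le \mathbb{E}[M^{\theta}_T]/c = 1/c$, whence
\[
\mathbb{P}\!\left(\sup_{t\in[0,T]}(\Xi(t)-t)\ge\lambda T\right)\;\le\;\exp\!\bigl(-T\bigl[\theta(1+\lambda)-(e^{\theta}-1)\bigr]\bigr).
\]
The exponent is maximized at $\theta^{\star}=\log(1+\lambda)$, where it equals $(1+\lambda)\log(1+\lambda)-\lambda = h(\lambda)$, producing the first term. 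For the lower tail I would run the identical argument with $\theta<0$, i.e.\ with $\mu=-\theta>0$: now $\exp(\theta(\Xi(t)-t))$ controls $\sup_t(t-\Xi(t))$, the correction factor $\exp(t(e^{-\mu}-1+\mu))$ is again non-decreasing in $t$, and optimizing over $\mu>0$ gives $\mu^{\star}=-\log(1-\lambda)$ (valid for $\lambda<1$) and the bound $e^{-Th(-\lambda)}$; when $\lambda\ge 1$ the event $\{\sup_t(t-\Xi(t))\ge\lambda T\}$ is empty because $\Xi\ge 0$ forces $t-\Xi(t)\le t\le T<\lambda T$, consistent with the convention $h(-\lambda)=+\infty$ (and $h(-1)=1$ with $0\log 0:=0$). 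A union bound over the two one-sided events then gives
\[
\mathbb{P}\!\left(\sup_{t\in[0,T]}|\Xi(t)-t|\ge\lambda T\right)\;\le\;e^{-Th(\lambda)}+e^{-Th(-\lambda)}.
\]

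I do not expect a genuine obstacle here: once the correction factor is uniformized, the remaining pieces — checking that $M^{\theta}$ is a martingale, applying Doob, and verifying that the Legendre transform of $\theta\mapsto e^{\theta}-1$ is $h$ — are routine. The only bookkeeping is to keep the two signs of $\theta$ straight and to note that the lower-tail optimization requires $\lambda<1$, the complementary range being handled trivially.
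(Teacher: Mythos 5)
The paper does not prove this lemma at all: it is imported verbatim from the cited reference and used as a black box in the proof of Theorem \ref{thm:convergence}, so there is no in-paper argument to compare yours against. Your proof is correct and is the standard one for this bound: the identity $\exp(\theta(\Xi(t)-t)) = M^{\theta}_t\exp(t(e^{\theta}-1-\theta))$ with $e^{\theta}-1-\theta\ge 0$ correctly reduces the running maximum of the centered process to the running maximum of the exponential martingale, Doob's maximal inequality for the right-continuous non-negative martingale $M^{\theta}$ gives the $1/c$ bound, and your optimizations $\theta^{\star}=\log(1+\lambda)$ and $\mu^{\star}=-\log(1-\lambda)$ do yield the exponents $h(\lambda)$ and $h(-\lambda)=(1-\lambda)\log(1-\lambda)+\lambda$, after which the union bound finishes the claim. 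The only blemish is the edge case $\lambda=1$: the lower-tail event $\{\sup_{t\in[0,T]}(t-\Xi(t))\ge T\}$ is not empty, since it equals $\{\Xi(T)=0\}$ and has probability $e^{-T}$; this is harmless because $e^{-Th(-1)}=e^{-T}$ covers it exactly, but you should phrase that case as ``the bound holds with equality'' rather than ``the event is empty,'' reserving the emptiness argument for $\lambda>1$ where $h(-\lambda)=+\infty$ by convention.
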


This lemma in particular implies that there exists a sub-sequence $k(l),l \geq 1$ and a sequence $\epsilon(l) \rightarrow0$ such that $\forall i$
\begin{align*}
\sum_{l \geq 1} \mathbb{P} \left( \sup_{t \in [0,T]}| \delta_{i}^{k(l)}(t)| \geq \epsilon(l) \right) < \infty
\end{align*}
By Borel-Cantelli's lemma, there exists a sub sequence  such that for all $i$, $\lim_{l \rightarrow \infty}\sup_{t \in [0,T]} |\delta_{i}^{k(l)}(t)| \rightarrow 0$  almost surely. 
\\

Now consider the random function $w_k(t) = \int_{0}^{t} X_{ij}^{k}(z_kl) \log_2 \left( 1 + \frac{1}{N_0 + I_{i}^{\epsilon}(z_kl)} \right) dl$ which is Lipschitz for each sample path $\omega$, i.e.

\begin{align}
w_k(t) - w_k(s) &= \int_{s}^{t} X_{ij}^{k}(z_kl) \log_2 \left( 1 + \frac{1}{N_0 + I_{i}^{\epsilon}(z_kl)} \right) dl \\
& \leq (t-s)\frac{\log_2(e)}{\sup_{x,y \in \mathbf{S}} l^{\epsilon}(x,y)} < \infty.
\end{align} 

 From the Arzela-Ascoli theorem,  there exists a sub-sequence such that $w_k(t)$ converges uniformly on $[0,T]$ to a Lipschitz continuous function $D_{i}(t)$ for each sample path $\omega$. This along with the bound on $\sup_{t \in [0,T]}|\delta_{ij}(t)|$ yields that there is a sub-sequence such that
\begin{align}
\frac{X_{i}^{k}(z_kt)}{z_k} \rightarrow x_{ij}(t) := x_{i}(0) + \lambda_{i}t - D_{i}(t) , \text{  } a.s.,
\end{align}
where the convergence happens uniformly over $[0,T]$. $D_{i}(t)$ is Lipschitz since $x_{i}(t)$ is Lipschitz continuous. It remains to show that $\dot{D}_{i}(t) = \frac{x_{i}(t)}{I_{i}(t)}$. Since $D_{i}(t)$ is Lipschitz continuous, by Rademacher's theorem, it is differentiable almost everywhere on $[0,T]$. For all $h>0$,
\begin{equation}
\int_{t}^{t+h} X_{i}^{k}(z_kl) \log_2 \left( 1 + \frac{1}{N_0 + I_{i}^{\epsilon}(z_kl)} \right) dl \rightarrow \int_{t}^{t+h} \frac{x_{i}(l)}{I_{i}^{\epsilon,f}(l)}dl. \nonumber
\end{equation}
This follows from dominated convergence and the Lipschitz continuity of  $l \rightarrow x_{ij}(l)$. Therefore $\dot{D}_{i}(t) = \frac{x_{i}(t)}{I_{i}^{\epsilon,f}(t)}$.

Hence, we have shown that given any sequence of initial conditions $X^{k}(0)$ and number $z_k$ such that the limit $\frac{X^{k}(0)}{z_k} = x(0)$ exists, we can find a sub-sequence $k_l$ such that $\frac{X^{k_l}(z_{k_l}t)}{z_{k_l}}$ converges almost surely to the Lipschitz continuous fluid limit function $x(t)$. This is a contradiction and hence the theorem is proved.
\end{proof}

\section{Proof Sketch for Corollary \ref{cor:mimo_result}}
\label{sec:mimo_results}
We just provide the proof outline for the necessary condition. The sufficient condition follows identically as in proof of Theorem \ref{thm:sufficient_condn}. For the necessary condition, note that all Rate-Conservation Equations  (Equations \ref{eqn:proof1_RCL_Number} , \ref{eqn:proof1_RCL_bits1} and \ref{eqn:proof1_RCL_interference4}) hold. In particular, we only have a different upper bound for $R(x;\phi_0)I(x;\phi_0)$ since the rate-function used is a different one. From Equation (\ref{eqn:mimo_rate_defn_indep}), we have 
\begin{align}
\lambda L \int_{x \in \mathbf{S}}l(|x|)dx &\leq \lim_{q \rightarrow \infty} \mathbb{E}_{H} \left[  \sum_{i=1}^{X_r}  q\log_2 \left(  1 + \frac{1}{X_t (N_0 + 1q)}  \sigma_i (H H^{\dag})  \right)    \right] \nonumber \\
& \stackrel{(a)}{=} \mathbb{E}_{H} \left[  \sum_{i=1}^{X_r}  \lim_{q \rightarrow \infty} q  \log_2 \left(  1 + \frac{1}{X_t (N_0 + q)}  \sigma_i (H H^{\dag})  \right)     \right] \nonumber \\
&= \frac{\log_2(e)}{X_t} \mathbb{E}_{H} \left[\sum_{i=1}^{X_r} \sigma_i(H H^{\dag}) \right] \nonumber \\
& \stackrel{(b)}{=} \log_2(e) X_r ,\label{eqn:mimo_proof_penultimate}
\end{align}
where $(a)$ follows from the Monotone Convergence theorem and $(b)$ follows from the fact that $H$ is a matrix whose entries are i.i.d. complex-normal random variables with zero mean and unit variance. Re-arranging inequality (\ref{eqn:mimo_proof_penultimate}) yields the necessary condition on the stability region for the MIMO channel model with independent channels.

%\balancecolumns

\end{document}